\documentclass[a4paper,11pt,reqno]{amsart}

\usepackage{hyperref}       
\usepackage{url}            
\usepackage{amsthm}
\usepackage{amsfonts}
\usepackage{amsmath}
\usepackage{amssymb}
\usepackage{color}
\usepackage{multirow}
\usepackage{float}
\usepackage[noadjust]{cite}
\usepackage{mathtools}
\usepackage{blkarray}
\usepackage{booktabs}
\usepackage{hhline}
\usepackage{enumitem}
\usepackage{float}
\usepackage{longtable}
\usepackage[tableposition=below]{caption}
\captionsetup[longtable]{skip=1em}

\usepackage[margin=2.7cm]{geometry}

\theoremstyle{definition}

\newtheorem{theorem}{Theorem}[section]
\newtheorem{corollary}[theorem]{Corollary}
\newtheorem{lemma}[theorem]{Lemma}
\newtheorem{definition}[theorem]{Definition}
\newtheorem{proposition}[theorem]{Proposition}
\newtheorem{remark}[theorem]{Remark}
\newtheorem{notation}[theorem]{Notation}

\newtheorem{example}[theorem]{Example}

\usepackage{array}
\newcolumntype{P}[1]{>{\centering\arraybackslash}p{#1}}
\usepackage{multirow, boldline,makecell}
\usepackage{enumitem}
\usepackage{MnSymbol}
\usepackage[symbol]{footmisc}

\def\C{\mathcal{C}}
\def\D{\mathcal{D}}
\def\A{\mathcal{A}}

\def\G{\mathcal{G}}
\def\E{\mathcal{E}}
\def\H{\textup{H}}
\def\mS{\mathcal{S}}

\def\ss{\textup{ss}}
\def\GL{\textup{GL}}
\def\supp{\textup{supp}}
\def\Fqnm{\mathbb{F}_q^{n\times m}}
\def\Fqknm{\mathbb{F}_q^{k\times n\times m}}

\def\Fqm{\mathbb{F}_{q^m}}

\def\Fq{\mathbb{F}_q}
\def\Tr{\textup{Tr}}
\def\dimq{\dim_{\Fq}}
\def\dimqm{\dim_{\Fqm}}

\newcommand{\Z}{\mathbb{Z}}

\newcommand{\K}{\mathbb{K}}
\newcommand{\Knn}{\mathbb{K}^{n\times n}}
\newcommand{\Knm}{\mathbb{K}^{n\times m}}
\newcommand{\Kmm}{\mathbb{K}^{m\times m}}
\newcommand{\Kknm}{\K^{k\times n\times m}}
\newcommand{\Kkmm}{\K^{k\times m\times m}}
\def\L{\mathcal{L}_{\Fqm/\Fq}}
\def\F{\mathbb{F}}

\def\<{\left<}
\def\>{\right>}
\def\diag{\textup{diag}}
\def\rk{\textup{rk}}
\def\trk{\textup{trk}}
\def\maxtrk{\textup{maxtrk}}
\def\ch{\textup{char}}

\newcommand\bigzero{\makebox(0,0){\text{\huge0}}} 
\newcommand\Tstrut[1]{\rule{0pt}{#1}}  
\newcommand\Bstrut[1]{\rule[#1]{0pt}{0pt}}

\newcommand*{\temp}[1]{\multicolumn{1}{c|}{#1}}
\newcommand*{\tempp}[1]{\multicolumn{1}{|c|}{#1}}
\newcommand*{\temppp}[1]{\multicolumn{1}{|c}{#1}}

\usepackage[dvipsnames,x11names,table]{xcolor}
\newcommand\cblue{\cellcolor{DodgerBlue1!20}} 
\newcommand\cor{\cellcolor{Orange1!20}}
\newcommand\cgr{\cellcolor{OliveDrab2!40}}

\newcolumntype{C}[1]{>{\centering\arraybackslash$}p{#1}<{$}}
\newcolumntype{L}[1]{>{\arraybackslash$}p{#1}<{$}}
\newcommand\scalemath[2]{\scalebox{#1}{\mbox{\ensuremath{\displaystyle #2}}}}

\setlength\parindent{0pt}

\title{\textbf{Bilinear Complexity of 3-Tensors Linked to Coding Theory}}
\author[Eimear Byrne]{Eimear Byrne}
\address{School of Mathematics and Statistics, University College Dublin, Belfield, Ireland}
\curraddr{}
\email{ebyrne@ucd.ie}
\thanks{}

\author[Giuseppe Cotardo]{Giuseppe Cotardo$^*$}
\address{School of Mathematics and Statistics, University College Dublin, Belfield, Ireland}
\curraddr{}
\email{giuseppe.cotardo@ucdconnect.ie}
\thanks{$^*$The author was supported by the Irish Research Council, grant n. GOIPG/2018/2534.}

\begin{document}

\begin{abstract}
   A well studied problem in algebraic complexity theory is the determination of the complexity of problems relying on evaluations of bilinear maps. One measure of the complexity of a bilinear map (or 3-tensor) is the optimal number of non-scalar multiplications required to evaluate it. This quantity is also described as its tensor rank, which is the smallest number of rank one matrices
   whose span contains its first slice space.
   In this paper we derive upper bounds on the tensor ranks of certain classes of $3$-tensors and give explicit constructions of sets of rank one matrices containing their first slice spaces.
   We also show how these results can be applied in coding theory to derive upper bounds on the tensor rank of some rank-metric codes. In particular, we compute the tensor rank of some families of $\Fqm$-linear codes and we show that they are extremal with respect to Kruskal's tensor rank bound.
\end{abstract}

\maketitle	

\noindent {\bf Keywords.} Bilinear complexity, 3-tensor, tensor rank, perfect space, perfect base, rank-metric code, minimal tensor rank, MTR.  

\smallskip
\noindent {\bf MSC2020.} 94B05, 03D15, 15A99

\section{Introduction}

The bilinear complexity of a given problem is defined as the minimum number of non-scalar multiplications required in any non-commutative algorithm solving the problem.
Each bilinear map is a 3-tensor, and its bilinear complexity is known to be its tensor rank. Computing the bilinear complexity of 3-tensors is NP-complete over any finite field and over the rationals is NP-hard \cite{halstad}. Much research has hence been focused on obtaining bounds on the tensor rank. 

Estimating the tensor rank of special bilinear maps, such as matrix multiplication, or multiplication in an algebra is a well studied problem that goes back some decades and continues to be an active area of research \cite{Alderstrassen, ballet+,brockett1978optimal, burgisser2013algebraic, kruskal1977three,sheekeylavrauw}.

The image of a 3-tensor is a linear space of matrices; especially in the case of the bilinear map being defined over a finite field, this image is a {\em matrix code}, often studied with respect to the {\em rank metric}. A lower bound on the tensor rank given by Kruskal 
in \cite{kruskal1977three} fits well with the standard parameters of a linear matrix code and leads to the concept of a {\em minimal tensor rank} (MTR) code, which is one that meets this bound with equality. 
Moreover, in the case of a matrix code, having low tensor rank is of interest since such codes perform well in terms of storage and encoding complexity. 
Every MTR matrix code implies the existence of a {\em maximum distance separable} (MDS) block code, while the converse is not yet answered. This problem has been considered in \cite{byrne2019tensor}, wherein classes of MTR codes were constructed.  In particular, it was noted there that
a class of one-dimensional Delsarte-Gabidulin codes comprises MTR codes and this was used to obtain an upper bound on the tensor rank of a class of $k$-dimensional Delsarte-Gabidulin codes. 

One characterization of the tensor rank of a linear space of matrices (a matrix code) is the minimum dimension of a {\em perfect space} that contains it. We call a space {\em perfect} if it has a basis of rank-one matrices. Then besides knowing the tensor rank of a matrix code, an interesting and challenging problem is the explicit construction of the basis of rank-one matrices (a {\em perfect base}) whose span contains it.
Even in the case of the 1-dimensional Delsarte-Gabidulin codes that are known to be MTR, perfect bases for this class of codes are not known in  general, although the question has been answered for $4 \times 4$ matrices over the finite fields of orders $2$ and $3$, as part of a classification of semi-fields \cite{sheekeylavrauw}.

In the case of {\em matrix pencils} (that is, linear spaces of matrices of dimension 2) perfect bases have been constructed \cite{jaja1979optimal}. In \cite{atkinson1983ranks}, the orthogonal complements of such spaces have been considered, however the techniques used by the authors do not extend beyond matrix pencils. We follow this line of research for a more general class of 3-tensors in $\K^k \otimes \K^n \otimes \K^m$ over an arbitrary field $\K$.
In our main result, we give an infinite family of 3-tensors that are perfect and construct perfect bases for this family. As an application of this result, we show that a class of the $(m-1)$-dimensional Delsarte-Gabidulin codes (which are duals of 1-dimensional Delsarte-Gabidulin codes) are MTR and construct perfect bases for these codes.

This paper is organised as follows. In Section 2 we introduce some preliminary notions. In Section 3 we construct perfect bases of a special class of 3-tensors, extending results of \cite{jaja1979optimal}. In Section 4 we consider the tensor rank of the orthogonal complements of the matrix spaces considered in Section 3. In Section 5 we specialize to matrix codes defined over finite fields, and especially the {\em vector rank-metric codes}. As a consequence of the results of Section 4, we show that certain $(m-1)$-dimensional Delsarte-Gabidulin codes meet Kruskal's bound with equality and we construct perfect bases for such codes.

\section{Preliminaries and Notation}

We fix some notation that will be used throughout the paper. 
For any integer $\ell$, we write $[\ell]$ to denote the set $\{1,\ldots,\ell\}.$
$\K$ denotes an arbitrary field, $q$ denotes a fixed prime power, and $\Fq$ denotes finite field of cardinality $q$. We write $\K^\times$ to denote the group of units of the field $\K$.
We let $n,m$ denote positive integers satisfying $2\leq n\leq m$. We denote by $\Fqnm$ the space of $n\times m$ matrices with entries in $\Fq$ and we let $E_{i,j}$ be the matrix in $\K^{n\times m}$ with a $1$ in position $(i,j)$ and $0$ elsewhere. Finally, we write $M$ to denote a matrix of the following form:	
	\begin{equation*}
		M:=\left(
		\begin{array}{c|c}
		\renewcommand{\arraystretch}{1}
			0 & I \\\hline
			 a_1 & a_{2} \;\;\cdots\;\; a_{m}
		\end{array}
		\right)\in\Kmm.
	\end{equation*}
That is, $M$ is the companion matrix of a polynomial $x^m-a_m x^{m-1}-\cdots a_2 x - a_1 \in \K[x]$.

We will study families of $3$-tensors in $\K^k\otimes\K^n\otimes\K^m$, with $1\leq k\leq nm$. 
We recall the main definitions and results in the following, and we refer the reader to \cite{burgisser2013algebraic} for further details. It is known that if $\{u_1,\ldots,u_k\}$, $\{v_1,\ldots,v_n\}$ and $\{w_1,\ldots,w_m\}$ are bases of $\K^k$, $\K^n$ and $\K^m$ respectively, then a basis for $\K^k\otimes\K^n\otimes\K^m$ is
	\begin{equation*}
		\{u_l\otimes v_i\otimes w_j: 1\leq \ell\leq k, 1\leq i\leq n, 1\leq j\leq m \}.
	\end{equation*}
	
In particular, we have $\dim(\K^k\otimes\K^n\otimes\K^m)=knm$. A $3$-tensor $X\in\K^k\otimes\K^n\otimes\K^m$ can be represented as a $3$-dimensional array, that is as a map 
	\begin{equation*}
		X:[k]\times [n] \times [m] \longrightarrow \K
	\end{equation*}
	given by $X=(X_{\ell ij}: 1\leq \ell\leq k, 1\leq i\leq n, 1\leq j\leq m)$. Therefore, the tensor 
	$\displaystyle X=\sum_{r=1}^R u_r\otimes v_r\otimes w_r$ is equivalently represented by the array
	\begin{equation*}
		X_{\ell ij}=\sum_{r=1}^R u_{\ell r}v_{ir}w_{jr}
	\end{equation*}
	where $u_r:=(u_{\ell r}: 1\leq \ell\leq k)$, $v_r:=(v_{ir}: 1\leq i\leq n)$ and $w_r:=(w_{jr}: 1\leq j\leq k)$. This representation of $X$ is called its \textbf{coordinate tensor}. We hence identify $\K^k\otimes\K^n\otimes\K^m$ with the space $\K^{k\times n\times m}$. 
	The coordinate tensor of $X$ is also represented as an array of matrices $X=(X_1\mid \ldots \mid X_k)$, $X_\ell= (X_{\ell ij}) \in \K^{n \times m}$.
		
\begin{definition}
	Let $X=(X_1\mid\cdots\mid X_k)\in\Kknm$. The \textbf{first slice space} of $X$ is denoted by $\ss_1(X)$ and is defined to be the span of $X_1,\ldots,X_k$ over $\K$. We say that $\ss_1(X)$ is \textbf{nondegenerate} (or alternatively that $X$ is $1$-\textbf{nondegenerate}) if $\dim(\ss_1(X))=k$.
\end{definition}

\begin{definition}
	Let $X\in \K^{k\times n\times m}$. We say that $X$ is a \textbf{simple} (or \textbf{rank}-1) \textbf{tensor} if there exist $u\in\K^k$, $v\in\K^n$ and $w\in\K^m$ such that $X=u\otimes v\otimes w$. 
\end{definition}
	Let $\displaystyle X = \sum_{r=1}^{R} u_r \otimes v_r \otimes w_r$. Consider map
\begin{align*}
	\mu:\K^k \times \Kknm \longrightarrow \Kknm : (a,X)&\longmapsto \mu(a,X)=\sum_{r=1}^R(a\cdot u_r)\otimes v_r\otimes w_r.
\end{align*}
This map yields a tensor of the form $\displaystyle \sum_{r=1}^R \lambda_r\otimes v_r\otimes w_r$ with $\lambda_r\in\K$, which can be identified with the matrix $\displaystyle \sum_{r=1}^R (\lambda_r v_r)\otimes w_r$ since $\K\otimes \K^n$ and $\K^n$ are isomorphic. 
The image of $\mu$ is $\ss_1(X)$ and in particular
\[ \ss_1(X) = \langle \mu(e_1,X),\ldots,\mu(e_k,X)\rangle_{\K} \]
where $e_s$ is the $s$-th element of the canonical basis for $\K^k$ for $1\leq s\leq k$.

\begin{definition}
	Let $V$ be a $k$-dimensional subspace of $\K^{n\times m}$ and let $X \in \K^{k \times n \times m}$ we say that $X$ is a {\bf generator tensor} of $V$ if $V=\ss_1(X)$.   
\end{definition}
	
The tensor rank of a 3-tensor, or its bilinear complexity, is defined as follows. 
	
\begin{definition}
	Let $X\in \Kknm$. We define the \textbf{tensor rank} of $X$, which we denote by $\trk(X)$, to be the least integer $R$ such that $X$ can be expressed as sum of $R$ simple tensors, i.e.
	\begin{equation*}
		\trk(X):=\min\left\{R\in \Z\;|\;\sum_{r=1}^R u_r\otimes v_r\otimes w_r \textup{ for some } u_r\in\K^k, v_r\in\K^n,w_r\in\K^m \right\}.
	\end{equation*} 
\end{definition}

In fact, tensor rank can be equivalently defined in respect of the minimum number of rank-one matrices whose span contains first slice space of a tensor. This prompts the following definitions.
	
\begin{definition}[{{\cite{atkinson1983ranks}}}]
	A $\K$-vector space of $n\times m$ matrices is called \textbf{perfect} if it is generated by a set of rank-$1$ matrices. 
\end{definition}	

\begin{definition}
    Let $X\in\Kknm$ and let $\A \subseteq\Knm$ be a set of linearly independent rank-$1$ matrices. We say that $\A$ is a \textbf{perfect base} of $X$ if the $\K$-span of $\A$ contains $\ss_1(X)$.
    If furthermore $\A$ has cardinality $R$, we say that $\A$ is an $R$-\textbf{base} of $X$. 
\end{definition}

The next result gives a characterization of the tensor rank (see \cite[Proposition~14.45]{burgisser2013algebraic} and \cite[Proposition~3.4]{byrne2019tensor}), from which it is immediate that the tensor rank of a tensor $X$ is the least dimension of any perfect space that contains $X$. 
	
\begin{lemma}\label{lem:bnrs19}
	Let $X\in\K^{k\times n\times m}$ and $R$ be a positive integer. The following are equivalent.
	\begin{enumerate}[leftmargin=5.5ex]
		\item $\trk(X)\leq R$.
		\item There exists an $R$-base for $X$.
		\item There exist diagonal matrices $D_1,\ldots,D_k\in\K^{R\times R}$ and matrices $P\in\K^{n\times R}$, $Q\in\K^{m\times R}$ such that $\ss_1(X)=P\<D_1,\ldots,D_k\> Q^T=\<PD_1Q^T,\ldots,PD_kQ^T\>$.
	\end{enumerate}
\end{lemma}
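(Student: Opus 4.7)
The plan is to prove the equivalence via the cyclic chain $(1)\Rightarrow(2)\Rightarrow(3)\Rightarrow(1)$. Throughout, the key observation (already made in the paragraph preceding the lemma) is that if $X=\sum_{r=1}^{R}u_r\otimes v_r\otimes w_r$, then each slice of $X$ satisfies $X_s=\mu(e_s,X)=\sum_{r=1}^{R} u_{sr}\,v_r w_r^T$, so $\ss_1(X)\subseteq\langle v_r w_r^T : r\in [R]\rangle$ as a subspace of $\K^{n\times m}$.

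For $(1)\Rightarrow(2)$, I would start from an optimal decomposition $X=\sum_{r=1}^{R'}u_r\otimes v_r\otimes w_r$ with $R'=\trk(X)\le R$. After discarding zero terms from $\{v_r w_r^T\}_r$ and thinning to a maximal linearly independent subfamily, one obtains a linearly independent set of rank-$1$ matrices of some cardinality $R''\le R$ whose span contains $\ss_1(X)$. If $R''<R$, I would enlarge this set by successively adjoining matrices $E_{i,j}$ not in the current span, which is possible whenever $R\le nm$ (the only regime where an $R$-base can exist at all). This yields an $R$-base of $X$.

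For $(2)\Rightarrow(3)$, write the $R$-base as $\{v_1 w_1^T,\ldots,v_R w_R^T\}$ and assemble $P=[v_1\mid\cdots\mid v_R]\in \K^{n\times R}$, $Q=[w_1\mid\cdots\mid w_R]\in \K^{m\times R}$. For any diagonal $D=\diag(d_1,\ldots,d_R)$ one has $PDQ^T=\sum_r d_r\, v_r w_r^T$, so each slice $X_j\in\ss_1(X)$ equals $PD_jQ^T$ for the diagonal $D_j$ whose entries are the (unique, by linear independence) coordinates of $X_j$ in the basis $\{v_r w_r^T\}_{r\in[R]}$. Linearity of $D\mapsto PDQ^T$ then gives
\[
\ss_1(X)=\langle PD_1Q^T,\ldots,PD_kQ^T\rangle=P\langle D_1,\ldots,D_k\rangle Q^T.
\]

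For $(3)\Rightarrow(1)$, expand $X_j=PD_jQ^T=\sum_{r=1}^{R} d_{jr}\,v_r w_r^T$, where $v_r,w_r$ are the $r$-th columns of $P,Q$, and set $u_r=(d_{1r},\ldots,d_{kr})^T\in \K^k$. A direct entrywise check then shows $X=\sum_{r=1}^R u_r\otimes v_r\otimes w_r$, so $\trk(X)\le R$. The only mildly delicate point is the padding step in $(1)\Rightarrow(2)$, where one must verify that any short linearly independent family of rank-$1$ matrices can be enlarged to cardinality exactly $R$; this uses only the fact that the rank-$1$ matrices $E_{i,j}$ span all of $\K^{n\times m}$, and is valid throughout the relevant range $R\le nm$.
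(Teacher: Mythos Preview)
The paper does not supply its own proof of this lemma; it simply cites \cite[Proposition~14.45]{burgisser2013algebraic} and \cite[Proposition~3.4]{byrne2019tensor}. Your cyclic argument $(1)\Rightarrow(2)\Rightarrow(3)\Rightarrow(1)$ is the standard direct verification and is essentially correct.

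Two small points. First, in $(3)\Rightarrow(1)$ you write ``expand $X_j=PD_jQ^T$'', but condition~(3) only asserts that $\ss_1(X)=\langle PD_1Q^T,\ldots,PD_kQ^T\rangle$, not that the particular slice $X_j$ equals $PD_jQ^T$. What you actually need is that each $X_j\in\ss_1(X)$ lies in $P\langle D_1,\ldots,D_k\rangle Q^T$, hence $X_j=PD_j'Q^T$ for some diagonal $D_j'$; the rest of your argument then goes through with $D_j'$ in place of $D_j$. Second, you are right to flag the $R>nm$ issue: as literally stated, (2) fails for $R>nm$ while (1) and (3) can still hold, so the equivalence is only meaningful in the range $R\le nm$. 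This is a defect of the statement rather than of your proof.
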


From Lemma \ref{lem:bnrs19}, it follows that if $X,X' \in \Kknm$ are both generator tensors of the same space $V \leq \Knm$ then $\trk(X)=\trk(X')$.
Therefore, we define the tensor rank of $V$ to be the tensor rank of any generator tensor for $V$. 
	
For any tensor $X=(X_1\mid \cdots\mid X_k)\in\Kknm$ and matrices $P\in\Knn,Q \in \Kmm$, we denote by $PXQ$ the tensor $(PX_1Q\mid \cdots\mid PX_kQ)$.

\begin{definition}
	The \textbf{dual} of a vector space $V\leq \Knm$ is defined to be its orthogonal complement with respect to the trace bilinear form: 
	\[V^\perp:=\{N\in\Knm:\Tr(MN^t)=0 \textup{ for all } M \in V, M\neq 0\},\] 
	where $\Tr(MN^t)$ is the trace of the square matrix $MN^t$.
\end{definition}

\section{Tensors of the form \texorpdfstring{$(I \mid M^{i_1}\mid \cdots \mid M^{i_\ell})$}{}}

In this section, we give explicit constructions of perfect bases of tensors in $\K^{k\times n\times m}$ of the form $(I \mid M^{i_1}\mid \cdots \mid M^{i_{k-1}})$, recalling that $M$ is a companion matrix of a polynomial in $\K[x]$. 
These results build on the work of \cite{jaja1979optimal}, in which tensors of the form $(I \mid M)$ are considered.

\begin{definition}
	We say that a pair of tensors $X,Y \in \Kknm$ are {\bf equivalent}
	if there exist $P\in\GL_n(\K)$ and $Q\in\GL_m(\K)$ such that 
	$\ss_1(X) = P\,\ss_1(Y)\,Q:=\{PMQ: M \in \ss_1(Y)\}$.
\end{definition}

\begin{remark}
\label{rem:PNP-1}
	Let $X,Y\in\Kknm$ be equivalent tensors and let $L\in\GL_n(\K)$ and $N\in\GL_m(\K)$ such that $Y=LXN$. It is easy to see that if $\A$ is a perfect base for $X$ then $\{LAN:A \in\A\}$ is a perfect base for $Y$.
\end{remark}
 
By Remark \ref{rem:PNP-1}, the results given in the following can be applied to the 3-tensor 
\[(LN\mid LMN\mid \ldots \mid LM^{k-1}N)\in\Kkmm,\] 
for $L,N\in\GL_m(\K)$. 

We recall the following lemma, which gives a lower bound on the tensor rank.
\begin{lemma}[{\cite[Lemma~3.1]{jaja1979optimal}}]\label{lem:lbtrk(I|M)}
	Let $A$ be an $m \times m$ matrix. Then $\trk(I\mid A) = m$ if and only if $A$ has $m$ distinct eigenvalues.
\end{lemma}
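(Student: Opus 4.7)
The plan is to prove the lower bound $\trk(I\mid A)\geq m$ unconditionally, then characterize when equality holds by combining a constructive upper bound with the perfect-base characterization in Lemma \ref{lem:bnrs19}.

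First I would establish $\trk(I\mid A)\geq m$. Starting from any decomposition $X=\sum_{r=1}^{R}u_r\otimes v_r\otimes w_r$, the first slice $X_s=\sum_{r}u_{sr}\,v_r w_r^T$ is a sum of at most $R$ rank-one matrices, so every element of $\ss_1(X)$ has rank at most $R$. Applying this to $I\in\ss_1(I\mid A)$ yields $\trk(I\mid A)\geq \rk(I)=m$.

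For the sufficient direction, suppose $A$ has $m$ distinct eigenvalues $\lambda_1,\dots,\lambda_m\in\K$. Then $A$ is diagonalizable, so I can write $A=PDP^{-1}$ with $D=\diag(\lambda_1,\dots,\lambda_m)$ and $P\in\GL_m(\K)$. The rank-one matrices $R_i:=PE_{ii}P^{-1}$ are linearly independent and satisfy $I=\sum_{i}R_i$ and $A=\sum_{i}\lambda_i R_i$. Hence $\{R_1,\dots,R_m\}$ is a perfect $m$-base of $(I\mid A)$, and Lemma \ref{lem:bnrs19} gives $\trk(I\mid A)\leq m$, matching the lower bound.

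For the converse, assume $\trk(I\mid A)=m$. Lemma \ref{lem:bnrs19} produces diagonal matrices $D_1,D_2\in\K^{m\times m}$ and $P,Q\in\K^{m\times m}$ with $\ss_1(I\mid A)=\langle PD_1Q^T,PD_2Q^T\rangle$. Since $I$ lies in this two-dimensional space, an invertible $2\times 2$ change of basis among the generators lets me arrange $I=PD_1Q^T$, which forces $P,D_1,Q\in\GL_m(\K)$ and $Q^T=D_1^{-1}P^{-1}$. Taking the complementary combination produces scalars $\alpha,\beta\in\K$ with $A=\alpha I+\beta\,P(D_2D_1^{-1})P^{-1}$, so $A$ is similar to the diagonal matrix $\alpha I+\beta D_2 D_1^{-1}$ and is therefore diagonalizable over $\K$.

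The hard part is to upgrade diagonalizability to the sharper claim that the $m$ eigenvalues are pairwise distinct. My plan here is to exploit the fact that $\langle I,A\rangle$ has dimension two together with the minimality of the base $\{PE_{ii}Q^T\}$: any coincidence between two diagonal entries of $D_2D_1^{-1}$ would make the two corresponding rank-one generators redundant in the sense that $\langle I,A\rangle$ would already be contained in the span of the remaining $m-1$ rank-one matrices, contradicting $\trk(I\mid A)=m$. Converting this informal redundancy into a clean argument is where I expect the proof to concentrate its work.
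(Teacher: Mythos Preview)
The paper does not prove this lemma itself; it is quoted from \cite{jaja1979optimal}, so there is no in-paper argument to compare your approach against. Your lower bound $\trk(I\mid A)\geq m$, your ``if'' direction, and your use of Lemma~\ref{lem:bnrs19} in the converse to conclude that $A$ must be diagonalizable over $\K$ are all correct.

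The step you flagged as ``hard'' is in fact unprovable in the stated generality, and your proposed redundancy argument fails concretely. If two diagonal entries of $D_2D_1^{-1}$ coincide, say in positions $i$ and $j$, it is \emph{not} true that $\langle I,A\rangle$ lies in the span of the remaining $m-1$ rank-one generators: the identity $I=\sum_k PE_{kk}Q^T$ genuinely needs all $m$ summands, and replacing the two coupled ones by the single matrix $P(E_{ii}+E_{jj})Q^T$ produces a rank-$2$ matrix, not a rank-$1$ matrix. Indeed the ``only if'' direction is false for arbitrary $A$: take $m=3$ and $A=\diag(1,1,2)$; then $\{E_{11},E_{22},E_{33}\}$ is a $3$-base for $(I\mid A)$, so $\trk(I\mid A)=3$, yet $A$ has a repeated eigenvalue. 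The lemma becomes true, and your argument becomes complete, once one assumes $A$ is nonderogatory (minimal polynomial equals characteristic polynomial), since then diagonalizability over $\K$ is equivalent to having $m$ distinct eigenvalues in $\K$. This is precisely the setting in which the paper invokes the lemma: it is applied only to the companion matrix $M$ (see Theorem~\ref{thm:jaja}), which is nonderogatory by construction.
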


The following result was shown in \cite{jaja1979optimal}, wherein an explicit basis of a perfect space containing $\ss_1(I\mid M)$ was constructed, yielding an upper bound on $\trk(I\mid M)$. The lower bound is then given by Lemma \ref{lem:lbtrk(I|M)}.

\begin{theorem}[{\cite[Theorem~3.2]{jaja1979optimal}}]
\label{thm:jaja}
	Let $|\K|\geq m$. Then $\trk(I\mid M)=m$ if $M$ is diagonalizable and $\trk(I\mid M)=m+1$ otherwise.
\end{theorem}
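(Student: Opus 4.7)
The plan is to treat the two cases separately, producing matching upper and lower bounds in each. The lower bound $\trk(I\mid M)\geq m$ holds because the identity $I$ has matrix rank $m$, so at least $m$ rank-one matrices are needed to express it as a linear combination, forcing any perfect space containing it to have dimension at least $m$. Combined with Lemma \ref{lem:lbtrk(I|M)}, this gives $\trk(I\mid M)\geq m$ in the diagonalizable case and $\trk(I\mid M)\geq m+1$ in the non-diagonalizable case. The substance of the proof is in exhibiting perfect bases of matching cardinalities.

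For the diagonalizable case, I note that $M$, being a companion matrix, has coinciding minimal and characteristic polynomials, so diagonalizability forces $M$ to possess $m$ pairwise distinct eigenvalues $\lambda_1,\dots,\lambda_m\in\K$. I would let $V$ be the $m\times m$ Vandermonde matrix whose $i$-th column is $v_i=(1,\lambda_i,\dots,\lambda_i^{m-1})^T$, so that $MV=V\diag(\lambda_1,\dots,\lambda_m)$. Writing $\rho_i\in\K^m$ for the $i$-th row of $V^{-1}$, the rank-one matrices $v_i\rho_i$ satisfy $\sum_i v_i\rho_i=I$ and $\sum_i\lambda_i v_i\rho_i=M$, and hence $\{v_i\rho_i\}_{i=1}^m$ is an $m$-base of $(I\mid M)$ in the sense of Lemma \ref{lem:bnrs19}.

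For the non-diagonalizable case, I would use a perturbation argument. Appealing to $|\K|\geq m$, pick distinct $\mu_1,\dots,\mu_m\in\K$ and let $M'\in\Kmm$ be the companion matrix of $\prod_{i=1}^m(x-\mu_i)$, which is diagonalizable. The previous paragraph applied to $M'$ produces an $m$-base $\A=\{v_i\rho_i\}_{i=1}^m$ of $(I\mid M')$ built from the Vandermonde vectors at the $\mu_i$. The crucial observation is that both $M$ and $M'$ have the companion form displayed in the notation section, so their top $m-1$ rows agree and they differ only on the last row; consequently $M-M'=e_m u^T$ for some $u\in\K^m$, which is a rank-one matrix (nonzero, because $M$ is not diagonalizable but $M'$ is). Adjoining $e_m u^T$ to $\A$ produces $m+1$ rank-one matrices whose span contains $I$, $M'$ and $M-M'$, and hence also $M$. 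Moreover $e_m u^T$ cannot lie in $\langle\A\rangle_\K$, for otherwise $M=M'+e_m u^T$ would lie in the $m$-dimensional perfect space $\langle\A\rangle_\K$, contradicting $\trk(I\mid M)>m$. Therefore $\A\cup\{e_m u^T\}$ is an $(m+1)$-base of $(I\mid M)$ and $\trk(I\mid M)\leq m+1$.

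The only real content of the argument is the perturbation step; everything else is a routine application of the Vandermonde eigendecomposition of a diagonalizable companion matrix. The rigidity of the companion form automatically forces $M-M'$ to have rank one, which is exactly the extra cost of passing from the diagonalizable case to the non-diagonalizable case, matching the $m\to m+1$ jump asserted by the theorem.
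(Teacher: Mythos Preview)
Your proof is correct and follows essentially the same approach that the paper describes for Ja'Ja's original argument and that the paper itself uses in its extension (Theorem~\ref{thm:jajaext}): diagonalize when possible via $P^{-1}E_{i,i}P$ (your $v_i\rho_i$), and otherwise perturb to a diagonalizable companion matrix $M'$ and adjoin the rank-one correction $M-M'$, with the lower bound supplied by Lemma~\ref{lem:lbtrk(I|M)}. The paper does not give its own proof of this theorem (it is cited from \cite{jaja1979optimal}), but your argument matches both the paper's summary and the technique deployed in the proof of Theorem~\ref{thm:jajaext}.
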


In the following result, we show that the construction of a perfect base given in {\cite[Theorem~3.2]{jaja1979optimal}} (Theorem \ref{thm:jaja}) can be extended to the case of tensors of the form $(I\mid M\mid M^{-1})$ over any field of cardinality at least $m+1$.

\begin{theorem}
\label{thm:jajaext}
	Let $|\K|\geq m+1$ and let $f=(x-\alpha_1)\,\cdots\,(x-\alpha_{m-r})\,g\in \K[x]$ be the characteristic polynomial of $M$, where $g\in\K[x]$ has either degree $r\leq 1$ or is a polynomial of degree $r\geq 2$ that is not decomposable into linear factors. There exist $P\in\GL_m(\K)$ and $A,B\in\Kmm$ of rank $1$ such that the following hold.
	\begin{enumerate}[leftmargin=5.5ex]
		\item If $0\leq r\leq 1$ then $\{P^{-1}E_{i,i}P:i \in [m]\}$ is an $m$-base of  $(I\mid M\mid M^{-1})$.
		\item If $r=2$ then $\{P^{-1}E_{i,i}P:i \in [m]\}\cup\{A\}$ is an $(m+1)$-base of $(I\mid M\mid M^{-1})$.
		\item If $r\geq 3$ then $\{P^{-1}E_{i,i}P:i \in [m]\}\cup\{A,B\}$ is an $(m+2)$-base of $(I\mid M\mid M^{-1})$. 
	\end{enumerate}
\end{theorem}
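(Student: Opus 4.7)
The plan is to exploit the primary decomposition of $M$ to conjugate it into a block-diagonal form, and then construct the perfect bases by augmenting an $m$-dimensional ``diagonal skeleton'' with one or two extra rank-one matrices that absorb the non-diagonalizable part. First, I would apply the Chinese Remainder Theorem to $\K[x]/(f) \cong \K[x]/\bigl(\prod_{i=1}^{m-r}(x-\alpha_i)\bigr) \oplus \K[x]/(g)$ (where the $\alpha_i$ are distinct and coprime to $g$, which we may assume after absorbing any linear factors of $g$ into the $\alpha_i$) to produce $P \in \GL_m(\K)$ with $PMP^{-1} = \diag(\alpha_1,\ldots,\alpha_{m-r},M_g)$, where $M_g$ is an $r\times r$ block whose minimal polynomial is $g$. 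Since $g$ has nonzero constant term, $M_g$ is invertible and $PM^{-1}P^{-1}=\diag(\alpha_1^{-1},\ldots,\alpha_{m-r}^{-1},M_g^{-1})$. Case (1) is then immediate: when $r\leq 1$, $M$ is diagonalizable over $\K$, so $\ss_1(I\mid M\mid M^{-1})$ is contained in the conjugated diagonal subspace $\langle P^{-1}E_{i,i}P : i\in[m]\rangle$, and Remark \ref{rem:PNP-1} supplies the $m$-base.

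For Case (2), $g$ has degree $2$ with nonzero constant term, and Cayley--Hamilton applied to $M_g$ gives $M_g^{-1} = -g_0^{-1}(M_g + g_1 I_r) \in \langle I_r, M_g\rangle$. Consequently, modulo diagonal matrices, the projections of $M_g$ and $M_g^{-1}$ are proportional. A single rank-one matrix $\tilde A \in \K^{m\times m}$ supported in the $g$-block with the appropriate off-diagonal pattern then suffices: one takes $\tilde A|_{\textrm{block}}$ to be a rank-one matrix whose off-diagonal part coincides with that of $M_g$ up to a scalar. Setting $A = P^{-1}\tilde A P$, the set $\{P^{-1}E_{i,i}P\} \cup \{A\}$ is the required $(m+1)$-base.

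For Case (3) with $r\geq 3$, the matrices $I_r, M_g, M_g^{-1}$ are linearly independent in $\K^{r\times r}$, and modulo the diagonal subspace $\langle M_g, M_g^{-1}\rangle$ projects to a two-dimensional subspace. One must construct two rank-one matrices $A, B \in \K^{m\times m}$ whose span, together with $\langle P^{-1}E_{i,i}P\rangle$, contains $M$ and $M^{-1}$. The construction exploits the companion structure of $M_g$: the off-diagonal pattern of $M_g$ is the superdiagonal of $1$'s plus its last row, while that of $M_g^{-1}$ is the subdiagonal of $1$'s plus its first column. One chooses $A$ rank-one to encode the row structure of $M_g$ and $B$ rank-one to encode the column structure of $M_g^{-1}$, with diagonal corrections chosen to ensure that $M_g - \mu A - \nu B$ and $M_g^{-1} - \mu' A - \nu' B$ are both diagonal for suitable scalars.

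The main obstacle is Case (3): the explicit construction of $A$ and $B$ must resolve both $M_g$ and $M_g^{-1}$ simultaneously within the span $\langle E_{i,i}\rangle + \langle A|_{\textrm{block}}, B|_{\textrm{block}}\rangle$, despite $M_g$ potentially having off-diagonal rank as large as $r-1$ in the $g$-block. The key leverage is the polynomial identity $g(M_g)=0$, which relates $M_g^{-1}$ to the powers of $M_g$ and provides enough structure to carry out the decomposition; the hypothesis $|\K|\geq m+1$ ensures enough scalar freedom for the required diagonal corrections.
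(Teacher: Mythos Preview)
Your handling of Cases (1) and (2) is essentially correct and in fact slightly more direct than the paper's unified treatment. The problem is Case (3): the two rank-$1$ matrices you describe do not exist in general, so the proof breaks down there.

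Here is the obstruction. Work inside the $r\times r$ $g$-block after applying your $P$. You want rank-$1$ matrices $A',B'\in\K^{r\times r}$ with $M_g,M_g^{-1}\in\langle E_{i,i}:i\in[r]\rangle+\langle A',B'\rangle$. Passing to the quotient by the diagonal subspace, the images of $M_g$ and $M_g^{-1}$ are linearly independent (for $r\geq 3$), so the off-diagonal projection of $\langle A',B'\rangle$ must equal the $2$-dimensional span of these images; hence $A',B'$ themselves lie in $\langle E_{i,i},M_g,M_g^{-1}\rangle$. But this $(r+2)$-dimensional space contains \emph{no} non-diagonal rank-$1$ matrix once $r\geq 4$: for $2\leq i\leq r-2$, row $i$ of $a_1M_g+a_2M_g^{-1}+D$ is supported on columns $\{i-1,i,i+1\}$ and row $i+1$ on columns $\{i,i+1,i+2\}$, and comparing positions $i-1$ and $i+2$ forces $a_1=a_2=0$. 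For $r=3$ the same computation shows that a non-diagonal rank-$1$ element exists precisely when $g$ has a root $s\in\K$ (one finds $s=a_1b_1/a_2$ must satisfy $g(s)=0$), and two independent such elements would require two roots, forcing $g$ to split completely --- contrary to hypothesis. So your ``diagonal skeleton plus two corrections'' cannot work with the skeleton $\{P^{-1}E_{i,i}P\}$ built from your block-diagonalizer $P$.

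The paper's key idea, which your proposal is missing, is to change the skeleton. One picks distinct $\beta_1,\ldots,\beta_r\in\K^\times\setminus\{\alpha_1,\ldots,\alpha_{m-r}\}$ (this is exactly where $|\K|\geq m+1$ is used), lets $M_h$ be the companion matrix of $h=\prod_j(x-\beta_j)$, and observes that two companion matrices of the same size differ only in their last row while their inverses differ only in their first row. Thus $D_1=M_g-M_h$ and $D_2=M_g^{-1}-M_h^{-1}$ are rank $1$. Since $M_h$ is diagonalizable, there is $Q\in\GL_m(\K)$ with $Q\,\mathrm{diag}(\alpha_i,M_h)\,Q^{-1}$ diagonal, and the ``$P$'' in the theorem statement is the composite $QP$, not your block-diagonalizer alone. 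With this corrected skeleton one gets $I,M,M^{-1}\in\langle(QP)^{-1}E_{i,i}(QP)\rangle+\langle P^{-1}D_1P,P^{-1}D_2P\rangle$, which is the desired $(m+2)$-base. Your appeal to Cayley--Hamilton in Case (3) does not help here: for $r\geq 3$ it expresses $M_g^{-1}$ in terms of $I,M_g,\ldots,M_g^{r-1}$, and the higher powers are not in the span you are building.
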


\begin{proof}
	Suppose first that $r\in\{0,1\}$ and let $g(x)=x-\alpha_m$ in the instance that $r=1$. 
	Then both $M$ and $M^{-1}$ are diagonalizable and so there exists $P\in\GL_m(\K)$ such that
	$PMP^{-1}=\diag(\alpha_1,\ldots,\alpha_m)$ and $PM^{-1}P^{-1}=\diag(\alpha_1^{-1},\ldots,\alpha_m^{-1})$. This yields that
	\[ I = \sum_{i=1}^{m} P^{-1} E_{i,i} P,\qquad M=\sum_{i=1}^{m} \alpha_i P^{-1} E_{i,i} P,\qquad
	M^{-1}=\sum_{i=1}^{m} \alpha_i^{-1} P^{-1} E_{i,i} P.
	\]
	Since $P$ is invertible, $\rk(P^{-1}E_{i,i}P)=1$ for each $i\in  [m]$, which implies that 
	$\{P^{-1}E_{i,i}P:i \in [m]\}$ is an $m$-base for $(I\mid M \mid M^{-1})$. 
	
	Now suppose that $r\geq 2$. Let $\beta_1,\ldots,\beta_r\in\K\setminus\{0,\alpha_1,\ldots,\alpha_{m-r}\}$ be distinct and let $h:=(x-\beta_1)\,\cdots\,(x-\beta_r)$. Let $M_h$ be the companion matrix of $h$ and let 
		\begin{equation}\label{eq:D1}
			D_1=\left(
		\begin{array}{c|c}
		\renewcommand{\arraystretch}{1}
			0 & 0 \\\hline
			 0 & M_g-M_h
		\end{array}
		\right).
		\end{equation}
	
	Since $M_h$ is diagonalizable, there exists $Q_1\in\GL_r(\K)$ such that $Q_1 M_hQ_1^{-1}=\diag(\beta_1,\ldots,\beta_r)$.
    Moreover, as the $\beta_i$ are nonzero we have that $M_h$ is invertible. 
	
	Define the matrix 
	\begin{equation}\label{eq:Q}
		Q:=\left(
		\begin{array}{c|c}
		\renewcommand{\arraystretch}{1}
			I & 0 \\\hline
			 0 & Q_1\Tstrut{13pt}
		\end{array}
		\right)\in\Kmm.
		\end{equation}
	It is easy to see that
	\begin{equation*}
		Q\left(
		\begin{array}{c|c}
			\renewcommand{\arraystretch}{1}
			0 & 0 \\\hline
			0 & M_h\Tstrut{13pt}
		\end{array}
		\right) Q^{-1} =
		\left(
		\begin{array}{c|c}
			\renewcommand{\arraystretch}{1}
			0 & 0 \\\hline
			0 & \diag(\beta_1,\ldots,\beta_r) 
		\end{array}
	    \right)= 
		   \sum_{i=m-r+1}^m \beta_i E_{i,i}\Tstrut{13pt}.
	\end{equation*}
	There exists a matrix $P \in \GL_m(\K)$ such that:
	\begin{align*}
	PMP^{-1}&=\sum_{i=1}^{m-r}\alpha_i E_{i,i}+
	\left(\begin{array}{c|c}
		\renewcommand{\arraystretch}{1}
			0 & 0 \\\hline
			0 & M_g
		\end{array}
		\right)\\
	&=\sum_{i=1}^{m-r}\alpha_i E_{i,i}+
	\left(\begin{array}{c|c}
		\renewcommand{\arraystretch}{1}
			0 & 0 \\\hline
			0 & M_h
		\end{array}
		\right)+D_1\\
		&=\sum_{i=1}^{m-r}\alpha_i E_{i,i}+\sum_{i=m-r+1}^m \beta_{i-m+r}\;Q^{-1}E_{i,i}Q+D_1\\
		&=\sum_{i=1}^{m-r}\alpha_i \;Q^{-1}E_{i,i}Q+\sum_{i=m-r+1}^m \beta_{i-m+r}\,Q^{-1}E_{i,i}Q+D_1.
	\end{align*}
	
	We have  $\displaystyle I=\sum_{i=1}^m P^{-1}Q^{-1}E_{i,i}QP$ and from the above equation we get: 
	\begin{equation*}
	M=\sum_{i=1}^{m-r}\alpha_i \,P^{-1}Q^{-1}E_{i,i}QP+\sum_{i=m-r+1}^m \beta_{i-m+r}\,P^{-1}Q^{-1}E_{i,i}QP+P^{-1}D_1P.
	\end{equation*}
	Observe that if $r=2$ then $M_g^{-1}\in\<I,M_g\>$. Therefore, by Theorem \ref{thm:jajaext}, with $D_1$ and $Q$ defined as in (\ref{eq:D1}) and (\ref{eq:Q}) respectively, and the fact that
	\begin{equation*}
	    PM^{-1}P^{-1}-\sum_{i=1}^{m-2}\alpha_i^{-1}E_{i,i}=\left(
		\begin{array}{c|c}
		\renewcommand{\arraystretch}{1}
			 0 & 0 \\\hline
			 0 & M_g^{-1}\Tstrut{12pt}
		\end{array}
		\right),
	\end{equation*}
	we get $\ss_1((I\mid M\mid M^{-1}))\leq\<P^{-1}Q^{-1}E_{i,i}QP:i \in [m]\>+\<D_1\>$. Since $M$ is not diagonalizable we have $\trk(I\mid M\mid M^{-1})\geq m+1$, while equality follows from the existence of an $(m+1)$-base for
	$(I\mid M\mid M^{-1})$.
	
	Finally, if $r\geq 3$ we define
	\begin{equation}\label{eq:D2}
			D_2:=\left(
		\begin{array}{c|c}
		\renewcommand{\arraystretch}{1}
			0 & 0 \\\hline
			 0 & M_g^{-1}-M_h^{-1}\Tstrut{12pt}
		\end{array}
		\right).
	\end{equation}
	Once can check that 
	\begin{equation*}
		PM^{-1}P^{-1}=\sum_{i=1}^{m-r}\alpha_i^{-1} \;Q^{-1}E_{i,i}Q+\sum_{i=m-r+1}^m \beta_{i-m+r}^{-1}\,Q^{-1}E_{i,i}Q+D_2.
	\end{equation*}
	and therefore we get 
	\[\ss_1((I\mid M\mid M^{-1}))\leq\<P^{-1}Q^{-1}E_{i,i}QP:i \in \{1,\ldots,m\}\>+\<P^{-1} D_1 P,P^{-1}D_2 P\>.\qedhere\]
\end{proof}

We now extend the results of Theorem \ref{thm:jajaext} in the following.

\begin{corollary}
\label{cor:jajaext}
	Let $|\K|\geq m+1$ and let $f=(x-\alpha)\,\cdots\,(x-\alpha_{m-r})\,g\in \K[x]$ be the characteristic polynomial of $M$, where $g\in\K[x]$ has either degree $r\leq 1$ or is a polynomial of degree $r\geq 2$ that is not decomposable into linear factors. Let $S:=\{s_1,\ldots,s_\ell\}\subseteq\Z$ and 
	let $X:=(I\mid M\mid M^{-1}\mid M^{s_1}\mid \ldots \mid M^{s_\ell})$. There exist $P,Q\in\GL_m(\K)$ and $A,B\in\Kmm$ of rank $1$ such that the following hold. 
	\begin{enumerate}[leftmargin=5.5ex]
		\item If $0\leq r\leq 1$ then $\{P^{-1}Q^{-1}E_{i,i}QP:i\in[m]\}$ is an $m$-base of $X$.
		\item If $r=2$ then $\{P^{-1}Q^{-1}E_{i,i}QP:i\in[m] \}\cup\{A\}$ is an $(m+1)$-base of $X$.
		\item If $r= 3$ then $\{P^{-1}Q^{-1}E_{i,i}QP:i\in[m]\}\cup\{A,B\}$ is an $(m+2)$-base of $X$. 
	\end{enumerate}
\end{corollary}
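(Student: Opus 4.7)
The plan is to leverage Theorem \ref{thm:jajaext} directly and argue that each additional slice $M^{s_i}$ already lies in the $\K$-span of the perfect base constructed there for $(I\mid M\mid M^{-1})$. Since $\ss_1(X)=\<I,M,M^{-1},M^{s_1},\ldots,M^{s_\ell}\>$ and the slices $I,M,M^{-1}$ are handled by Theorem \ref{thm:jajaext}, it suffices to verify that every integer power $M^s$ lies in the proposed span. The key tool is the Cayley--Hamilton theorem applied to $M_g$: since $g$ has degree $r$, we obtain $M_g^s\in \<I_r,M_g,\ldots,M_g^{r-1}\>$ for all $s\geq 0$, and this extends to all $s\in\Z$ once we use that $M$, and hence $M_g$, is invertible (implicit in the setup, since $M^{-1}$ appears in $X$).

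For case (1) with $r\leq 1$, the matrix $M$ is diagonalizable with nonzero eigenvalues, so a single $P\in \GL_m(\K)$ simultaneously diagonalizes every $M^s$ for $s\in\Z$, placing each $M^s$ in $\<P^{-1}E_{i,i}P:i\in[m]\>$; taking $Q=I$ recovers the $m$-base.

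For case (2) with $r=2$, Cayley--Hamilton together with invertibility gives $M_g^s\in \<I_2,M_g\>$ for every $s\in\Z$. Writing $M_g=M_h+(M_g-M_h)$ and decomposing $PM^sP^{-1}$ exactly as in the proof of Theorem \ref{thm:jajaext}, the $M_h$-contribution is absorbed by $\sum_{i=m-r+1}^m\beta_{i-m+r}Q^{-1}E_{i,i}Q$, so each $M^s$ falls in $\<P^{-1}Q^{-1}E_{i,i}QP:i\in[m]\> + \<P^{-1}D_1P\>$. Setting $A=P^{-1}D_1P$, which is rank one since $D_1$ is, yields the desired $(m+1)$-base.

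For case (3) with $r=3$, the minimal polynomial of $M_g$ equals $g$ and has degree $3$, so $\{I_3,M_g,M_g^2\}$ is linearly independent; moreover the Cayley--Hamilton identity gives $M_g^{-1}=g(0)^{-1}(M_g^2-c_2 M_g-c_1 I_3)$, in which $M_g^2$ appears with nonzero coefficient, hence $\<I_3,M_g,M_g^{-1}\>=\<I_3,M_g,M_g^2\>$ is three-dimensional and contains every $M_g^s$. Decomposing $M_g^s=c_sI_3+d_sM_g+e_sM_g^{-1}$ and then splitting $M_g=M_h+(M_g-M_h)$ and $M_g^{-1}=M_h^{-1}+(M_g^{-1}-M_h^{-1})$, and expanding $PM^sP^{-1}$ as in Theorem \ref{thm:jajaext}, each $M^s$ falls in $\<P^{-1}Q^{-1}E_{i,i}QP:i\in[m]\>+\<P^{-1}D_1P,P^{-1}D_2P\>$. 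Taking $A=P^{-1}D_1P$ and $B=P^{-1}D_2P$, both rank one, completes the argument. The main obstacle is essentially bookkeeping in these block expansions; the rank-one nature of $D_1,D_2$ and the block decomposition are already supplied by the proof of Theorem \ref{thm:jajaext}, while the only new ingredient is the identification $\<I_r,M_g,M_g^{-1}\>=\<I_r,M_g,\ldots,M_g^{r-1}\>$ for $r\leq 3$.
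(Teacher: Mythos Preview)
Your proposal is correct and follows essentially the same route as the paper's proof: both arguments reduce to showing that for each $s\in\Z$ the block $M_g^s$ lies in $\langle I_r,M_g\rangle$ when $r=2$ and in $\langle I_r,M_g,M_g^{-1}\rangle$ when $r=3$, and then insert this into the block decomposition of $PM^sP^{-1}$ already set up in Theorem~\ref{thm:jajaext}. The only difference is cosmetic: you make the Cayley--Hamilton justification for these containments explicit (and the identity $\langle I_3,M_g,M_g^{-1}\rangle=\langle I_3,M_g,M_g^2\rangle$), whereas the paper simply asserts them; and you note that in case~(1) one may take $Q=I$, which the paper leaves implicit.
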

\begin{proof}
	Throughout this proof we use the same notation as in the proof of Theorem \ref{thm:jajaext}. If $r\in\{0,1\}$ then any power of $M$ is diagonalizable and therefore by Theorem \ref{thm:jaja} we have $\ss_1((I\mid M\mid M^{-1}\mid M^{s_1}\mid \ldots \mid M^{s_\ell}))\leq \<P^{-1}E_{i,i}P:i\in[m]\>$. We thus now assume that $r\geq 2$. Observe that for every integer $s\in S$, we have
	\begin{equation*}
		PM^sP^{-1}=\left(
		\begin{array}{c|c}
		\renewcommand{\arraystretch}{1}
			\begin{matrix}
				\alpha_1^s & & \\
				& \ddots & \\
				& & \alpha_{n-r}^s
			\end{matrix} & 0 \\\hline
			 0 & M_g^s
		\end{array}
		\right).
	\end{equation*}
	
	One can easily check that if $r=2$ then for every $s\in S$ we have $M_g^s\in\<I,M_g\>$. Therefore, by Theorem \ref{thm:jajaext}, with $D_1$ and $Q$ defined as in (\ref{eq:D1}) and (\ref{eq:Q}) respectively, and the fact that
	\begin{equation*}
	    PM^sP^{-1}-\sum_{i=1}^{m-2}\alpha_i^sE_{i,i}=\left(
		\begin{array}{c|c}
		\renewcommand{\arraystretch}{1}
			 0 & 0 \\\hline
			 0 & M_g^s
		\end{array}
		\right),
	\end{equation*}
	we get 
	\begin{equation*}
		\ss_1((I\mid M\mid M^{-1}\mid M^{s_1}\mid \ldots \mid M^{s_\ell}))\leq\<P^{-1}Q^{-1}E_{i,i}QP:i\in [m]\>+\<P^{-1}D_1 P\>.
	\end{equation*}
    Similarly, if $r=3$ then for every $s\in S$ we have $M_g^s\in\<I,M_g,M_g^{-1}\>$. Again by Theorem \ref{thm:jajaext}, 
	with $D_1,Q$, and $D_2$ defined as in (\ref{eq:D1}), (\ref{eq:Q}), and (\ref{eq:D2}) respectively, and the fact that
	\begin{equation*}
	    PM^sP^{-1}-\sum_{i=1}^{m-3}\alpha_1^sE_{i,i}=\left(
		\begin{array}{c|c}
		\renewcommand{\arraystretch}{1}
			 0 & 0 \\\hline
			 0 & M_g^s
		\end{array}
		\right),
	\end{equation*}
	we have 
	\begin{equation*}
		\ss_1((I\mid M\mid M^{-1}\mid M^{s_1}\mid \ldots \mid M^{s_\ell}))\leq\<P^{-1}Q^{-1}E_{i,i}QP:i\in [m]\>+\<P^{-1}D_1 P,P^{-1}D_2 P\>.
	\end{equation*}
	This concludes the proof.
\end{proof}

\begin{notation}
\label{not:Y}
	We denote by $Y_n$ the $n\times m$ matrix $Y_n:=\left(\begin{array}{c|c}I & 0\end{array}\right)\in \Knm$. Clearly, for any matrix $P\in\K^{m\times m}$, $Y_nP$ is the matrix obtained by deleting the last $m-n$ rows from $P$.
\end{notation}

\begin{corollary}
\label{cor:n=2,3}
	Let $|\K|\geq m+1$, $n\in\{2,3\}$, and let $f=(x-\alpha)\,\cdots\,(x-\alpha_{m-r})\,g\in \K[x]$ be the characteristic polynomial of $M$, where $g\in\K[x]$ has either degree $r\leq 1$ or is a polynomial of degree $r\geq 2$ that is not decomposable into linear factors. 
	
	Let $X=\left(LY_nM^{-1}N\mid LY_nN\mid LY_nMN\mid \cdots \mid LY_nM^{m-2}N\right)\in\K^{m \times n \times m}$.
	For all $L\in\GL_n(\K)$ and $N\in\GL_m(\K)$, there exist $A,B\in\Kmm$ such that the following hold.
	\begin{enumerate}[leftmargin=5.5ex]
		\item If $0 \leq r \leq 1$ then $\{LY_nP^{-1}E_{i,i}PN:1 \leq m\}$ is a $m$-base of $X$.
		\item If $r=2$ then $\{LY_nP^{-1}E_{i,i}PN:1\leq i \leq m \}\cup\{LY_nAN\}$ is an $(m+1)$-base of $X$.
		\item If $r\geq 3$ then $\{LY_nP^{-1}E_{i,i}PN:1\leq i \leq m \}\cup\{LY_nAN,LY_nBN\}$ is an $(m+2)$-base of $X$. 
	\end{enumerate}
\end{corollary}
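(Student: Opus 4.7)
The plan is to derive Corollary \ref{cor:n=2,3} from Corollary \ref{cor:jajaext} by applying the row-truncation operator $Y_n$ to an explicit perfect base of the associated square tensor in $\K^{m \times m \times m}$. By the rectangular extension of Remark \ref{rem:PNP-1} (if $\A_0$ is a perfect base for $X_0$ then $\{LAN : A \in \A_0\}$ is a perfect base for $LX_0N$), it suffices to treat the case $L = I_n$, $N = I_m$; namely, to construct a perfect base for $X_0 := \left(Y_n M^{-1} \mid Y_n \mid Y_n M \mid \cdots \mid Y_n M^{m-2}\right) \in \K^{m \times n \times m}$. Writing $T := \left(I \mid M \mid M^{-1} \mid M^2 \mid \cdots \mid M^{m-2}\right) \in \K^{m \times m \times m}$ for the associated square tensor, we have $\ss_1(X_0) = Y_n \cdot \ss_1(T)$.

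Applying Corollary \ref{cor:jajaext} with $S = \{2, 3, \ldots, m-2\}$ produces an explicit perfect base $\A_1$ for $T$ of cardinality $m$, $m+1$, or $m+2$ according to the value of $r$. Absorbing $Q$ into $P$ (i.e.\ relabelling $QP$ as $P$), this base has the form $\{P^{-1} E_{i,i} P : i \in [m]\} \cup \{A, B\}$, where $A = P^{-1} D_1 P$ and $B = P^{-1} D_2 P$ are rank-one matrices with $D_1, D_2$ supported in the lower-right $r \times r$ block (cf.\ \eqref{eq:D1}, \eqref{eq:D2}), included according to $r$. Set $\A_0 := \{Y_n C : C \in \A_1\}$; the span inclusion $\ss_1(X_0) \subseteq \langle \A_0 \rangle_\K$ is immediate from $\ss_1(X_0) = Y_n \cdot \ss_1(T)$. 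It then remains to verify that every element of $\A_0$ has rank exactly one and that the elements are $\K$-linearly independent.

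For the diagonal part, write $P^{-1} E_{i,i} P = u_i v_i^{T}$ with $u_i$ the $i$-th column of $P^{-1}$ and $v_i^{T}$ the $i$-th row of $P$; then $Y_n (u_i v_i^{T}) = (Y_n u_i) v_i^{T}$ is rank one iff $Y_n u_i \neq 0$, and once this holds, linear independence of $\{(Y_n u_i) v_i^{T} : i \in [m]\}$ follows at once from linear independence of the rows $v_i$ of $P \in \GL_m(\K)$. By choosing $P$ so that the diagonalisable columns of $P^{-1}$ are the Vandermonde eigenvectors $(1, \lambda, \lambda^2, \ldots, \lambda^{m-1})^{T}$ of the companion matrix $M$, their first coordinate $1$ guarantees $Y_n u_i \neq 0$ for those columns, and a short computation using the block form of $Q$ handles the remaining columns. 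The main obstacle is the analogous verification for $Y_n A$ and $Y_n B$: one must show that $Y_n (P^{-1} D_j P)$ is nonzero and remains $\K$-linearly independent from $\{Y_n P^{-1} E_{i,i} P : i \in [m]\}$, which requires a direct inspection using the explicit form of $D_j$ (rank one, supported in the lower-right $r \times r$ block) and the chosen $P$. The hypothesis $n \in \{2, 3\}$ enters precisely here: it provides just enough rows, relative to the size of the non-diagonalisable block $M_g$, for $Y_n$ to preserve the rank-one-ness and independence of $A$ and $B$.
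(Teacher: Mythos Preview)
Your reduction to $L=I_n$, $N=I_m$ and the idea of applying $Y_n$ to a perfect base of the square tensor $T=(I\mid M\mid M^{-1}\mid M^2\mid\cdots\mid M^{m-2})$ are correct and match the paper's treatment of the cases $0\le r\le 3$. The gap is at $r\ge 4$, which is part of case~(3). Corollary~\ref{cor:jajaext} makes no statement for $r\ge 4$: its three clauses cover only $r\le 1$, $r=2$, and $r=3$. So for $r\ge 4$ there is no $(m+2)$-base of the square tensor $T$ available from that corollary for you to truncate. Indeed, the example given in the paper immediately after the present corollary shows that the construction of Theorem~\ref{thm:jajaext} does \emph{not} extend to a small base of $T$ once $r>3$.

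The paper handles $r\ge 4$ by a different construction, and this is where the hypothesis $n\in\{2,3\}$ actually enters. One replaces $M$ by the companion matrix $M_h$ of a fully split polynomial with distinct nonzero roots, takes $P$ diagonalising $M_h$, and uses the fact that any two $m\times m$ companion matrices agree in their first $m-1$ rows; iterating, $Y_nM^i=Y_nM_h^i$ for $0\le i\le m-n$. Thus all of $Y_n,Y_nM,\ldots,Y_nM^{m-n}$ already lie in $\langle Y_nP^{-1}E_{i,i}P:i\in[m]\rangle$, and only the extreme slices $Y_nM^{-1}$ (and, when $n=3$, also $Y_nM^{m-2}$) require rank-one corrections $D_1:=Y_n(M^{-1}-M_h^{-1})$ and $D_2:=Y_n(M^{m-2}-M_h^{m-2})$, each supported in a single row. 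So the role of $n\in\{2,3\}$ is not, as you suggest, to let $Y_n$ preserve the rank and independence of the matrices $A,B$ inherited from Corollary~\ref{cor:jajaext}; it is to guarantee that at most two of the truncated powers $Y_nM^i$ (for $-1\le i\le m-2$) differ from their diagonalisable counterparts $Y_nM_h^i$.
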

\begin{proof}
	Clearly, if $\A$ is a perfect base of $\left(Y_nM^{-1}\mid Y_n\mid Y_nM\mid \cdots \mid Y_nM^{m-2}\right)$ then $\{LAN:A \in \A\}$ is a perfect base of $X$, by Remark \ref{rem:PNP-1}. It therefore suffices to establish the result in the case that $L$ and $N$ are identity matrices, namely for the tensor $\left(Y_nM^{-1}\mid Y_n\mid Y_nM\mid \cdots \mid Y_nM^{m-2}\right)$. If $0\leq r\leq 3$ then the result follows from Corollary \ref{cor:jajaext}. We assume $r\geq 4$ in the remainder and we let $M_h$ be the companion matrix of $h:=(x-\alpha)\,\cdots\,(x-\alpha_{m-r})(x-\beta_1)\,\cdots\,(x-\beta_r)\in\K[x]$, with $\beta_1,\ldots,\beta_r\in\K\setminus\{0,\alpha_1,\ldots,\alpha_{m-r}\}$, and $P\in\GL_m(\K)$ be such that
	\begin{equation*}
		PM_hP^{-1}=\sum_{i=1}^{m-r}\alpha_iE_{i,i}+\sum_{i=m-r+1}^{m}\beta_{i-m+r}E_{i,i}.
	\end{equation*}
	Assume first that $n=2$. Then 
	\begin{equation*}
		\<Y_n,Y_nM,\ldots,Y_nM^{m-2}\>=\<Y_n,Y_nM_h,\ldots,Y_nM_h^{m-2}\>\leq\<Y_nP^{-1}E_{i,i}P:i\in [m]\>.
	\end{equation*}
	Let $D_1:=Y_n(M^{-1}-M_h^{-1})$, which has rank $1$. Clearly, we have
	\begin{equation*}
		\<Y_nM^{-1},Y_n,Y_nM,\ldots,Y_nM^{m-2}\>\leq\<Y_nP^{-1}E_{i,i}P:i\in [m]\>+\<D_1\>.
	\end{equation*}
	Finally, for $n=3$ we have 
	\begin{equation*}
		\<Y_n,Y_nM,\ldots,Y_nM^{m-3}\>=\<Y_n,Y_nM_h,\ldots,Y_nM_h^{m-3}\>\leq\<Y_nP^{-1}E_{i,i}P:i\in [m]\>.
	\end{equation*}
	Let $D_1:=Y_n(M^{-1}-M_h^{-1})$ and $D_2:=Y_n(M^{m-2}-M_h^{m-2})$ and notice that $\rk(D_1)=\rk(D_2)=1$. 
	Then
	\begin{equation*}
		\<Y_nM^{-1},Y_n,Y_nM,\ldots,Y_nM^{m-2}\>\leq\<Y_nP^{-1}E_{i,i}P:i \in [m]\>+\<D_1,D_2\>,
	\end{equation*}
	which concludes the proof.
\end{proof} 

\begin{remark}
\label{rem:m}
	One can observe that for $r=2$ or $n=2$, the result and the construction in the proof of Theorem \ref{thm:jajaext} and Corollaries \ref{cor:jajaext} and \ref{cor:n=2,3} hold for $|\K|\geq m$. Indeed $0$ can be a root of $h$ (or $f$) since in this case we do not need $M_h$ (or $M_f$) to be invertible.
\end{remark}

\begin{example}
    Let $m=5,n=3$, so that and $Y_n=(I_3\mid 0) \in\F_7^{3\times 5}$. Let $g:=x^3+6x^2+4\in\F_7[x]$, which is irreducible over $\F_7$, and let $f=(x-1)(x-2)g\in\F_7[x]$.
    The companion matrix of $f$ is given by
    \begin{equation*}
        M_f=\begin{pmatrix}
            0 & 1 & 0 & 0 & 0\\
            0 & 0 & 1 & 0 & 0\\
            0 & 0 & 0 & 1 & 0\\
            0 & 0 & 0 & 0 & 1\\
            6 & 5 & 5 & 2 & 4
        \end{pmatrix}
        \in\F_7^{5\times 5}.
    \end{equation*}
      We will use the construction given in the proof of Corollary \ref{cor:n=2,3} to find a $7$-base for the tensor $(Y_3M_f^{-1}\mid Y_3\mid Y_3M_f\mid Y_3M_f^2\mid Y_3M_f^3)$. Let $h=(x-1)(x-2)(x-3)(x-4)(x-5)\in\F_7[x]$, which has companion matrix:
    \begin{equation*}
        M_h=\begin{pmatrix}
            0 & 1 & 0 & 0 & 0\\
            0 & 0 & 1 & 0 & 0\\
            0 & 0 & 0 & 1 & 0\\
            0 & 0 & 0 & 0 & 1\\
            1 & 6 & 1 & 6 & 1
        \end{pmatrix}
        \in\F_7^{5\times 5}.
    \end{equation*}
     Let $P\in\GL_5(\F_7)$ be such that $PM_hP=\diag(5,4,3,2,1)$. First of all, observe that 
    \begin{equation*}
        \<Y_3,Y_3M_h,Y_3M_h^2\>=\<(I_3\mid 0),(0\mid I_3\mid 0), (0\mid I_3)\>=\<Y_3,Y_3M_f,Y_3M_f^2\>
    \end{equation*}
    and that $\<M_h^{-1},I,M_h,M_h^2,M_h^3\>\leq\<P^{-1}E_{i,i}P:i \in [5]\>$ since $M_h$ is diagonalizable. We define
    \begin{align*}
        D_1:=Y(M_f^{-1}-M_h^{-1})=
        \begin{pmatrix}
            4 & 6 & 1 & 5 & 5\\
            0 & 0 & 0 & 0 & 0\\
            0 & 0 & 0 & 0 & 0
        \end{pmatrix}\quad \textup{ and }\quad 
        D_2:=Y(M_f^3-M_h^3)=
        \begin{pmatrix}
            0 & 0 & 0 & 0 & 0\\
            0 & 0 & 0 & 0 & 0\\
            5 & 6 & 4 & 3 & 3
        \end{pmatrix}.
    \end{align*}
    Therefore, we have 
    \[Y_3M_f^{-1}=Y_3M_h^{-1}+D_1\in \<P^{-1}E_{i,i}P:i \in [5]\>+\<D_1\>,\] 
    and 
    \[ Y_3M_f^{3}=Y_3M_h^{3}+D_2\in \<P^{-1}E_{i,i}P:i \in [5]\>+\<D_2\>.\]
    This implies that a $7$-base for the tensor $(Y_3M_f^{-1}\mid Y_3\mid Y_3M_f\mid Y_3M_f^2\mid Y_3M_f^3)$ is \[\<Y_3P^{-1}E_{i,i}P:i \in [5]\>+\<D_1,D_2\>.\]
\end{example}

We conclude this section with an example that shows that in general for $r>3$, an $(m+r-1)$-base constructed as in Theorem \ref{thm:jajaext} cannot be enlarged to an $(m+r-1+j)$-base for a tensor of the of the form $(M^{-1}\mid I\mid M \mid M^2 \mid \ldots \mid M^{m-2})$, $M\in\GL_m(\K)$.

\begin{example}
	Let $f:=x^5+x+4, g:=(x-1)(x-2)(x-3)(x-4)(x-5)\in\F_7[x]$. Let
	\begin{equation*}
    M_f=\begin{pmatrix}
			0 & 1 & 0 & 0 & 0\\
			0 & 0 & 1 & 0 & 0\\
			0 & 0 & 0 & 1 & 0\\
			0 & 0 & 0 & 0 & 1\\
			3 & 6 & 0 & 0 & 0
		\end{pmatrix},\qquad 
		M_g=\begin{pmatrix}
			0 & 1 & 0 & 0 & 0\\
			0 & 0 & 1 & 0 & 0\\
			0 & 0 & 0 & 1 & 0\\
			0 & 0 & 0 & 0 & 1\\
			1 & 6 & 1 & 6 & 1
		\end{pmatrix},\qquad
		P:=\begin{pmatrix}
			3 & 6 & 0 & 4 & 1\\
			2 & 2 & 6 & 3 & 1\\
			5 & 6 & 0 & 2 & 1\\
			4 & 5 & 3 & 1 & 1\\
			1 & 0 & 1 & 0 & 1
		\end{pmatrix}
		\in\F_7^{5\times 5}.
	\end{equation*}
	$M_f$ and $M_g$ are the companion matrices of $f$ and $g$, respectively,
	while $P\in GL_5(\F_7)$ diagonalizes $M_g$, that is: 
	\begin{equation*}
		PM_gP^{-1}=\begin{pmatrix}
			5 & 0 & 0 & 0 & 0\\
			0 & 4 & 0 & 0 & 0\\
			0 & 0 & 3 & 0 & 0\\
			0 & 0 & 0 & 2 & 0\\
			0 & 0 & 0 & 0 & 1\\
		\end{pmatrix}.
	\end{equation*}
	Let $D_1:=M_f-M_g$ and $D_2:=M_f^{-1}-M_g^{-1}$, so that
	\begin{equation*}
		D_1:=\begin{pmatrix}
			0&0&0&0&0\\
			0&0&0&0&0\\
			0&0&0&0&0\\
			0&0&0&0&0\\
			2&0&6&1&6
		\end{pmatrix},\qquad
		D_2:=\begin{pmatrix}
			4&1&6&1&4\\
			0&0&0&0&0\\
			0&0&0&0&0\\
			0&0&0&0&0\\
			0&0&0&0&0
		\end{pmatrix}.
	\end{equation*}
	Let 
	\[\A:=\{P^{-1}E_{1,1}P,P^{-1}E_{2,2}P,P^{-1}E_{3,3}P,P^{-1}E_{4,4}P,P^{-1}E_{5,5}P\}\cup\{D_1,D_2\},\] 
	one can check that $M_f^{-1},I,M\in\<\A\>$ and there is no rank-1 matrix $N$ such that 
	$M^j \in \<\A\cup\{N\}\>$ for $j \in \{\pm 2,\pm 3\}$.
\end{example}

\section{Dual Tensors}

In \cite{atkinson1983ranks}, the authors applied Kronecker's theory of matrix pencils to show that the tensor-rank of any $1$-nondegenerate tensor in $\K^{(mn-2)\times n\times m}$ is $mn-2$ unless it is equivalent to an $(mn-2)\times n\times m$ tensor $X$ satisfying $X_{j,1,1}+X_{j,2,2}=0$ and $X_{j,1,2}=0$ for all $1\leq j\leq mn-2$. In other words, they showed that the first slice space of any such tensor is perfect. 
Inspired by this result, we show that some families of
$(mn-s)\times n\times m$ tensors are perfect and give explicit constructions
of $(mn-s)$-bases for such tensors for $s\geq 2$. We apply new methods to obtain these results, as Kronecker's theory no longer applies in this case.

Before proceeding, we remark that it is easy to see   that $\Knm$ is perfect. Moreover, it has been proved in \cite{atkinson1979maximal} that $mn-1$ is an upper bound for the tensor-rank of any $(mn-1)$-dimensional matrix space. By the {\em tensor-rank bound} (see Theorem \ref{prop:trkbound}) it follows that every subspace of $\Knm$ of dimension $mn-1$ is perfect.

\begin{remark}
	\label{rem:gendual}
	Let $X\leq\Knm$, $P\in\GL_n(\K)$, $Q\in\GL_m(\K)$, $L\in X$, $N\in X^\perp$ and define $PXQ:=\{PNQ:N \in X\}$. One can easily check that, by the properties of the trace, we have
	\begin{equation*}
		\Tr(PLQ((P^t)^{-1}N(Q^t)^{-1})^t)=\Tr(PLQQ^{-1}N^tP^{-1})=\Tr(PLN^tP^{-1})=\Tr(LN^t)=0.
	\end{equation*}
	This implies that the dual of $PXQ$ is $(P^{t})^{-1}X^\perp(Q^t)^{-1}$. 
   As a consequence, all the results in the remainder of this section can be easily adapted for equivalent tensors. In particular we have that, if $\A$ is a perfect base of $X^\perp$ then $(P^{t})^{-1}\A(Q^t)^{-1}$ is a perfect base of $(PXQ)^\perp$. 
\end{remark}

In the remainder we use the following notation.

\begin{notation}\label{not:JE}
	Let $\gamma\in\K^\times $. We denote by $J,\E(\gamma)\in\Kmm$ the matrices defined by
	\begin{equation*}
		J:=\left(
		\begin{array}{c|c}
		\renewcommand{\arraystretch}{1}
			0 & 1 \\\hline
		    I_{m-1} & 0
		\end{array}
		\right), \qquad 
		\E(\gamma):=\left(
		\begin{array}{ccccc}
			\gamma^{m-1} & \gamma^{m-1} & \cdots & \gamma & 1\\
			-\gamma^{m} & -\gamma^{m} & \cdots & -\gamma^2 & -\gamma\\\hline\\
			\multicolumn{5}{c}{\raisebox{\dimexpr\normalbaselineskip-0.2\ht\strutbox-.5\height}[0pt][0pt]{\bigzero}}
		\end{array}\right).
	\end{equation*}
	It is easy to see that multiplying a matrix by $J$ on the left (respectively, on the right) corresponds to the cyclic shift $(12\ldots m)$ of the rows downwards by one (respectively, to the cyclic shift $(12\ldots m)$ of the columns left by one). Also multiplication on the left (respectively, on the right) by $J^t$
	corresponds to the cyclic shift of the rows upwards (respectively to a cyclic shift of the columns to the right).
\end{notation}

\begin{lemma}
\label{lem:rotM}
    The following holds for all $i,j\in \{0,\ldots,m-1\}$.
    \begin{equation*}
        \left(J^j\right)^tM^{i-j}=
        \begin{cases}
            \left(\begin{array}{c|c}
                0_{(m-i)\times i} & I_{m-i} \\\hline
                * & * 
            \end{array}\right) & \textup{ if } i\geq j,\\\\
            \left(\begin{array}{c|c|c}
                0_{(m-j)\times i} & I_{m-j} & 0_{(m-j)\times (j-i)}   \\\hline
                * & * & * 
            \end{array}\right)  & \textup{ if } i<j.
        \end{cases}
    \end{equation*}
\end{lemma}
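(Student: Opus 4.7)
The plan is to prove the lemma by a direct row-by-row computation, leaning on two structural facts. First, left multiplication by $(J^j)^t$ cycles the rows upward by $j$: for any $m\times m$ matrix $A$, row $s$ of $(J^j)^t A$ equals row $s+j$ of $A$ whenever $s+j\leq m$. This follows directly from the definition of $J$ in Notation~\ref{not:JE}. Second, the companion structure of $M$ yields $e_\ell^t M^k=e_{\ell+k}^t$ for all $\ell\geq 1$ and $k\geq 0$ with $\ell+k\leq m$; this is an immediate induction from the base case $e_\ell^t M=e_{\ell+1}^t$ (valid for $\ell\leq m-1$), which itself reads off the block $(0\mid I_{m-1})$ in the top $m-1$ rows of $M$.

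For the case $i\geq j$, I will set $k:=i-j$ and apply the second fact to identify the top $m-k$ rows of $M^k$ as $e_{k+1}^t,\ldots,e_m^t$. Cycling these upward by $j$ via the first fact gives that the top $m-i$ rows of $(J^j)^t M^{i-j}$ are $e_{i+1}^t,\ldots,e_m^t$, which assembled as a block is precisely $(0_{(m-i)\times i}\mid I_{m-i})$; the bottom $i$ rows are the starred block and require no further analysis.

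For the case $i<j$, the exponent $i-j$ is negative, so $M$ is implicitly assumed invertible. Rather than reasoning about $M^{-1}$ directly, I will identify the relevant rows indirectly. For each $s\in\{1,\ldots,m-j\}$, the first fact gives row $s$ of $(J^j)^t M^{i-j}$ as row $s+j$ of $M^{i-j}$, namely $e_{s+j}^t M^{i-j}$. The second fact yields $e_{s+i}^t M^{j-i}=e_{s+j}^t$ (the requisite bound $(s+i)+(j-i)=s+j\leq m$ holds because $s\leq m-j$). Right-multiplying this equality by $M^{i-j}$ gives $e_{s+j}^t M^{i-j}=e_{s+i}^t$, so the top $m-j$ rows of $(J^j)^t M^{i-j}$ are $e_{i+1}^t,\ldots,e_{i+m-j}^t$, which is exactly the block $(0_{(m-j)\times i}\mid I_{m-j}\mid 0_{(m-j)\times (j-i)})$.

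I expect the main obstacle to be keeping the index arithmetic straight across the two cases and consistently verifying that the bound $\ell+k\leq m$ required for the identity $e_\ell^t M^k=e_{\ell+k}^t$ holds whenever it is invoked. Once these bounds are in place, the block decomposition in both cases falls out mechanically, and the starred entries in the bottom rows need not be computed.
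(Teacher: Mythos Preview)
Your proof is correct and follows essentially the same approach as the paper's: both arguments combine the fact that left multiplication by $(J^j)^t$ cyclically shifts rows upward by $j$ with the companion structure of $M$ to identify the top block of $(J^j)^t M^{i-j}$. The only difference is granularity: the paper asserts the block form of $M^{i-j}$ directly in each case and then applies the row shift, whereas you work row by row via the identity $e_\ell^t M^k = e_{\ell+k}^t$ (and, for $i<j$, invert it by right-multiplying $e_{s+i}^t M^{j-i}=e_{s+j}^t$ through by $M^{i-j}$); this makes your index bookkeeping more explicit but is otherwise the same computation.
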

\begin{proof}
Recall that multiplying a matrix by $\left(J^j\right)^t$ on the left corresponds to the cyclic shift of the rows upwards by $j$. Let $i\geq j$ and notice that the matrix $M^{i-j}$ has the form 
    \begin{equation*}
        M^{i-j}=\left(\begin{array}{c|c}
                0_{(m-i+j)\times (i-j)} & I_{m-i+j} \\\hline
                * & * 
            \end{array}\right).
    \end{equation*}
    Multiplying $M^{i-j}$ on the left by $\left(J^j\right)^t$, results in an upwards cyclic shift of the rows of $M^{i-j}$ and so we have
    \begin{equation*}
        \left(J^j\right)^tM^{i-j}=
        \left(\begin{array}{c|c}
                0_{(m-i)\times i} & I_{m-i} \\\hline
                * & * 
            \end{array}\right).
    \end{equation*}
    On the other hand, let $i<j$ and note that the matrix $M^{i-j}$ has the form
    \begin{equation*}
        M^{i-j}=\left(\begin{array}{c|c}
            A & B  \\\hline
            I_{m-j+i} & 0_{(m-j+i)\times (j-i)} 
        \end{array}\right),
    \end{equation*}
    for some $A\in\K^{(j-i)\times (m-j+i)}$ and $B\in\GL_{j-i}(\K)$.
    Therefore, 
    \begin{equation*}
        \left(J^j\right)^tM^{i-j}=\left(\begin{array}{c|c|c}
                0_{(m-j)\times i} & I_{m-j} & 0_{(m-j)\times (j-i)}   \\\hline
                * & * & * 
            \end{array}\right).
    \end{equation*}
    This concludes the proof.
\end{proof}
	
The following is the main result of this section. 

\begin{theorem} \label{th:main}
\label{thm:base}
	Let $s\in [m-1]$, $|\K|\geq s+1$ and let $M\in\Kmm$ be invertible, i.e. $a_1\neq 0$ . The subspace $\<I,M,M^2,\ldots,M^{s-1}\>^\perp\leq\Kmm$ is perfect. In particular, an $(m^2-s)$-base for $\<I,M,M^2,\ldots,M^{s-1}\>^\perp$ is given by
	\begin{equation*}
		\A(\mS):=\,\{J^i\,E_{1,j}\,(M^{-i})^t:s+1\leq j\leq m,0\leq i\leq m-1\}\,\cup\,\{\,J^i\,\E(\gamma)\,(M^{-i})^t:0\leq i\leq m-2,\gamma\in\mS\}.
	\end{equation*}  
	where $\mS:=\{1,\gamma_1,\ldots,\gamma_{s-1}\}$ is a set of distinct elements of $\K^\times $.
\end{theorem}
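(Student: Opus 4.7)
The plan is to produce $\A(\mS)$ as a linearly independent family of $m^2-s$ rank-one matrices all lying in $\langle I, M, \ldots, M^{s-1}\rangle^\perp$. Since $I, M, \ldots, M^{s-1}$ are linearly independent (the minimal polynomial of the companion matrix $M$ has degree $m > s-1$), the dual space has dimension $m^2 - s$, so establishing these three properties will automatically force $\A(\mS)$ to be a basis of rank-one matrices, witnessing perfectness. First I would verify the rank-one property: each $J^i E_{1,j}(M^{-i})^t$ has rank one since $E_{1,j}$ does and $J, M$ are invertible, while $\E(\gamma) = u\,v(\gamma)^t$ with $u := e_1 - \gamma e_2$ and $v(\gamma) := (\gamma^{m-1}, \gamma^{m-2}, \ldots, \gamma, 1)^t$, giving the factorisation $J^i \E(\gamma)(M^{-i})^t = (J^i u)(M^{-i} v(\gamma))^t$. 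A direct count yields $|\A(\mS)| = (m-s)m + s(m-1) = m^2 - s$.

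Next, for orthogonality to $M^l$ with $l \in \{0, \ldots, s-1\}$, I would use cyclicity of the trace and an $E_{j,1}$-computation to reduce
\[ \Tr\bigl(M^l (J^i E_{1,j}(M^{-i})^t)^t\bigr) = \bigl((J^i)^t M^{l-i}\bigr)_{1,j}, \]
and then invoke Lemma \ref{lem:rotM} (in both cases $l\ge i$ and $l<i$) to identify the first row of $(J^i)^t M^{l-i}$ as the standard basis row with a $1$ in column $l+1$. Since $j \geq s+1 > l+1$ this entry is $0$. Similarly, using $\E(\gamma) = uv(\gamma)^t$, the trace $\Tr(M^l (J^i \E(\gamma)(M^{-i})^t)^t)$ reduces to $u^t (J^i)^t M^{l-i} v(\gamma)$; Lemma \ref{lem:rotM} specifies the first two rows of $(J^i)^t M^{l-i}$ as standard basis rows with $1$'s in columns $l+1$ and $l+2$, so the expression telescopes to $v(\gamma)_{l+1} - \gamma v(\gamma)_{l+2} = \gamma^{m-l-1} - \gamma\cdot \gamma^{m-l-2} = 0$.

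The core of the argument is linear independence. My key structural observation is that $A_{i,j} := J^i E_{1,j}(M^{-i})^t$ has only row $i+1$ nonzero, equal to $(M^{-i} e_j)^t$, whereas $B_{i,\gamma} := J^i \E(\gamma)(M^{-i})^t = (J^i u)(M^{-i} v(\gamma))^t$ has exactly rows $i+1$ and $i+2$ nonzero (no cyclic wraparound since $i \leq m-2$), equal to $(M^{-i} v(\gamma))^t$ and $-\gamma (M^{-i} v(\gamma))^t$ respectively. Given a dependence $\sum c_{i,j} A_{i,j} + \sum d_{i,\gamma} B_{i,\gamma} = 0$, I would read the $k$-th row for each $k \in [m]$ and right-multiply by $(M^{k-1})^t$, obtaining
\[ \sum_{j=s+1}^{m} c_{k-1,j}\, e_j^t + \sum_{\gamma \in \mS} d_{k-1,\gamma}\, v(\gamma)^t \,-\, \sum_{\gamma \in \mS} \gamma\, d_{k-2,\gamma}\, (M v(\gamma))^t = 0, \]
dropping terms whose indices fall outside their allowed ranges. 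At $k=1$ the third sum is absent; reading off components $j = 1, \ldots, s$ yields the homogeneous system $\sum_{\gamma \in \mS} d_{0,\gamma}\, \gamma^{m-j} = 0$, whose coefficient matrix is Vandermonde up to column scaling, hence invertible since the $s$ elements of $\mS$ are distinct and nonzero. Therefore $d_{0,\gamma} = 0$, and then $c_{0,j} = 0$ from the remaining components. I then induct on $k$: once $d_{k-2,\gamma} = 0$ is known from the previous step, the $k$-th equation has the same shape as at $k=1$, and the same Vandermonde argument closes it; row $k=m$ involves only $c_{m-1,j}$ and the already-vanishing $d_{m-2,\gamma}$'s, finishing the argument.

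The hard part will be Step~3: the clean cascade works precisely because $B_{i,\gamma}$ touches only two rows and because right-multiplication by $(M^{k-1})^t$ cancels the $M^{-(k-1)}$ and $M^{-(k-2)}$ factors, leaving a Vandermonde-type $s\times s$ system at every step. The invertibility of that matrix is exactly where the hypothesis $|\K| \geq s+1$ enters, through the existence of $s$ distinct nonzero elements in $\mS$.
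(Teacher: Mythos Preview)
Your proposal is correct and follows essentially the same approach as the paper: rank-one and trace-orthogonality are verified via Lemma~\ref{lem:rotM} exactly as in the paper's Lemma~\ref{lem:orthog}, and linear independence is deduced from the staircase row-support pattern of the matrices together with the $s\times s$ Vandermonde block coming from $\mS$. The only cosmetic difference is that the paper assembles the vector representations of the nonzero rows into a single $(m^2-s)\times m^2$ block upper-triangular matrix and reads off its rank from the diagonal blocks, whereas you phrase the same computation as a row-by-row induction after right-multiplying row $k$ by $(M^{k-1})^t$; the two arguments are reformulations of one another.
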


\begin{definition}
    Let $A = (a_{ij}) \in \K^{n \times m}$.
    We define the {\bf vector representation} of $A$ to be the vector
    \[(a_{11},\ldots,a_{1m},a_{21},\ldots,a_{2m},\ldots,a_{n1},\ldots,a_{nm}) \in \K^{nm}.\]
\end{definition}

For ease of exposition, we first prove our main result for $m=4$ and $s=3$ in Section \ref{sec:ex} and we devote Section \ref{sec:proof} to giving a complete proof of Theorem \ref{thm:base}.

\subsection{Proof for \texorpdfstring{$m=4$}{} and \texorpdfstring{$s=3$}{}}
\label{sec:ex}

Throughout this section we let $m=4$ and $s=3$. Moreover, we assume $|\K|\geq 4$. Let $\mS:=\{1,\alpha,\beta\}$ be a set of distinct elements of $\K^\times $ and let $M\in\Kmm$ be invertible, i.e. $a_1\neq 0$. We will show that a $13$-base for $\<I,M,M^2\>^\perp$ is $\A:=\{A_i: i\in [13]\}$, where
	\begin{equation*}
	\scalemath{0.76}{
	\begin{array}{llll}
		A_1:=\left(\begin {array}{rrrr} 
		1&1&1&1\\ 
		-1&-1&-1&-1\\
		0&0&0&0\\ 
		0&0&0&0\\
		\end {array}
		\right),&
		A_2:=\left(\begin {array}{rrrr} 
		\alpha^3 &\alpha^2 &\alpha &1\\ 
		-\alpha^4 &-\alpha^3 &-\alpha^2 &-\alpha \\
		0&0&0&0\\ 
		0&0&0&0\\
		\end {array}
		\right),&
		A_3:=\left(\begin {array}{rrrr} 
		\beta^3 &\beta^2 &\beta &1\\ 
		-\beta^4 &-\beta^3 &-\beta^2 &-\beta \\
		0&0&0&0\\ 
		0&0&0&0\\
		\end {array}
		\right),&
		A_4:=\left(\begin {array}{rrrr} 
		0&0&0&1\\ 
		0&0&0&0\\
		0&0&0&0\\ 
		0&0&0&0\\
		\end {array}
		\right),
	\end{array}
	}
	\end{equation*}
	and the remaining matrices of $\A$ are 
	\begin{equation*}
	\renewcommand\arraystretch{1.4}
	\begin{array}{*{4}{L{3.5cm}}}
		A_5:=JA_1(M^{-1})^t, & A_6:=JA_2(M^{-1})^t, &
		A_7:=JA_3(M^{-1})^t, & A_8:=JA_4(M^{-1})^t,\\
		A_9:=J^2A_1(M^{-2})^t,& A_{10}:=J^2A_2(M^{-2})^t, &
		A_{11}:=J^2A_3(M^{-2})^t, & A_{12}:=J^2A_4(M^{-2})^t,\\
		A_{13}:=J^3A_4(M^{-3})^t.
	\end{array}
	\end{equation*}
	Clearly, $A_1,A_2,A_3,A_4$ all have rank 1. Moreover, they are trace-orthogonal to each of the matrices $I,M,M^2$; that is, $\Tr(A_j)=\Tr(A_jM^t)=\Tr(A_j(M^2)^t)=0$ for $j \in [4]$.
	Since $J$ and $M$ are both invertible, it follows that $\rk(A_j)=1$ for each $j \in [13]$.
	Let us examine the first 2 rows of $(J^i)^tM^rM^{-i}=(J^i)^tM^{r-i}$.
	As a consequence of Lemma \ref{lem:rotM}, for any $i,r\in\{0,1,2\}$ we have
	    \begin{equation*}
	    (J^i)^tM^{r-i}=\left(
	        \begin{array}{c|c|c}
	             0_{2\times r} & I_2 & 0_{2\times (m-r-2)}  \\\hline
	             * & * &* 
	        \end{array}\right)\qquad \textup{and} \qquad
	        (J^{3})^tM^{r-3}=\left(
	        \begin{array}{c|c|c}
	             0_{1\times r} & 1 & 0_{1\times (3-r)}  \\\hline
	             * & * &* 
	        \end{array}\right).
	    \end{equation*}
    Therefore, for each $r\in\{0,1,2\}$, we have
    \begin{align*}
    \begin{array}{lll}
         (J^i)^tM^{r-i}(A_1)^t=\left(
         \renewcommand{\arraystretch}{1.2}
            \begin{array}{c|c}
                \begin{array}{cc}1 & -1\\1&-1\end{array}& 0  \\\hline
                * & 0 
            \end{array}
         \right) && \textup{ for all } i\in\{0,1,2\},\\\\
         (J^i)^tM^{r-i}(A_2)^t=\left(
         \renewcommand{\arraystretch}{1.2}
            \begin{array}{c|c}
                \begin{array}{cc}\alpha^{m-r-1} & -\alpha^{m-r}\\\alpha^{m-r-2}&-\alpha^{m-r-1}\end{array}& 0  \\\hline
                * & 0 
            \end{array}
         \right) &\qquad\qquad & \textup{ for all } i\in\{0,1,2\},\\\\
         (J^i)^tM^{r-i}(A_3)^t=\left(
         \renewcommand{\arraystretch}{1.2}
            \begin{array}{c|c}
                \begin{array}{cc}\beta^{m-r-1} & -\beta^{m-r}\\\beta^{m-r-2}&-\beta^{m-r-1}\end{array}& 0  \\\hline
                * & 0 
            \end{array}
         \right) && \textup{ for all } i\in\{0,1,2\},\\\\
         (J^i)^tM^{r-i}(A_4)^t=\left(
         \renewcommand{\arraystretch}{1.2}
            \begin{array}{c|c}
               0& 0  \\\hline
                * & 0 
            \end{array}
         \right)
         && \textup{ for all } i\in\{0,1,2,3\}.
    \end{array}
    \end{align*}
	
	It follows that 
	\begin{itemize}[leftmargin=5.5ex]
	    \item $\Tr(J^i A_j(M^{-i})^t (M^r)^t)=\Tr((J^i)^tM^{r-i}(A_j)^t)=0$ for each $j \in [3]$ and $i,r\in\{0,1,2\}$;
	    \item $\Tr(J^i A_4(M^{-i})^t (M^r)^t)=\Tr((J^i)^tM^{r-i}(A_4)^t)=0$ for each $r\in\{0,1,2\}$ and $i\in\{0,1,2,3\}$.
	\end{itemize}
	Hence, we have $\<\A\> \leq \<I,M,M^2\>^\perp$. It remains to show that $\A$ is a set of $13$ linearly independent matrices. Consider the following disjoint subsets of $\A$. 
	\begin{align*}
		\A_{0}:=\{A_1,A_2,A_3,A_4\}, &&& \A_{1}:=\{A_5,A_6,A_7,A_8\}, &&& \A_{2}:=\{A_9,A_{10},A_{11},A_{12}\}, &&& \A_{3}:=\{A_{13}\}.
	\end{align*}
	We define the following matrices $B_0,\ldots,B_4$, whose rows are comprised of vector representations of selected submatrices of the members of $\A$.
	Let $B_0$ be the $4 \times 8$ matrix whose $i$-th row is the vector representation of the $2 \times 4$ submatrix of $A_i$ comprising its first 2 rows (which are both non-zero for $i\in [3]$).
	\begin{equation*}
	\begin{aligned}
		B_0&:=\left(\begin{array}{c|c}
			\begin{array}{cccc}
				1&1&1&1\\
				\alpha^3 &\alpha^2 &\alpha &1\\
				\beta^3 &\beta^2 &\beta &1\\
				0&0&0&1\\   
			\end{array}&
			\begin{array}{cccc}
				-1&-1&-1&-1\\
				-\alpha^4 &-\alpha^3 &-\alpha^2 &-\alpha\\
				-\beta^4 &-\beta^3 &-\beta^2 &-\beta\\
				0&0&0&0\\   
			\end{array}
		\end{array}\right):=\left(\begin{array}{c|c}B_0^{(1)}&B_0^{(2)}\end{array}\right),\\[5pt]
	\end{aligned}
	\end{equation*}
	Now let $B_0^{(1)}$ and $B_0^{(1)}$ be the $4 \times 4$ submatrices of $B_0$ comprising its 4 leftmost and 4 rightmost columns respectively. Let $\overline{B_0}^{(1)} = (0 \quad0\quad0\quad1)$, which is the  
	last row of $B_0^{(1)}$. Now define $B_1,B_2,B_3 $ as follows.
	\begin{equation*}
	\begin{aligned}
		B_1&:=\left(\begin{array}{c|c}B_0^{(1)}\left(M^{-1}\right)^t&B_0^{(2)}\left(M^{-1}\right)^t\end{array}\right),\\[5pt]
		B_2&:=\left(\begin{array}{c|c}B_0^{(1)}\left(M^{-2}\right)^t&B_0^{(2)}\left(M^{-2}\right)^t\end{array}\right),\\[5pt]
		B_3&:=\left(\begin{array}{c}\overline{B_0}^{(1)}\left(M^{-3}\right)^t\end{array}\right).
		\end{aligned}
	\end{equation*}
	
	Note that each $i$-th row of $B_1$ is the vector representation
	of the $2 \times 4$ submatrix of $A_{4+i}$ comprising its 2nd and 3rd rows, which are non-zero for $i\in [3]$, etc.
	
	Observe that the $3\times 3$ submatrix of $B_0^{(1)}$ given by the first $3$ rows and columns is a Vandermonde matrix, since $1,\alpha,\beta$ are distinct in $\K$; hence $\rk(B_0^{(1)})=4$. Since $M\in\GL_m(\K)$, we therefore have $\rk(B_0)=\rk(B_1)=\rk(B_2)=4$ and $\rk(B_1)=1$. 
	Finally, let $B$ be the $13\times 16$ matrix defined as follows:
	\begin{equation*}
		\renewcommand\arraystretch{2}
 		B=\left(\begin{array}{*{4}{C{2.3cm}}}
		\cblue B_0^{(1)} & \temppp{B_0^{(2)}} & \temppp{\phantom{\hspace{1.5cm}}} &	 \bigzero
		\\\hhline{-|-|-|}
		& \temppp{\cblue B_0^{(1)}\left(M^{-1}\right)^t} & \tempp{B_0^{(2)}\left(M^{-1}\right)^t}
		\\\hhline{~-|-|-|}
		& & \tempp{\cblue B_0^{(1)}\left(M^{-2}\right)^t} & B_0^{(2)}\left(M^{-2}\right)^t
		\\\hhline{~~-|-|}
		\bigzero & & \temp{} & \cor\overline{B_0}^{(1)}\left(M^{-3}\right)^t
	\end{array}\right).
	\end{equation*}
	It is now apparent that
	\begin{equation*}
		\rk(B)=\rk\left(B_0^{(1)}\right)+\rk\left(B_0^{(1)}\left(M^{-1}\right)^t\right)+\rk\left(B_0^{(1)}\left(M^{-2}\right)^t\right)+\rk\left(\overline{B_0}^{(1)}\left(M^{-3}\right)^t\right)=4+4+4+1=13.
	\end{equation*}
	This implies that $\A$ is a set of $13$ linearly independent rank-$1$ matrices that are trace-orthogonal to $\{I,M,M^2\}$ and therefore $\A$ is a $13$-base for $\<I,M,M^2\>^\perp$.

\subsection{Proof of Theorem \ref{thm:base}}
\label{sec:proof}

We first establish some notation and give some preparatory lemmas. 
Recall that $s$ is an integer satisfying $s\in [m-1]$, that we assume $|\K|\geq s+1$, that $\mS:=\{1,\gamma_1,\ldots,\gamma_{s-1}\}$ is a set of distinct elements of $\K^\times$ and that $M\in\Kmm$ in invertible.
Let $J$ and $\E(\gamma)$ be defined as in Notation \ref{not:JE}. As defined in the statement of Theorem \ref{thm:base}, recall that
\begin{equation*}
		\A(\mS):=\,\{J^i\,E_{1,j}\,(M^{-i})^t:s+1\leq j\leq m,\,0\leq i\leq m-1\}\,\cup\,\{\,J^i\,\E(\gamma)\,(M^{-i})^t:0\leq i\leq m-2,\gamma\in\mS\}.
	\end{equation*}
\begin{notation}

\label{not:A}
	For each integer $i \in \{0,\ldots,m-2\}$, we define  the sets
	\begin{align*}
		\A_i(\mS)&:=\{J^i\,E_{1,j}\,(M^{-i})^t: s+1\leq j\leq m\}\cup\{J^i\,\E(\gamma)\,(M^{-i})^t:s\in\mS\},\\
		\A_{m-1}(\mS)&:=\{J^{m-1}\,E_{1,j}\,(M^{1-m})^t: s+1\leq j\leq m\}.
	\end{align*}
	The $\A_0,\ldots,\A_{m-1}$ are pairwise disjoint and their union is $\A$. For ease of notation, in the following we write $\A$ and $\A_i$ instead of $\A(\mS)$ and $\A_i(\mS)$ for all $0\leq i\leq m-1$.
\end{notation}

\begin{lemma}
\label{lem:orthog}
	If $P\in\A$ then $\rk(P)=1$ and $P\in\<I,M,\ldots,M^{s-1}\>^\perp$.
\end{lemma}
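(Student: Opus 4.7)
The plan is to address the two claims in Lemma~\ref{lem:orthog} separately. The rank computation is immediate, while trace-orthogonality reduces, via standard trace manipulations, to reading off one or two specific entries of the matrix $(J^i)^t M^{r-i}$, whose shape is already controlled by Lemma~\ref{lem:rotM}.

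For the rank claim, $E_{1,j}$ has rank one trivially, and $\E(\gamma)=uv^t$ with $u:=(1,-\gamma,0,\ldots,0)^t$ and $v:=(\gamma^{m-1},\gamma^{m-2},\ldots,\gamma,1)^t$, so $\E(\gamma)$ also has rank one. Because $J$ is a permutation matrix and $M$ is invertible by hypothesis, left multiplication by $J^i$ and right multiplication by $(M^{-i})^t$ preserve rank; hence every $P\in\A$ has rank one.

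For the orthogonality, the plan is to show $\Tr(P(M^r)^t)=0$ for each $P\in\A$ and each $r\in\{0,\ldots,s-1\}$. Using $(M^{-i})^t(M^r)^t=(M^{r-i})^t$ together with the cyclic property of the trace and $\Tr(A)=\Tr(A^t)$, for a matrix of the first type, $P=J^iE_{1,j}(M^{-i})^t$, one obtains
\begin{equation*}
    \Tr(P(M^r)^t)=\Tr\bigl((J^i)^t M^{r-i} E_{j,1}\bigr)=\bigl((J^i)^t M^{r-i}\bigr)_{1,j}.
\end{equation*}
Since $r\leq s-1\leq m-2$, Lemma~\ref{lem:rotM} (in either case $r\geq i$ or $r<i$) forces the first row of $(J^i)^t M^{r-i}$ to be the standard basis row vector $e_{r+1}^t$, so the displayed entry equals $\delta_{j,r+1}=0$ because $j\geq s+1>r+1$. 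For a matrix of the second type, $P=J^i\E(\gamma)(M^{-i})^t$ with $i\leq m-2$, using $\E(\gamma)=uv^t$ the same manipulation reduces the trace to the scalar
\begin{equation*}
    u^t(J^i)^t M^{r-i} v=v_{r+1}-\gamma\, v_{r+2}=\gamma^{m-r-1}-\gamma\cdot\gamma^{m-r-2}=0,
\end{equation*}
where the conditions $i\leq m-2$ and $r\leq m-2$ ensure via Lemma~\ref{lem:rotM} that the first two rows of $(J^i)^t M^{r-i}$ are $e_{r+1}^t$ and $e_{r+2}^t$.

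The main obstacle will be the careful bookkeeping across the two cases of Lemma~\ref{lem:rotM}: one must ensure that the top one or two rows of $(J^i)^t M^{r-i}$ lie entirely inside the structured identity block rather than the unspecified $*$ block below. The hypotheses $r\leq s-1\leq m-2$ and (for the $\E(\gamma)$ generators) $i\leq m-2$ are exactly what is required for this to be automatic.
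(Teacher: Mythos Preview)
Your proof is correct and follows essentially the same route as the paper: both arguments reduce the orthogonality check, via the cyclic trace identity, to reading off the top one or two rows of $(J^i)^t M^{r-i}$ from Lemma~\ref{lem:rotM}. Your use of the explicit rank-one factorization $\E(\gamma)=uv^t$ to collapse the second case to a scalar is a slightly cleaner variant of the paper's computation, which instead displays the full matrix product $(J^i)^t M^{r-i}\,\E(\gamma)^t$ and reads its diagonal.
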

\begin{proof}
	Clearly, for each $ j \in \{s+1,\ldots, m\}$ and $\gamma\in\mS$, we have $\rk(E_{1,j})=\rk(\E(\gamma))=1$. It can be checked that $E_{1,j},\E(\gamma)\in\<I,M,\ldots,M^{s-1}\>^\perp$. Since $J^i$ and $M^i$ are invertible for each $i$, each member of 
	$\A$ has rank equal to one.
	Finally, as a consequence of Lemma \ref{lem:rotM} for any $i\in\{0,\ldots,m-1\}$ and $r \in \{0,\ldots,s-1\}$, we have
	    \begin{equation*}
	    (J^j)^tM^{r-j}=\left(
	        \begin{array}{c|c|c}
	             0_{2 \times r} & I_2 & 0_{m-r-2}  \\\hline
	             * & * &* 
	        \end{array}\right)\quad \textup{ and } \quad
	        (J^{m-1})^tM^{r+1-m}=\left(
	        \begin{array}{c|c|c}
	              0_{1\times r} & 1 & 0_{1\times (m-1-r)}  \\\hline
	             * & * &* 
	        \end{array}\right).
	    \end{equation*}
	
	Therefore, for each $r\in\{0,\ldots, s-1\}$, we have
	\begin{equation}\label{eq:JME1j}
	    (J^i)^tM^{r-i}(E_{1,j})^t=\left(
         \renewcommand{\arraystretch}{1.2}
            \begin{array}{c|c}
               0& 0_{1\times (m-1)}  \\\hline
                * & 0_{(m-1)\times (m-1)} 
            \end{array}
         \right),
	\end{equation}
	for all $i\in\{0,\ldots,m-1\}$ and $j\in\{s+1,\ldots,m\}$, and 
	\begin{equation}\label{eq:JMEp}
        (J^i)^tM^{r-i}(\E(\gamma))^t=\left(
         \renewcommand{\arraystretch}{1.2}
            \begin{array}{c|c}
                \begin{array}{cc}\gamma^{m-r-1} & -\gamma^{m-r}\\\gamma^{m-r-2}&-\gamma^{m-r-1}\end{array}& 0_{2\times(m-2)}  \\\hline
                * & 0_{(m-2)\times (m-2)} 
            \end{array}
         \right),
	\end{equation}
	for all $i\in\{0,\ldots,m-2\}$ and $\gamma\in\mS$. Therefore, for each $r\in\{0,\ldots, s-1\}$, we get the following. 
	\[
	\Tr(M^r(J^j E_{1,j} (M^{-i})^t)^t)=\Tr(M^{r-i} (E_{1,j})^t (J^j)^t)=\Tr((J^j)^t M^{r-i} (E_{1,j})^t)=0,
	\]
	for all $i\in\{0,\ldots,m-1\}$ and $j\in\{s+1,\ldots,m\}$ and
	\[
	\Tr(M^r(J^i \E(\gamma) (M^{-i})^t)^t)=\Tr(M^{r-i} \E(\gamma)^t (J^i)^t ) = 0,
	\]
	for all $i\in\{0,\ldots,m-2\}$ and $\gamma\in\mS$. Therefore, $\<\A\> \leq \<I,M,\ldots,M^{s-1}\>^\perp$.
\end{proof}

\begin{lemma}
\label{lem:A0}
	$\A_0$ is a set of linearly independent matrices.
\end{lemma}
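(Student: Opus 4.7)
The plan is to exploit the very restricted support of the matrices in $\A_0$: only the first two rows can be nonzero. Suppose we have a linear dependence
\[
\sum_{j=s+1}^{m} a_j\, E_{1,j} \;+\; \sum_{\gamma\in\mS} b_\gamma\, \E(\gamma) \;=\; 0,
\]
with $a_j,b_\gamma\in\K$. Since the $E_{1,j}$ are supported only on the first row, reading off the second row of the above identity gives
\[
\sum_{\gamma\in\mS} b_\gamma\,\bigl(-\gamma^{m},\,-\gamma^{m-1},\,\ldots,\,-\gamma^{2},\,-\gamma\bigr) \;=\; 0.
\]
Inspecting the entries in columns $1,2,\ldots,s$ of this row vector yields the square system
\[
\sum_{\gamma\in\mS} b_\gamma\, \gamma^{m-k+1} \;=\; 0 \qquad \textup{for } k=1,\ldots,s.
\]

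The key step is to recognise this as a Vandermonde system. Writing $\mS=\{1,\gamma_1,\ldots,\gamma_{s-1}\}$ and factoring $\gamma_i^{m-s+1}$ out of the $i$-th row, the coefficient matrix is the product of a diagonal matrix with invertible diagonal (because every $\gamma\in\mS$ lies in $\K^\times$) and the Vandermonde matrix on the $s$ distinct elements of $\mS$. Since the elements of $\mS$ are pairwise distinct, the Vandermonde determinant is nonzero, and we conclude $b_\gamma=0$ for every $\gamma\in\mS$.

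Substituting back, the dependence reduces to $\sum_{j=s+1}^{m} a_j E_{1,j}=0$, and reading off the first row in the columns $s+1,\ldots,m$ forces $a_j=0$ for all $j$. This gives the required independence of the $m$ matrices in $\A_0$. The only nontrivial ingredient is the Vandermonde step, which is precisely why the hypothesis $|\K|\geq s+1$ and the nonzero distinctness of the $\gamma$'s enter the construction.
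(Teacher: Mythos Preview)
Your proof is correct and follows essentially the same approach as the paper: both reduce linear independence of $\A_0$ to the nonvanishing of a Vandermonde determinant on the $s$ distinct elements of $\mS$, together with the obvious independence of the matrix units $E_{1,j}$. The only cosmetic difference is that the paper packages the argument as a rank computation of a block matrix $B_0$ built from the \emph{first} rows of the $\E(\gamma)$ (a matrix that is then reused in the proof of Theorem~\ref{thm:base}), whereas you argue directly from a linear dependence and use the \emph{second} rows; in either case the relevant $s\times s$ block is a diagonal scaling of a Vandermonde matrix and hence invertible.
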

\begin{proof}
	Let $B_0$ be the $m\times 2m$ matrix whose rows are the vector representations of the nonzero rows (the first two rows) of the matrices in $\A_0$, i.e.
	\begin{equation}
		\label{eq:B0}
	\scalemath{0.95}{
		B_0:=\left(\renewcommand\arraystretch{1.5}
		\begin{array}{c|c|c}
			\cblue L_1 & L_2 & L_3\\\hline
			0 & \cor I_{m-s} & 0
		\end{array}\right)=
		\left(
		\begin{array}{c|c|c}
			\cblue \begin{array}{ccc}
				1 & \cdots & 1\\
				\gamma_1^{m-1} & \cdots & \gamma_1^{m-s}\\
				\vdots & &\vdots \\
				\gamma_{s-1}^{m-1} &  \cdots & \gamma_{s-1}^{m-s}
			\end{array}&
			\begin{array}{ccc}
				1 & \cdots & 1\\
				\gamma_1^{m-s-1} & \cdots &  1\\
				\vdots & & \vdots\\
				\gamma_{s-1}^{m-s-1} & \cdots &  1
			\end{array}&
			\begin{array}{ccc}
				1 & \cdots & -1\\
				-\gamma_1^{m} &  \cdots & -\gamma_1\\
				\vdots & &\vdots\\
				-\gamma_{s-1}^{m} &  \cdots & -\gamma_{s-1}
			\end{array}\Bstrut{-27pt}\\\hline
			\bigzero &
			\cor \begin{array}{C{1.2cm}cc}
				1 & &\\
				& \ddots & \\
				& & 1
			\end{array}&
			\bigzero\Bstrut{27pt}
		\end{array}\right).
	}
	\end{equation}
	Note that $L_1\in\K^{s\times s}$, $L_2\in\K^{s\times(m-s)}$, $L_3\in\K^{s\times m}$. 
	In particular, $L_1$ is row-equivalent to a full-rank Vandermonde matrix since the $1,\gamma_1,\ldots,\gamma_{s-1}$ are distinct in $\K^\times$. Therefore, we have $\rk(B_0)=\rk(L_1)+\rk(I_{m-s})=s+m-s=m$ which implies the statement.
\end{proof}

We are now ready to give a proof of Theorem \ref{thm:base}.

\begin{proof}[Proof of Theorem \ref{thm:base}]
	Lemma \ref{lem:orthog} shows that all the matrices in $\A$ have rank $1$ and are trace-orthogonal to $M^i$ for all $0\leq i\leq s-1$. It remains to show that $\A$ is a set of $m^2-s$ linearly independent matrices. Let $B_0,L_1,L_2,L_3$ be defined as in \eqref{eq:B0} and define matrices $B_0^{(1)},B_0^{(2)}\in\K^{m\times m}$ and $\overline{B_0}^{(1)}\in\K^{(m-s)\times m}$ by
	\begin{equation*}
		B_0^{(1)}:=\left(\begin{array}{c|c}
			L_1 & L_2\\\hline
			0 & I_{m-s}
		\end{array}\right),
		\qquad
		B_0^{(2)}:=\left(\begin{array}{c}
			L_3\\\hline
			0
		\end{array}\right),
		\qquad
		\overline{B_0}^{(1)}:=\left(\begin{array}{c|c}
			0 & I_{m-s}
		\end{array}\right).
	\end{equation*}
	For $0\leq i\leq m-2$, let $B_i$ be the $m\times 2m$ (resp. $B_{m-1}$ be the $(m-s)\times m$) matrix whose rows are the vector representation of the $(i+1)$-th and the $(i+2)$-th, (resp. the last row) of the matrices in $\A_i$ (resp. $\A_{m-1}$). These are the non-zero rows of the matrices $A_i$. 
	Since $\A_{4i+p} = J^i A_{p}(M^{-i})^t$ for $0 \leq i \leq m-2, p \in [m]$, we have
	\begin{itemize}[leftmargin=5.5ex]
		\item $\displaystyle B_i=\left(\begin{array}{c|c}
			B_0^{(1)}(M^{-i})^t & B_0^{(2)}(M^{-i})^t
		\end{array}\right)$ for all $0\leq i\leq m-2$ and
		\item $\displaystyle B_{m-1}=\left(\begin{array}{c}
			\overline{B_0}^{(1)}(M^{1-m})^t 
		\end{array}\right)$.
	\end{itemize}
	Finally, let $B$ be the $(m^2-s)\times m^2$ matrix defined by:
	\begin{equation*}
	\label{eq:B}
		B=\left(
		\renewcommand\arraystretch{2}
		\begin{array}{*{5}{C{1.5cm}}}
				\cblue B_0^{(1)} & \temppp{B_0^{(2)}} &\temppp{} & &\scalebox{1.5}{\bigzero}\\\hhline{-|-|-|}
				& \temppp{\cblue B_0^{(1)}(M^{-1})^t} & \temppp{B_0^{(2)}(M^{-1})^t} & \temppp{} &\\\hhline{~-|-|-|}
				&& \temppp{\cblue\rotatebox{20}{$\ddots$}} & \temppp{\rotatebox{20}{$\ddots$}} &\temppp{}\\\hhline{~~-|-|-|}
				& & & \temppp{\cblue B_0^{(1)}(M^{2-m})^t} & \temppp{B_0^{(2)}(M^{2-m})^t}\\\hhline{~~~-|-|}
				\raisebox{\dimexpr\normalbaselineskip-0.5\ht\strutbox}[0pt][0pt]{\scalebox{1.5}\bigzero}& & & & \temppp{\cor \overline{B_0}^{(1)}(M^{1-m})^t }
 			\end{array}\right).
	\end{equation*}
	The rank of $B$ is given by the sum of the ranks of the blocks on the main diagonal, that is
	\begin{equation*}
		\rk(B)=\rk\left(\overline{B_0}^{(1)}(M^{1-m})^t\right)+\sum_{i=0}^{m-2}\rk\left(B_0^{(1)}(M^{-i})^t\right)=m-s+(m-1)m=m^2-s.
	\end{equation*}
	The result now follows.
\end{proof}

\subsection{Consequences}
In this section we discuss some implications of Theorem \ref{thm:base}. In particular, we derive some new results and we show that some known results of complexity theory arise as corollaries of Theorem \ref{thm:base}. We start with a generalization of our results for $n\times m$ rectangular matrices.

\begin{corollary}
	Let $1\leq s\leq m-1$, $|\K|\geq s+1$, $B\in\GL_m(\Fq)$ and $\mS:=\{1,\gamma_1,\ldots,\gamma_{s-1}\}$ be a set of distinct elements of $\K^\times $. Then $\<B^{-1},B^{-1}M,\ldots,B^{-1}M^{s-1}\>^\perp\leq \Knm$ is a perfect space. 
	In particular, an $(nm-s)$-base for $\<B^{-1},B^{-1}M,\ldots,B^{-1}M^{s-1}\>^\perp$ is given by
	\begin{equation*}
		\overline{\A}:=\{B^tJ^i\,E_{1,j}\,(M^{-i})^t:s+1\leq j\leq m,0\leq i\leq n-1\}\,\cup\,\{\,B^tJ^i\,\E(\gamma)\,(M^{-i})^t:0\leq i\leq n-2,\gamma\in\mS\}.
	\end{equation*}
\end{corollary}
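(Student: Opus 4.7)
The plan is to reduce the corollary to Theorem~\ref{th:main} in two stages: first strip the left factor $B^{-1}$ via the duality identity of Remark~\ref{rem:gendual}, then adapt the construction of the perfect base of Theorem~\ref{thm:base} from the square $m\times m$ setting to the rectangular $n\times m$ setting (interpreting $B^{-1}M^{i}$ in the statement as $B^{-1}Y_nM^{i}\in\Knm$ in the spirit of Corollary~\ref{cor:n=2,3}, where $Y_n=(I_n\mid 0)$).

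For the first stage, I would apply Remark~\ref{rem:gendual} with $P=B^{-1}\in\GL_n(\K)$ and $Q=I_m$. Since the dual of $PXQ$ equals $(P^t)^{-1}X^\perp(Q^t)^{-1}=B^tX^\perp$, it suffices to construct a perfect base $\A$ for $\<Y_n,Y_nM,\ldots,Y_nM^{s-1}\>^\perp\leq\Knm$; then $\overline{\A}=B^t\A$ is a perfect base for $(B^{-1}\<Y_n,\ldots,Y_nM^{s-1}\>)^\perp$. Because $B^t$ is invertible, this operation preserves both rank-$1$ membership and linear independence, so the counting $|\overline{\A}|=|\A|=nm-s$ is automatic.

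For the second stage, I would mimic the proof of Theorem~\ref{thm:base} with $J\in\K^{n\times n}$ (the cyclic shift of the $n$ available rows) and with $E_{1,j},\E(\gamma)\in\Knm$ (having support only in the first row and first two rows, respectively). The same three checks carry over:
\begin{enumerate}[leftmargin=5.5ex]
    \item \emph{Rank one:} $E_{1,j}$ and $\E(\gamma)$ have rank $1$, and multiplication by the invertible $J^i$ and $(M^{-i})^t$ preserves this.
    \item \emph{Orthogonality:} the analog of Lemma~\ref{lem:rotM} for the rectangular $J^i$ shows that the first $n$ rows of $(J^i)^tM^{r-i}$ have the same sparsity pattern in the leading $s$ columns as in the square case, which kills the nonzero support of $(E_{1,j})^t$ (for $j>s$) and $(\E(\gamma))^t$, yielding $\Tr(Y_nM^r(J^iX(M^{-i})^t)^t)=0$ for all $0\leq r\leq s-1$.
    \item \emph{Linear independence:} assemble the $(nm-s)\times nm$ matrix whose rows are the vector representations of the nonzero rows of the elements of $\A$, grouped by shift index $i$. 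As in the proof of Theorem~\ref{thm:base}, this matrix is block upper-triangular with diagonal blocks $B_0^{(1)}(M^{-i})^t$ of rank $m$ for $i=0,\ldots,n-2$ and a final block $\overline{B_0}^{(1)}(M^{1-n})^t$ of rank $m-s$, giving total rank $(n-1)m+(m-s)=nm-s$.
\end{enumerate}

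The main obstacle is verifying the block upper-triangular structure in step~(iii) for rectangular $J\in\K^{n\times n}$: one must check that after the $i$-th cyclic shift, the two nonzero rows of $J^iE_{1,j}(M^{-i})^t$ and $J^i\E(\gamma)(M^{-i})^t$ land precisely in rows $i+1$ and $i+2$ (mod~$n$), and that the final index $i=n-1$ contributes only the $m-s$ rows associated with the $E_{1,j}$'s (the $\E(\gamma)$-terms being omitted since they would spill over). Once this bookkeeping is confirmed, the rank computation and hence the corollary follows immediately.
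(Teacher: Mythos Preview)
Your first stage is exactly the paper's entire proof: apply Remark~\ref{rem:gendual} with $P=B^{-1}$ and $Q=I_m$ to the space $\<I,M,\ldots,M^{s-1}\>$, so that the perfect base $\A(\mS)$ of its dual furnished by Theorem~\ref{thm:base} is carried to $B^t\A(\mS)=\overline{\A}$. The paper's argument is literally those two lines.

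Your second stage rests on a misreading. The hypothesis is $B\in\GL_m(\K)$, and every factor in the elements of $\overline{\A}$---namely $B^t$, $J^i$, $E_{1,j}$, $(M^{-i})^t$---is $m\times m$ by Notation~\ref{not:JE} and the standing conventions of the paper; the occurrences of $n$ in the index ranges and in the ambient space $\Knm$ are typographical slips for $m$. Your reinterpretation $B^{-1}M^i\leadsto B^{-1}Y_nM^i$ with $B\in\GL_n$ contradicts the stated $B\in\GL_m$. The genuinely rectangular analogue $\<Y_n,Y_nM,\ldots,Y_nM^{s-1}\>^\perp\leq\Knm$ is the subject of the \emph{next} result in the paper, Corollary~\ref{cor:basenm}, and even there the paper does not re-run Theorem~\ref{thm:base} with a rectangular cyclic shift as you sketch: it simply left-multiplies the already-constructed square base $\A(\mS)\subseteq\Kmm$ by $Y_n$, observes that the matrices with shift index $i\geq n$ (respectively $i\geq n-1$ for the $\E(\gamma)$-terms) vanish under this truncation, and counts what remains.
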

\begin{proof}
    Let $V = \<I,M,\ldots,M^{s-1}\>^\perp$. By Remark \ref{rem:gendual}, we have
    \[
    \<B^{-1},B^{-1}M,\ldots,B^{-1}M^{s-1}\>^\perp = (\{ B^{-1}T : T \in V \})^\perp =
    \{ B^t U : U \in V^\perp \}.
    \]
    The result is thus an immediate consequence of Theorem \ref{thm:base}.
\end{proof}

\begin{corollary}
\label{cor:basenm}
	Let $1\leq s\leq m-1$, $|\K|\geq s+1$. Then $\<Y_n,Y_nM,\ldots,Y_nM^{s-1}\>^\perp\leq \Knm$ is a perfect space. 
	In particular, an $(nm-s)$-base for $\<Y_n,Y_nM,\ldots,Y_nM^{s-1}\>^\perp$ is given by
	\begin{equation*}
		\overline{\A}:=\{Y_nJ^i\,E_{1,j}\,\left(M^{-i}\right)^t:s+1\leq j\leq m,0\leq i\leq n-1\}\,\cup\,\{\,Y_nJ^i\,\E(\gamma)\,\left(M^{-i}\right)^t:0\leq i\leq n-2,\gamma\in\mS\}.
	\end{equation*}
	where $\mS:=\{1,\gamma_1,\ldots,\gamma_{s-1}\}$ is a set of distinct elements of $\K^\times$.
\end{corollary}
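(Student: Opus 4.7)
The plan is to deduce the corollary from Theorem \ref{thm:base} by identifying, within the $(m^2-s)$-base $\A(\mS)$ constructed there for $\<I,M,\ldots,M^{s-1}\>^\perp\leq\Kmm$, the subset $\B$ of matrices whose last $m-n$ rows vanish, and then applying the row-selection map $Y_n\,\cdot\,$. Because left-multiplication by $J^i$ cyclically shifts rows downward by $i$ (Notation \ref{not:JE}) and right-multiplication by the invertible matrix $(M^{-i})^t$ preserves row support, $J^iE_{1,j}(M^{-i})^t$ has a single nonzero row at position $i+1$, while $J^i\E(\gamma)(M^{-i})^t$ has nonzero rows only at positions $i+1$ and $i+2$. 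Requiring these row indices to lie inside $\{1,\ldots,n\}$ forces precisely the restrictions $0\leq i\leq n-1$ in the $E_{1,j}$ family and $0\leq i\leq n-2$ in the $\E(\gamma)$ family, so $\B$ has cardinality $(m-s)n+s(n-1)=nm-s$ and $\overline{\A}=\{Y_nA:A\in\B\}$.

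Rank-one-ness and orthogonality are then inherited from $\A(\mS)$. Since each $A\in\B$ has zeros in rows $n+1,\ldots,m$, the operation $A\mapsto Y_nA$ merely discards zero rows, hence $\rk(Y_nA)=\rk(A)=1$ by Lemma \ref{lem:orthog}. For orthogonality, an entrywise expansion gives
\[
\Tr\bigl(Y_nM^r(Y_nA)^t\bigr)=\sum_{i=1}^n\sum_{j=1}^m(M^r)_{ij}A_{ij}=\sum_{i=1}^m\sum_{j=1}^m(M^r)_{ij}A_{ij}=\Tr(M^rA^t)
\]
for $0\leq r\leq s-1$, where the second equality uses that rows $n+1,\ldots,m$ of $A$ vanish; the right-hand side is zero by Lemma \ref{lem:orthog}, so $\overline{\A}\subseteq\<Y_n,Y_nM,\ldots,Y_nM^{s-1}\>^\perp$.

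Linear independence of $\overline{\A}$ reduces to that of $\B$ by a short injectivity argument: if $\sum_{A\in\B}c_A(Y_nA)=0$, then $\sum_{A\in\B}c_AA$ has its first $n$ rows zero and, by construction of $\B$, its last $m-n$ rows zero, hence $\sum_{A\in\B}c_AA=0$; since $\B\subseteq\A(\mS)$ is linearly independent by Theorem \ref{thm:base}, every $c_A$ vanishes. To upgrade this to an $(nm-s)$-base one still needs $\dim\<Y_n,Y_nM,\ldots,Y_nM^{s-1}\>=s$, which together with $|\overline{\A}|=nm-s$ forces $\mathrm{span}(\overline{\A})=\<Y_n,Y_nM,\ldots,Y_nM^{s-1}\>^\perp$. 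This equality of dimensions is immediate from the identity $e_1^tM^i=e_{i+1}^t$ for $0\leq i\leq m-1$: the first rows of $Y_n,Y_nM,\ldots,Y_nM^{s-1}$ are $e_1^t,\ldots,e_s^t$, linearly independent whenever $s\leq m-1$.

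The main hurdle is the indexing step at the beginning, matching the ranges $0\leq i\leq n-1$ and $0\leq i\leq n-2$ to the row supports of $J^iE_{1,j}$ and $J^i\E(\gamma)$ respectively through the cyclic shift of Notation \ref{not:JE}; once that identification is pinned down correctly, every remaining ingredient is a direct transfer of properties of $\A(\mS)$ under the map $Y_n\,\cdot\,$.
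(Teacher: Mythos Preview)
Your proposal is correct and follows essentially the same strategy as the paper's proof: identify inside the $(m^2-s)$-base $\A(\mS)$ of Theorem \ref{thm:base} those matrices whose last $m-n$ rows vanish, apply $Y_n$, and transfer rank, orthogonality and independence. Your treatment is in fact slightly more careful than the paper's, since you spell out the linear-independence reduction via injectivity of $A\mapsto Y_nA$ on $\B$ and verify $\dim\<Y_n,\ldots,Y_nM^{s-1}\>=s$ via $e_1^tM^i=e_{i+1}^t$, whereas the paper leaves both points implicit.
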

\begin{proof}
	Theorem \ref{thm:base} implies that
	\begin{equation*}
		\A:=\{J^i\,E_{1,j}\,\left(M^{-i}\right)^t:s+1\leq j\leq m,0\leq i\leq m-1\}\,\cup\,\{\,J^i\,\E(\gamma)\,\left(M^{-i}\right)^t:0\leq i\leq m-2,\gamma\in\mS\}
	\end{equation*}
	is an $(m^2-s)$-base for the space $\<I,M,\ldots,M^{s-1}\>^\perp$ and, as we already observed, multiplying by $Y_n$ is equivalent to deleting the last $m-n$ rows. For each $i\in \{0,\ldots, m-1\}$, let $\A_i$ be the subset of $\A$ defined as in Notation \ref{not:A}. 
	For each $i\in \{0,\ldots,n-2\}$, define
	\[ 
	 \overline\A_i:=\{Y_n A : A \in \A_i\}, \overline\A_{n-1}:=\{Y_n(J^{n-1})^t\,E_{1,j}\,\left(M^{n-1}\right)^t:s+1\leq j\leq m\}
	 \text{ and }
		\overline\A:=\bigcup_{i=0}^{n-1}\overline\A_i.
	\]
	Let $i \in \{0,\ldots,n-1\}, j \in \{s+1\}$ and consider $J^i E_{1,j} (M^{-i})^t  = E_{i+1,j} (M^{-i})^t$. 
	Clearly this matrix has the $j$-th row of $(M^{-1})^t$ as its $(i+1)$-th row, and has all other rows equal to zero. In particular, since $i\leq n-1$, its last $m-n$ rows are zero. Similarly,
	The last $m-n$ rows of $J^i\,\E(\gamma)\,\left(M^{-i}\right)^t$ are all-zero for 
	$i \in \{0,\ldots,n-2\}, \gamma \in \mS$. Therefore,
	the matrices in $\overline\A$ correspond to matrices in $\A$ whose last $m-n$ rows are zero. 
	Therefore, for any matrix $P \in \bar{\A}$, we have $\Tr(Y_n M^r P^t) = \Tr(M^r P^t) = 0$, by Theorem  
	\ref{thm:base}. In other words, the matrices in $\overline\A$ have rank $1$ and are trace orthogonal to all $Y_n,Y_nM,\ldots,Y_nM^{s-1}$. 
	It remains to show that $\overline\A$ has cardinality $nm-s$. Since the sets $\overline\A_i$'s are disjoint, we have
	\begin{equation*}
		|\A|=\left|\bigcup_{i=0}^{n-1}\overline\A_i\right|=\sum_{i=0}^{n-1}\left|\overline{\A}_i\right|=n(m-2)+s(n-1)=nm-s,
	\end{equation*}
	which implies the statement.
\end{proof}

\begin{remark}\label{rem:cornmbase}
	Note that for $i \in\{n,\ldots,m-1\}$ and each $P\in\A_i$ we have $Y_nP=0$. Moreover, for all $\gamma\in\mS$ we have 
	\begin{equation*}
		\Tr\left(Y_nM^{n-1}\left(Y_n(J^{n-1})^t\E(\gamma)(M^{1-n})^t\right)^t\right)=\Tr(Y_n\E(\gamma)J^{n-1}Y_n^t)\neq 0.
	\end{equation*}
	Therefore, these matrices cannot be in $\<Y_n,Y_nM,\ldots,Y_nM^{s-1}\>^\perp$.
\end{remark}

In \cite[Lemma~1]{atkinson1983ranks}, the authors proved that $\<Y_n,Y_nM\>^\perp\leq\K^{n\times (n+1)}$ is perfect and they gave an explicit construction of an $(n^2-n+1)$-base for this space. Although not explicitly stated, their result holds for $\ch(\K)\neq2$. We retrieve this result as a consequence of Corollary \ref{cor:basenm} for a certain choice of the parameters and we show that our construction is independent of the characteristic of the field.

\begin{proposition}[{{\cite[Lemma~1]{atkinson1983ranks}}}]
\label{prop:atk}
	Let $\ch(\K)\neq 2$. The space $\<Y_n,Y_nM\>^\perp\leq\K^{n\times (n+1)}$ is perfect and an $(n^2+n-2)$-base for it is given by the set
	\begin{align*}
		\A:=&\{Y_nE_{i,j}:i\in [n],j\in [n+1],j\neq i, j\neq i+1 \}\\
		&\cup\{Y_n(E_{i,i}+E_{i,i+1}+E_{i,i+2}-E_{i+1,i}-E_{i+1,i+1}-E_{i+1,i+2}):i\in [n-1]\}\\
		&\cup\{Y_n(E_{i,i}-E_{i,i+1}+E_{i,i+2}+E_{i+1,i}-E_{i+1,i+1}+E_{i+1,i+2}):i\in [n-1]\}.
	\end{align*}
\end{proposition}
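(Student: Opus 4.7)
The plan is to verify directly that every matrix in $\A$ has rank $1$ and lies in $\langle Y_n, Y_nM\rangle^\perp$, and then to establish the linear independence of the $|\A| = n(n-1)+2(n-1) = n^2+n-2$ matrices of $\A$. Since $n^2+n-2 = n(n+1)-2 = \dim\langle Y_n,Y_nM\rangle^\perp$, this will simultaneously show that the space is perfect and exhibit $\A$ as an $(n^2+n-2)$-base.

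First I would use the companion-matrix form of $M$ to observe that $Y_n = (I_n \mid 0)$ and $Y_nM = (0 \mid I_n)$ in $\K^{n\times(n+1)}$, so that a matrix $N=(N_{ij}) \in \K^{n\times(n+1)}$ belongs to $\langle Y_n,Y_nM\rangle^\perp$ if and only if $\sum_{i=1}^n N_{ii}=0$ and $\sum_{i=1}^n N_{i,i+1}=0$. With this explicit description, the rank-one and orthogonality checks for the three families comprising $\A$ become routine. The elementary matrices $Y_nE_{i,j}$ of the first family have rank one by definition, and since $j \notin \{i,i+1\}$ they contribute nothing to the diagonal or superdiagonal sum. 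Each matrix of the second family has two nonzero rows that are negatives of one another (so rank $1$), with diagonal entries $+1,-1$ at positions $(i,i),(i+1,i+1)$ and superdiagonal entries $+1,-1$ at positions $(i,i+1),(i+1,i+2)$. Each matrix of the third family has two identical nonzero rows (rank $1$), with diagonal entries $+1,-1$ and superdiagonal entries $-1,+1$. Both sums vanish in every case.

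For linear independence I would argue modulo the subspace $V_1 \leq \K^{n\times (n+1)}$ consisting of all matrices vanishing on the diagonal and on the superdiagonal. The first family is a set of $n(n-1)$ distinct elementary matrices supported off the diagonal and superdiagonal, hence a basis of $V_1$, whose dimension is $n(n+1)-2n=n(n-1)$. Working in the quotient $\K^{n\times(n+1)}/V_1 \cong \K^n \oplus \K^n$ (with coordinates given by the $n$ diagonal and $n$ superdiagonal entries) and writing $v_i := e_i - e_{i+1} \in \K^n$, the second- and third-family matrices with parameter $i \in [n-1]$ project to $(v_i,v_i)$ and $(v_i,-v_i)$ respectively. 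Since $\ch(\K)\neq 2$, forming their pairwise sums and differences yields $2(v_i,0)$ and $2(0,v_i)$; as $\{v_i : i \in [n-1]\}$ is a basis of the hyperplane of zero-sum vectors in $\K^n$, these $2(n-1)$ images span a subspace of dimension $2(n-1)$ in the quotient. Combining with the basis of $V_1$ furnished by the first family yields $n(n-1)+2(n-1) = n^2+n-2$ linearly independent rank-one matrices in $\langle Y_n,Y_nM\rangle^\perp$, completing the proof.

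The argument is essentially bookkeeping and presents no serious technical obstacle; the only delicate point is pinpointing where $\ch(\K)\neq 2$ enters. It is needed precisely in the splitting step that decouples the second- and third-family images in the quotient modulo $V_1$. In characteristic two those two families coincide and $\A$ collapses to only $n^2-1$ matrices, so Proposition \ref{prop:atk} genuinely requires this hypothesis, in contrast with the characteristic-free construction of Corollary \ref{cor:basenm}.
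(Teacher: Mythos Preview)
Your proof is correct. Note, however, that the paper does not supply its own proof of this proposition: it is quoted verbatim as a result of Atkinson and Lloyd and stated without proof. The paper's contribution appears in the Remark immediately following, where it is shown that the base $\A$ of Atkinson and Lloyd is essentially the special case $m=n+1$, $s=2$, $\mS=\{1,-1\}$ of Corollary~\ref{cor:basenm}, and it is pointed out that Corollary~\ref{cor:basenm} also covers characteristic~$2$ (for $|\K|\geq 3$), which the set $\A$ above cannot. Your direct verification via the quotient by the off-diagonal subspace $V_1$ is self-contained and correctly isolates where $\ch(\K)\neq 2$ is needed; it is presumably close in spirit to the original argument of Atkinson and Lloyd, rather than to the machinery the present paper develops.
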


\begin{remark}
Observe that if $\ch(\K)= 2$ then 
	\begin{equation*}
		E_{i,i}+E_{i,i+1}+E_{i,i+2}-E_{i+1,i}-E_{i+1,i+1}-E_{i+1,i+2}=E_{i,i}-E_{i,i+1}+E_{i,i+2}+E_{i+1,i}-E_{i+1,i+1}+E_{i+1,i+2},
	\end{equation*}
	for all $1\leq i\leq n-1$, and so $|\A|=n^2-1$. As a consequence, the construction for the base and therefore the proof of \cite[Lemma~1]{atkinson1983ranks} fails. Corollary \ref{cor:basenm} shows that the space $\<Y_n,Y_nM\>^\perp\leq\K^{n\times (n+1)}$ is perfect for fields of arbitrary characteristic, provided
	the field has at least 3 elements.
	
Moreover, for $\ch(\K)\neq 2$, the construction of the perfect base in \cite[Lemma~1]{atkinson1983ranks} is almost a special case of Corollary \ref{cor:basenm} for $m=n+1$, with the choice $\mS=\{1,-1\}$. 
Explicitly, it is straightforward to check that 
	\[\{Y_nE_{i,j}:1\leq i\leq n,1\leq j\leq n+1,j\neq i, j\neq i+1 \}=\{Y_nJ^i\,E_{1,j}\,(M^{-1})^t:1\leq i\leq n, 3\leq j\leq n+1\},\]
		and that for all $i\in [n-1]$, we have
	\begin{align*}
	    E_{i,i}+E_{i,i+1}+E_{i,i+2}-E_{i+1,i}-E_{i+1,i+1}-E_{i+1,i+2}=J^{i+1}\,\left(\E(1)-\sum_{j=4}^{n+1}E_{1,j}\right)\,\left(M^{-i-1}\right)^t,\\
		E_{i,i}-E_{i,i+1}+E_{i,i+2}+E_{i+1,i}-E_{i+1,i+1}+E_{i+1,i+2}=(-1)^{n-i}J^{i+1}\,\left(\E(1)-\sum_{j=4}^{n+1}E_{1,j}\right)\,\left(M^{-i-1}\right)^t.
	\end{align*}	
	Of course $|\mS|=2$ only if $\ch(\K)\neq 2$.
\end{remark}

\begin{remark}
\label{rem:baseM}
    Recall that the bottom row of $M$ is assumed to be $(a_1,a_2,\ldots,a_m)$.
	Let $1\leq s\leq m-1$, $|\K|\geq s+1$, $a_1\neq 0$ and let $\mS:=\{1,\gamma_1,\ldots,\gamma_{s-1}\}$ be a set of distinct elements of $\K^\times$. Then $\<Y_n,Y_nM,\ldots,Y_nM^{s-1}\>^\perp\leq\Knm$ is perfect. In particular an $(nm-s)$-base for $\<Y_n,Y_nM,\ldots,Y_nM^{s-1}\>^\perp$ is given by
	\begin{equation*}
		\A:=\,\{Y_nJ^i\,E_{1,j}\,(M^{-i})^t:s+1\leq j\leq m,0\leq i\leq n-1\}\,\cup\,\{\,Y_nJ^i\,\E(\gamma)\,(M^{-i})^t:0\leq i\leq n-2,\gamma\in\mS\}.
	\end{equation*}  
\end{remark}

\begin{lemma}
\label{lem:2x2}
	Let $M\in\K^{2\times 2}$. Then $\<I,M\>^\perp$ is perfect if one of the following conditions holds.
	\begin{enumerate}[label={(\arabic*)},leftmargin=5.5ex]
		\item \label{item1:2x2} $a_1=0$ and $a_2\neq 0$.
		\item \label{item2:2x2} $a_1\neq 0$ and the polynomial $a_1x^2+a_2x-1\in\K[x]$ has two distinct roots in $\K^\times$.
	\end{enumerate}
\end{lemma}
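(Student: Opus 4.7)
The plan is to compute $\langle I, M\rangle^\perp$ explicitly as a $2$-dimensional subspace of $\K^{2\times 2}$, parametrize it, and then determine when it contains two linearly independent rank-$1$ matrices.

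First I would solve the linear system defining $\langle I, M\rangle^\perp$. Writing a general matrix $N = \begin{pmatrix} n_{11} & n_{12} \\ n_{21} & n_{22}\end{pmatrix}$, the condition $\Tr(IN^t)=0$ gives $n_{22}=-n_{11}$ and the condition $\Tr(MN^t)=0$ gives $n_{12} = a_2 n_{11} - a_1 n_{21}$. Setting $\alpha := n_{11}$ and $\beta := n_{21}$, every element of $\langle I,M\rangle^\perp$ has the form
\[
N(\alpha,\beta) \;=\; \begin{pmatrix} \alpha & \alpha a_2 - \beta a_1 \\ \beta & -\alpha\end{pmatrix}.
\]

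Next I would compute $\det N(\alpha,\beta) = -\alpha^2 - \beta(\alpha a_2 - \beta a_1) = a_1\beta^2 - a_2\alpha\beta - \alpha^2$, so that $N(\alpha,\beta)$ has rank at most $1$ exactly when
\[
a_1\beta^2 - a_2\alpha\beta - \alpha^2 \;=\; 0. \qquad (\star)
\]
The space $\langle I,M\rangle^\perp$ is perfect if and only if $(\star)$ has two $\K$-linearly independent solutions $(\alpha,\beta)$ with $N(\alpha,\beta)\neq 0$.

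For case \ref{item1:2x2}, $a_1=0$ reduces $(\star)$ to $\alpha(\alpha+a_2\beta)=0$, which factors as $\alpha=0$ or $\alpha = -a_2\beta$; these yield the independent rank-$1$ matrices $N(0,1)$ and $N(-a_2,1)$ (note $a_2\neq 0$ ensures they are nonzero and independent). For case \ref{item2:2x2}, observe first that $\alpha=0$ in $(\star)$ forces $\beta=0$ since $a_1\neq 0$, so any nontrivial solution has $\alpha\neq 0$. Dividing $(\star)$ by $\alpha^2$ and setting $t=\beta/\alpha$ gives $a_1 t^2 - a_2 t - 1 = 0$; by hypothesis this polynomial has two distinct roots $t_1, t_2\in\K^\times$, producing two independent solutions $(1,t_1)$ and $(1,t_2)$ to $(\star)$, hence two independent rank-$1$ matrices in $\langle I,M\rangle^\perp$.

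There is no real obstacle here: the argument is essentially a one-variable discriminant analysis once the space is parametrised. The only subtlety worth flagging in the write-up is why the roots are required to lie in $\K^\times$ rather than just in $\K$: since the constant term of $a_1x^2+a_2x-1$ is $-1\neq 0$, zero is never a root, so the conditions ``distinct roots in $\K$'' and ``distinct roots in $\K^\times$'' coincide; this is exactly what ensures the corresponding matrices $N(\alpha,\beta)$ are nonzero (hence actually of rank $1$, not $0$).
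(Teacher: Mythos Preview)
Your argument is correct in structure, with one slip worth fixing: after dividing $(\star)$ by $\alpha^2$ you obtain $a_1 t^2 - a_2 t - 1 = 0$, whereas the hypothesis concerns $a_1 x^2 + a_2 x - 1$. The two conditions are equivalent via $t\mapsto -t$, so nothing is lost, but you should say so rather than silently identify the two polynomials. (Your closing remark about $\K^\times$ versus $\K$ is correct but the stated reason does not quite match your own construction: in your parametrisation $N(1,t)$ is nonzero for every $t$, so nothing further is needed to guarantee rank exactly $1$.)

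The paper takes a slightly different route. Rather than parametrising $\langle I,M\rangle^\perp$ and solving for its rank-$1$ locus, it simply writes down candidate rank-$1$ matrices and checks them: in case~(1) the pair $E_{2,1}$ and $\begin{pmatrix} a_2^{-1} & 1\\ -a_2^{-2} & -a_2^{-1}\end{pmatrix}$, and in case~(2) the matrices $\E(\gamma)=\begin{pmatrix}\gamma & 1\\ -\gamma^2 & -\gamma\end{pmatrix}$, for which one computes $\Tr(\E(\gamma)M^t)=-a_1\gamma^2-a_2\gamma+1$, so $\E(\gamma)\in\langle I,M\rangle^\perp$ exactly when $\gamma$ is a root of $a_1x^2+a_2x-1$. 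Your parametrise-then-solve approach is more self-contained and explains \emph{why} those particular matrices arise; the paper's verification approach is shorter and reuses the $\E(\gamma)$ construction already introduced for the main theorem.
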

\begin{proof}
    
	First note that if $a_1=0$ and $a_2\neq 0$ then then it is easy to check that the following is a $2$-base for $\<I,M\>^\perp$,
	\begin{equation*}
		\A:=\left\{\renewcommand\arraystretch{1.2}
		\begin{pmatrix}
			0 & 0 \\ 1 & 0 
		\end{pmatrix}, 
		\begin{pmatrix}
			a_2^{-1} & 1\\ -a_2^{-2} & -a_2^{-1}
		\end{pmatrix}\right\}.
	\end{equation*}
	This proves that if $a_1=0$ and $a_2\neq 0$ then $\<I,M\>^\perp$ is perfect. 
	Now assume that $a_1\neq 0$. Let $\gamma \in \K^\times$. We have 
	\[
	 \Tr(\E(\gamma)M^t) = 
	 \Tr\left( \begin{pmatrix}
			\gamma & 1 \\ -\gamma^2 & -\gamma 
		\end{pmatrix}
		\begin{pmatrix}
			0 & a_1 \\ 1 & a_2 
		\end{pmatrix}
	\right)	= -a_1\gamma^2 - a_2 \gamma +1
	\]
	 $\Tr(\E(\gamma)M^t)=0$ and $\rk(\E(\gamma)=1$ if and only if $\gamma$ is a root of $a_1x^2+a_2x-1\in\K[x]$. Clearly, $\E(\gamma)$, having zero trace, is orthogonal to $I$. 
	 Therefore, if the polynomial $a_1x^2+a_2x-1\in\K[x]$ has two distinct roots in $\K^\times$, say $\gamma_1,\gamma_2$, then $\A:=\{\E(\gamma_1),\E(\gamma_2)\}$ is a $2$-base for $\<I,M\>^\perp$.
\end{proof}

The following result shows that for any $s\in[m-1]$ the space $\<I,M,\ldots,M^{s-1}\>^\perp$ is perfect even when $M$ is not invertible.

\begin{corollary}
\label{cor:a1=0}
	Let $i:=\min(\{1\leq j\leq m:a_j\neq 0\})$. The following hold.
	\begin{enumerate}[label={(\arabic*)},leftmargin=5.5ex]
		\item \label{item1:subM} Let $1\leq s\leq m-i\leq m-1$ and let $|\K|\geq s+1$. 
		Then $\<I,M,\ldots,M^{s-1}\>^\perp$ is perfect.
		\item \label{item2:subM} If $i=m-1$ and the polynomial $a_{m-1}x^2+a_mx-1\in\K[x]$ has two distinct roots in $\K^\times $ then $\<I,M\>^\perp$ is perfect.
		\item If $i=m$ then $\<I\>^\perp$ and $\<I,M\>^\perp$ are perfect.
	\end{enumerate}
\end{corollary}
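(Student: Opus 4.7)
The plan is to reduce all three parts to Theorem~\ref{thm:base} by a linear-perturbation trick: I would replace the (possibly singular) matrix $M$ with the invertible surrogate $\tilde M := \alpha I + \beta M$ for suitably chosen $\alpha,\beta\in\K^\times$. Expanding
\[
\tilde M^j \;=\; \sum_{k=0}^{j}\binom{j}{k}\alpha^{j-k}\beta^{k} M^{k}
\]
shows $\tilde M^j\in\<I,M,\ldots,M^{s-1}\>$ for $0\leq j\leq s-1$; the change of basis between $\{I,\tilde M,\ldots,\tilde M^{s-1}\}$ and $\{I,M,\ldots,M^{s-1}\}$ is upper-triangular in the $M$-powers with non-zero diagonal (since $\beta\neq 0$), so
\[
\<I,\tilde M,\ldots,\tilde M^{s-1}\>=\<I,M,\ldots,M^{s-1}\>,
\]
and in particular the two orthogonal complements coincide. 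Because $\tilde M$ is a polynomial in the non-derogatory companion matrix $M$, it is itself non-derogatory, hence similar via some $P\in\GL_m(\K)$ to an invertible companion matrix $M'$. Theorem~\ref{thm:base} then provides a perfect base of $\<I,M',\ldots,M'^{s-1}\>^\perp$, which Remark~\ref{rem:gendual} transports to a perfect base of $\<I,M,\ldots,M^{s-1}\>^\perp$.

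For Part~(1), the choice of $(\alpha,\beta)$ must make $\tilde M$ invertible, i.e.\ must avoid $\beta=0$ and $\beta=-\alpha/\mu$ for each non-zero eigenvalue $\mu\in\K$ of $M$. The non-zero eigenvalues of $M$ are precisely the roots of $g(x):=x^{m-i+1}-a_m x^{m-i}-\cdots-a_i$ (which has non-zero constant term $-a_i$), so the number of forbidden ratios is bounded by the number of distinct roots of $g$ in $\K^\times$. Under the hypothesis $|\K|\geq s+1$ together with $s\leq m-i$ one verifies that enough admissible pairs remain, and the general reduction above closes the case.

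For Part~(2), the condition that $a_{m-1}x^{2}+a_{m}x-1$ has two distinct roots in $\K^\times$ is equivalent to the non-trivial factor $x^{2}-a_{m}x-a_{m-1}$ of the characteristic polynomial of $M$ having two distinct non-zero roots in $\K$ (since these roots are reciprocals of one another). Thus only two bad ratios $\beta/\alpha$ must be avoided, and the Part~(1) reduction applies with $s=2$.

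For Part~(3), the space $\<I\>^\perp$ has dimension $m^{2}-1$, which is precisely the upper bound on the tensor rank of an $(m^{2}-1)$-dimensional subspace of $\Kmm$ recalled before Remark~\ref{rem:gendual}; this makes $\<I\>^\perp$ automatically perfect. For $\<I,M\>^\perp$, the unique non-zero eigenvalue of $M$ is $a_{m}$, so only $\beta=0$ and $\beta=-\alpha/a_{m}$ are forbidden, and the Part~(1) reduction applies. The main technical hurdle throughout is verifying that an invertible $\tilde M$ can be produced within the given field-size constraints; once this bookkeeping is done, the rational canonical form of $\tilde M$ together with Remark~\ref{rem:gendual} automatically yields the required perfect base.
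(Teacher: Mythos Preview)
Your perturbation idea is elegant and genuinely different from the paper's block-decomposition approach, but the step you flag as ``bookkeeping'' is exactly where the argument breaks. You need $\tilde M=\alpha I+\beta M$ invertible, i.e.\ $-\alpha/\beta$ must avoid every eigenvalue of $M$ lying in $\K$. These eigenvalues are $0$ together with the $\K$-roots of $g$, so the forbidden set for $\alpha/\beta$ has size $1+|\{\text{roots of }g\text{ in }\K\}|$ (you omitted the $+1$ from the zero eigenvalue). The hypothesis $|\K|\ge s+1$ with $s\le m-i$ is far too weak to guarantee a surviving ratio: nothing prevents \emph{every} element of $\K$ from being an eigenvalue of $M$.

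Concretely, take $\K=\F_3$, $m=4$, $(a_1,a_2,a_3,a_4)=(0,1,1,2)$, so $i=2$ and we may choose $s=2$ (then $|\K|=3\ge s+1$). The characteristic polynomial factors as $x(x-1)(x-2)^2$ over $\F_3$, so the eigenvalues of $M$ are $0,1,2$, i.e.\ all of $\F_3$. Hence $\alpha I+\beta M$ is singular for every $(\alpha,\beta)\in\F_3^\times\times\F_3^\times$, and your reduction cannot even get started. The same obstruction hits Part~(2): with $\K=\F_3$, $m=3$, $(a_1,a_2,a_3)=(0,1,0)$ the polynomial $a_2x^2+a_3x-1=x^2-1$ has the two distinct roots $1,2\in\F_3^\times$, yet $M$ has eigenvalues $0,1,2$ and again no invertible $\tilde M$ exists. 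In Part~(3) you additionally need $|\K|\ge 3$ just to invoke Theorem~\ref{thm:base} with $s=2$, but the statement carries no field-size hypothesis at all.

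The paper sidesteps this entirely by \emph{not} perturbing $M$: it reads off a block structure in which the bottom-right $(m-i+1)\times(m-i+1)$ block $\overline M$ is an invertible companion matrix, builds a perfect base for $\langle I_{m-i+1},\overline M,\ldots,\overline M^{\,s-1}\rangle^\perp$ via Theorem~\ref{thm:base}/Corollary~\ref{cor:basenm}, builds a perfect base for the complementary ``shift'' block using $J^t$, and patches the two together with some $E_{k,j}$'s. Parts~(2) and~(3) are handled by the same block trick combined with the explicit $2\times2$ computation of Lemma~\ref{lem:2x2}, which imposes no field-size constraint. Your approach could be salvaged if you added a hypothesis strong enough to force an admissible $\alpha$ (e.g.\ $|\K|$ strictly larger than the number of $\K$-rational eigenvalues of $M$), but as stated the Corollary's hypotheses do not give you that room.
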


\begin{proof}
	For the case $i=1$, the result is given by Theorem \ref{thm:base}. Assume $2\leq i\leq m-1$ and observe that $M$ has the form
	\begin{equation*}
		M=\left(
		\begin{array}{*{6}{c}}
			\cblue 0 & \cblue I_{i-2} & \cblue 0 & \cblue 0 & \cblue \cdots & \cblue 0\\
			\cblue 0 & \cblue 0 & \cblue 1 & \cblue 0 & \cblue \cdots & \cblue 0\\\hhline{~~-|-|-|-}
			\cblue 0 & \cblue 0 & \temppp{\cgr 0} & \cgr 1 & \cgr  & \cgr 0\\\hline
			& &\temppp{\cor }  &\cor & \cor\ddots & \cor\\
			\multicolumn{2}{c}{\bigzero} &\temppp{\cor} &\cor &\cor &\cor 1\\
			& & \temppp{\cor a_i} &\cor a_{i+1} &\cor \cdots &\cor a_m
		\end{array}
		\right),
	\end{equation*}
	where the bottom right block is a full-rank $(m-i+1)\times (m-i+1)$ companion matrix, which we will call $\overline M$.	Let $\overline\A_1$ be the $(im-s)$-base for $\<Y_i,Y_iJ^t,\ldots,Y_i(J^t)^{s-1}\>^\perp$, $Y_i\in\K^{i\times m}$ and let $\overline\A_2$ be the $((m-i+1)^2-s)$-base for $\<I_{m-i+1},\overline M,\ldots, \overline M^{s-1}\>^\perp$, given by Corollary \ref{cor:basenm} for $M=J^t$ and $M=\overline M$ respectively,  and define
	\begin{equation*}
		\A_1:=\left\{\begin{pmatrix}A\\\hline0\end{pmatrix}: A\in\overline\A_1\right\} \qquad \textup{and} \qquad \A_2:=\left\{\left(\begin{array}{c|c}0 & 0\\\hline 0&A\end{array}\right): A\in\overline\A_2\right\}.
	\end{equation*}
	We claim that an $(m^2-s)$-base for  
	$\<I,M,\ldots,M^{s-1}\>^\perp$ is
	\begin{equation*}
		\A:=\A_1\cup\A_2\cup\{E_{k,j}:i+1\leq k\leq m, 1\leq j\leq i-1\}. 
	\end{equation*}
	The following are straightforward observations.
	\begin{itemize}[leftmargin=5.5ex]
		\item $\A_1\cap\{E_{k,j}:i+1\leq k\leq m, 1\leq j\leq i-1\}=\emptyset$.
		\item $\A_2\cap\{E_{k,j}:i+1\leq k\leq m, 1\leq j\leq i-1\}=\emptyset$.
		\item $\A_1\cap\A_2=\{E_{i,j}:i+s\leq j\leq m \}$.
	\end{itemize}
	Clearly, $\A_1$ and $\A_2$ are linearly independent, by construction and so by the above observations, $\A$ is linearly independent.
	Moreover $\<\A \> \leq \<I,M,\ldots,M^{s-1}\>^\perp$.
	It remains to show that $\A$ has cardinality $m^2-s$. We have
	\begin{align*}
		|\A|&=|\A_1\cup\A_2\cup\{E_{k,j}:i\leq k\leq m, 1\leq j\leq i-1\}|\\
		&=|\{E_{k,j}:i+1\leq k\leq m, 1\leq j\leq i-1\}|+|\A_1|+|\A_2|-|\{E_{i,j}:i+s\leq j\leq m \}|\\
		&=(i-1)(m-i)+im-s+(m-i+1)^2-s-(m-i-s+1)\\
		&=m^2-s,
	\end{align*}
	which implies \ref{item1:subM}. The remaining part of the statement follows from Lemma \ref{lem:2x2} using the same technique as in the first part of the proof.
\end{proof}

\section{The tensor rank of \texorpdfstring{$\Fqm$}{}-linear codes}

In the previous sections we constructed perfect bases of some families of $3$-tensors, which gives an upper bound on the tensor rank. In this section we first give an upper-bound on the tensor-rank for some families of tensors whose first slice space is $\Fqm$-linear. We then study the tensor-rank of $\Fqm$-linear \textit{rank-metric codes} with a particular focus on the $\Fqm$-linear \textit{generalized twisted Gabidulin codes}. We will conclude this section by establishing the existence of a new family of codes that attain the tensor-rank bound. In the remainder we use the following notation. 

\begin{notation}
\label{not:gamma}
	Let $\Gamma:=\{\gamma_1,\ldots,\gamma_m\}$ be a basis of $\Fqm$ over $\Fq$.  
	For each $i \in [n]$ and $\theta \in \Fqm$ we define the vector $\Gamma(\theta)\in\Fq^m$ by
	\begin{equation*}
	\theta=\sum_{i=1}^m\Gamma(\theta)_{j}\gamma_j.
	\end{equation*}
	For each $v\in\Fqm^n$ we denote by
	$\Gamma(v)$ the $n\times m$ matrix in $\Fqnm$ whose $i$-th row is $\Gamma(v_i)$. 
	For an $\Fqm$-subspace $V$ of $\Fqm^n$, we define the $\Fq$-subspace $\Gamma(V):=\<\Gamma(v):v \in V\>_{\Fq} \subseteq \Fqnm$.
\end{notation}

Clearly, the map $v\longmapsto \Gamma(v)$ is an $\Fq$-isomorphism and $\Gamma(V)$ is an $\Fq$-linear space.

\begin{remark}\label{rem:trkgab1}
Let $\alpha$ be a primitive element of $\Fqm$ and let $\Gamma=\{1,\alpha,\ldots,\alpha^{m-1}\}$.
    Let $M$ be the companion matrix of the minimal polynomial of $\alpha$ over $\Fq$.
    With respect to the basis $\Gamma$ for each $s \in \{1,\ldots,m-1\}$ and $i \in \{0,\ldots,m-1\}$ we have 
    \[\Gamma((1,\alpha,\ldots,\alpha^{s-1})\alpha^i ) = Y_s M^i .\]
	Then for each $s \in \{1,\ldots,m-1\}$, we have 
	\[\Gamma\left(\<(1,\alpha,\ldots,\alpha^{s-1})\>_{\Fqm}\right) = \< Y_s, Y_sM, Y_sM^2,\ldots,Y_sM^{m-1}\>_{\Fq}.\]
	 Clearly the rank of every non-zero element of $\Gamma\left(\<(1,\alpha,\ldots,\alpha^{s-1})\>_{\Fqm}\right)$
	 is $s$, since every non-trivial $\Fq$-linear combination of the $M^i$ is invertible.
\end{remark}

We now introduce some definitions and well-known facts on \textit{rank-metric codes} and we study the tensor-rank of some families of $\Fqm$-linear \textit{rank-metric codes} of which the above mentioned vector spaces are examples. The reader is referred to the survey \cite{gorla2019rank} for further details. 

\begin{definition}
	A (\textbf{rank-metric matrix}) \textbf{code} is a subspace $\C\leq\Fqnm$. The \textbf{minimum} (\textbf{rank}) \textbf{distance} of a non-zero code $\C$ is 
	\[d(\C):=\min\{\rk(X):X \in \C, X\neq 0\}\] 
	and for $\C:=\{0\}$, we define $d(\C)$ to be $n+1$. 
\end{definition}

From now on, unless otherwise stated, $\C\leq \Fqnm$ is a rank-metric code whose dimension is denoted by $k$ and its minimum distance by $d$. We refer to $n,m,k,d$ as the parameters of the code. 
Since every matrix code $\C$ is the first slice space of a tensor in $\Fqknm$, the $\trk(\C)$ is well defined (see \cite{byrne2019tensor} for a detailed treatment).

In \cite[Corollary~1]{kruskal1977three}, Kruskal gave a lower bound on the tensor-rank of a $1$-nondegenerate tensor. In the next result, we recall a coding-theoretical formulation of this bound given in \cite{byrne2019tensor}. 

\begin{theorem}[Tensor-rank bound]
\label{prop:trkbound}
 	We have $\trk(\C)\geq \dimq(\C)+d(\C)-1$.
\end{theorem}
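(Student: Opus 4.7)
The plan is to apply the characterization of tensor rank from Lemma \ref{lem:bnrs19} to reduce the statement to the classical Singleton bound for block codes. Suppose $R = \trk(\C)$ and let $X$ be any generator tensor of $\C$. By Lemma \ref{lem:bnrs19}, there exist $P \in \Fq^{n \times R}$, $Q \in \Fq^{m \times R}$, and diagonal matrices $D_1, \ldots, D_k \in \Fq^{R \times R}$ such that
\[\C = \ss_1(X) = \langle PD_1Q^T, \ldots, PD_kQ^T \rangle_{\Fq}.\]

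Next, I would introduce the auxiliary code $\mathcal{D} \leq \Fq^R$ defined as the $\Fq$-span of the diagonals of $D_1, \ldots, D_k$. The key observation is the factorization: the linear map $\langle D_1, \ldots, D_k \rangle \to \C$ sending $D \mapsto PDQ^T$ depends only on the diagonal of $D$, and since the images $PD_iQ^T$ span $\C$ which has dimension $k$, this map descends to a surjective $\Fq$-linear map $\mathcal{D} \to \C$ that must in fact be an isomorphism (forcing $\dim_{\Fq}(\mathcal{D}) = k$).

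For the weight comparison, if $d \in \mathcal{D}$ is the diagonal of some $D$ and has Hamming weight $t$ (nonzero positions $j_1, \ldots, j_t$), then
\[PDQ^T = \sum_{s=1}^t D_{j_s j_s}\, p_{j_s} q_{j_s}^T,\]
where $p_r, q_r$ denote the $r$-th columns of $P, Q$ respectively. Hence $\rk(PDQ^T) \leq t = \wt(d)$. This implies $d(\mathcal{D}) \geq d(\C)$, since any nonzero $d \in \mathcal{D}$ corresponds to a nonzero codeword of $\C$ whose rank is at most $\wt(d)$, and $\rk \geq d(\C)$ on nonzero codewords.

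Finally, I would apply the Singleton bound to the block code $\mathcal{D} \leq \Fq^R$:
\[\dim_{\Fq}(\mathcal{D}) + d(\mathcal{D}) - 1 \leq R.\]
Substituting $\dim_{\Fq}(\mathcal{D}) = k = \dim_{\Fq}(\C)$ and $d(\mathcal{D}) \geq d(\C)$ yields $\dim_{\Fq}(\C) + d(\C) - 1 \leq R = \trk(\C)$, as required. The only subtle step is verifying that $\mathcal{D} \to \C$ is an isomorphism, which follows from a dimension count once we know the $PD_iQ^T$ are linearly independent; the rest is routine.
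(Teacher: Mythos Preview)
Your argument is correct. The paper does not actually prove this bound; it merely records it as a known result, citing Kruskal \cite{kruskal1977three} and the coding-theoretic reformulation in \cite{byrne2019tensor}. What you have written is essentially the standard proof, and in fact the paper tacitly relies on the very same construction later in Section~5: the map $\psi_\A$ and the Hamming-metric code $C_\A$ are exactly your $\mathcal{D}$ (after choosing the $R$-base $\A=\{p_rq_r^T\}$), and Lemma~\ref{lem:subMTR} uses the consequence that $C_\A$ is MDS whenever $\C$ is MTR.

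One small point of exposition: the phrase ``must in fact be an isomorphism (forcing $\dim_{\Fq}(\mathcal{D})=k$)'' reads as if the isomorphism comes first and the dimension second. The clean order is: $\mathcal{D}$ is spanned by $k$ vectors so $\dim_{\Fq}(\mathcal{D})\le k$; the map $d\mapsto P\diag(d)Q^T$ is surjective onto $\C$ of dimension $k$, hence $\dim_{\Fq}(\mathcal{D})\ge k$; therefore equality holds and the map is an isomorphism. With that reordering there is no subtlety left.
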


We say that $\C$ is \textbf{MTR} (\textbf{minimal tensor-rank}) is $\C$ meets the bound in Theorem \ref{prop:trkbound}. 

\begin{definition}
	A \textbf{vector} (\textbf{rank-metric}) \textbf{code} is an subspace $C\leq \Fqm^n$.
\end{definition}

It is possible to obtain a matrix rank-metric code from a vector rank-metric code exploiting the fact that $\Fqm^n$ and $\Fqnm$ are isomorphic as $\Fq$-spaces. 
For a fixed basis $\Gamma$ of $\Fqm^n$ over $\Fq$ an explicit isomorphism is given by $v \mapsto \Gamma(v)$, as described in Notation \ref{not:gamma}.

If $C\leq\Fqm^n$ is a vector rank-metric code then the minimum distance of $\Gamma(C)$ does not depend on the choice of the basis $\Gamma$ of $\Fqm/\Fq$. Moreover, for any such basis we have
	\begin{equation*}
		\dimq(\Gamma(C))=m\cdot\dimqm(C).
	\end{equation*}

We define the \textbf{minimum distance} of a vector code $C\leq\Fqm^n$ to be the minimum distance of $\Gamma(C)$ for any choice of a basis $\Gamma$ of $\Fqm/\Fq$.
Likewise, we define the {\bf tensor rank of the vector code} $C$ to be $\trk(C) := \trk(\Gamma(C))$.

It is well-known that for any $s \in [n]$, $m+s-1$ is an upper bound on the tensor rank of the $\Fqm$-vector space $\<\left(1,\alpha,\ldots,\alpha^{s-1}\right)\>_{\Fqm}$
    for $q \geq m+s-2$  (see \cite[Proposition~14.47]{burgisser2013algebraic} 
    and \cite[Corollary~5.14]{byrne2019tensor}). 
    Moreover, from Kruskal's tensor rank bound, it can be seen that 
    \[\trk\left(\<\left(1,\alpha,\ldots,\alpha^{s-1}\right)\>_{\Fqm}\right) \geq m+s-1,\] 
    and so we have equality.
    We summarize this as follows.
    
\begin{lemma}
\label{lem:trkgab1}
    Let $\alpha$ be a primitive element of $\Fqm$ and let $s\in[n]$ be a positive integer satisfying $q \geq m+s-2$. 
        The tensor rank of $\<(1,\alpha,\ldots,\alpha^{s-1})\>_{\Fqm}$ is exactly $m+s-1$.
\end{lemma}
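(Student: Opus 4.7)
The plan is to establish the claim by proving matching upper and lower bounds of $m+s-1$, both of which are essentially assembled in the discussion preceding the statement.

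For the upper bound, I would simply invoke the two cited results, \cite[Proposition~14.47]{burgisser2013algebraic} and \cite[Corollary~5.14]{byrne2019tensor}, each of which gives
\[
\trk\bigl(\<(1,\alpha,\ldots,\alpha^{s-1})\>_{\Fqm}\bigr) \leq m+s-1
\]
under exactly the hypothesis $q \geq m+s-2$. No further argument is needed here, since the field-size condition is precisely what is required to build the perfect basis in those constructions.

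For the lower bound, I would apply the tensor-rank bound (Theorem \ref{prop:trkbound}) to the matrix code $\Gamma(C)$, where $C := \<(1,\alpha,\ldots,\alpha^{s-1})\>_{\Fqm}$. Two parameters of $\Gamma(C)$ must be identified. First, because $\dim_{\Fqm}(C) = 1$ and $\Gamma$ is an $\Fq$-isomorphism that multiplies the $\Fqm$-dimension by $m$, we have $\dim_{\Fq}(\Gamma(C)) = m$. Second, every non-zero element of $C$ has the form $v = \beta(1,\alpha,\ldots,\alpha^{s-1})$ with $\beta \in \Fqm^\times$. Since $\alpha$ is primitive and $s \leq n \leq m$, the elements $1,\alpha,\ldots,\alpha^{s-1}$ are $\Fq$-linearly independent in $\Fqm$, and multiplication by $\beta \neq 0$ preserves $\Fq$-linear independence, so $\beta,\beta\alpha,\ldots,\beta\alpha^{s-1}$ are $\Fq$-linearly independent. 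This is precisely the statement that the rows of $\Gamma(v)$ span an $s$-dimensional $\Fq$-subspace of $\Fq^m$, hence $\rk(\Gamma(v)) = s$. Therefore $d(\Gamma(C)) = s$.

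Combining these two values with Theorem \ref{prop:trkbound} yields
\[
\trk(C) = \trk(\Gamma(C)) \geq \dim_{\Fq}(\Gamma(C)) + d(\Gamma(C)) - 1 = m + s - 1,
\]
which matches the upper bound and closes the argument. There is no substantive obstacle here; the only subtlety worth being careful about is the rank computation for the non-zero codewords, which relies on the primitivity of $\alpha$ together with $s \leq m$, and the fact that the hypothesis $q \geq m+s-2$ is exactly what the referenced upper-bound constructions require.
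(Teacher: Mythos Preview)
Your proposal is correct and follows exactly the same approach as the paper: the upper bound is obtained by citing \cite[Proposition~14.47]{burgisser2013algebraic} and \cite[Corollary~5.14]{byrne2019tensor}, and the lower bound comes from the tensor-rank bound (Theorem~\ref{prop:trkbound}) applied with $\dim_{\Fq}(\Gamma(C))=m$ and $d(\Gamma(C))=s$. Your additional justification that every non-zero codeword has rank exactly $s$ is a slightly more explicit version of what the paper records in Remark~\ref{rem:trkgab1}.
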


\begin{lemma}
\label{lem:lindep}
    Let $s$ be a positive integer such that $1\leq s \leq n$.
    Let $\beta_1,\ldots,\beta_n\in\Fqm$ such that
    $\<\beta_1,\ldots,\beta_n\>_{\Fqm}$ has dimension $s$ and $\<\beta_1,\ldots,\beta_n\>_{\Fqm}=\<\beta_1,\ldots,\beta_s\>_{\Fqm}$.
    Then \[\trk\left(\<(\beta_1,\beta_2,\ldots,\beta_n)\>_{\Fqm}\right) =
	\trk\left(\<(\beta_1,\beta_2,\ldots,\beta_s)\>_{\Fqm}\right).\]
\end{lemma}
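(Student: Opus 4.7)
The plan is to show that the tensor rank is preserved under the operation of ``extending'' the generating vector by adjoining coordinates that are $\Fq$-linear combinations of the first $s$. Concretely, I would first translate the hypothesis into a statement about $\Fq$-dependence: since $\<\beta_1,\ldots,\beta_s\>$ already contains every $\beta_j$ for $j>s$, we may write $\beta_{s+j}=\sum_{i=1}^{s}c_{ij}\beta_i$ with $c_{ij}\in\Fq$ for each $j\in[n-s]$. I would then introduce the matrix
\[
P:=\left(\begin{array}{c}I_s \\\hline T\end{array}\right)\in\Fq^{n\times s}, \qquad T_{j,i}:=c_{ij},
\]
which has full column rank $s$ and admits the explicit left inverse $Y_s=(I_s\mid 0)\in\Fq^{s\times n}$.

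The key identification is that, writing $C_1:=\<(\beta_1,\ldots,\beta_n)\>_{\Fqm}$ and $C_2:=\<(\beta_1,\ldots,\beta_s)\>_{\Fqm}$, we have $\Gamma(C_1)=P\cdot\Gamma(C_2)$. Indeed, for any $\lambda\in\Fqm$, the $\Fq$-linearity of $\Gamma$ (Notation \ref{not:gamma}) gives $\Gamma(\lambda\beta_{s+j})=\sum_{i=1}^{s}c_{ij}\Gamma(\lambda\beta_i)$, so that the $(s+j)$-th row of $\Gamma(\lambda(\beta_1,\ldots,\beta_n))$ equals the corresponding $\Fq$-linear combination of the first $s$ rows; this is exactly what left multiplication by $P$ effects.

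With this identification in hand, both inequalities follow from Lemma \ref{lem:bnrs19} by straightforward manipulation of perfect bases. For the inequality $\trk(C_1)\leq\trk(C_2)$, given an $R$-base $\{a_rb_r^t:r\in[R]\}$ of $\Gamma(C_2)$, the set $\{(Pa_r)b_r^t:r\in[R]\}$ consists of rank-$1$ matrices (since $Pa_r\neq 0$ whenever $a_r\neq 0$), is $\Fq$-linearly independent (since $P$ is injective on $\Fq^s$), and its $\Fq$-span contains $P\,\Gamma(C_2)=\Gamma(C_1)$. For the reverse inequality, starting with an $R$-base $\{\tilde a_rb_r^t\}$ of $\Gamma(C_1)$, the set $\{(Y_s\tilde a_r)b_r^t\}$ consists of matrices of rank at most $1$ whose $\Fq$-span contains $Y_s\,\Gamma(C_1)=Y_sP\,\Gamma(C_2)=\Gamma(C_2)$; pruning the zero matrices and any linear dependencies yields a perfect base of size at most $R$.

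The argument is essentially free of obstacles once the hypothesis is read as an $\Fq$-dependence relation among the $\beta_i$; the only point requiring care is making sure the entries of $P$ lie in $\Fq$ (so that $P$ acts on the $\Fq$-spaces $\Gamma(C_i)$) and that it has full column rank (so that it is injective and admits an $\Fq$-linear left inverse with entries in $\Fq$). Both are automatic from the construction.
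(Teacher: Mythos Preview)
Your argument is correct and follows essentially the same route as the paper: both proofs encode the hypothesis as $\Fq$-linear relations $\beta_{s+j}=\sum_i c_{ij}\beta_i$, use these to identify $\Gamma(C_1)$ with the image of $\Gamma(C_2)$ under left multiplication by a full-column-rank $\Fq$-matrix, and then transfer perfect bases in each direction (extending by the same row relations for one inequality, truncating via $Y_s$ for the other). Your packaging of the row-extension step into a single matrix $P$ with explicit left inverse $Y_s$ is a tidy reformulation of what the paper spells out coordinatewise, but the underlying mechanism is identical.
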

\begin{proof}
    Let $\alpha$ be a primitive element of $\Fqm$ and let $\Gamma=\{1,\alpha,\ldots,\alpha^{m-1}\}$.
    Let $M$ be the companion matrix of the minimal polynomial of $\alpha$ over $\Fq$.
    We may assume that $\{\beta_{1},\ldots,\beta_{s}\}$ is an $\Fq$-basis of $\<\beta_1,\ldots,\beta_n\>_{\Fq}$. Let $\lambda^{(i)}_j \in \Fq$ be the uniquely determined coefficients of $\beta_i$ with respect to the basis $\{\beta_1,\ldots,\beta_s\}$. That is, 
	\begin{equation}\label{eq:betai}
	     \beta_i=\sum_{j=1}^s\lambda_j^{(i)}\beta_j, \: s+1\leq i\leq n.
	\end{equation}
	
	Let $N\in\Fq^{n\times m}$ be the matrix whose $i$-th row is $\Gamma(\beta_i)$ 
	for each $i\in [n].$
	Then
	\[V:=\Gamma(\<(\beta_1,\beta_2,\ldots,\beta_n)\>_{\Fqm}) = \< N, NM, NM^2,\ldots,NM^{m-1}\>_{\Fq}\]
	Let $L\in V$. 
	Then $L = \sum_{j=0}^{m-1} \theta_j NM^{j-1}$ for some
	$\theta_j \in \Fq$.
    For each $n \times m$ matrix $P$ over $\Fq$ we denote by $\overline{P}$ the $s \times m$ submatrix of $P$ comprising its first $s$ rows, i.e. $\overline{P}:=Y_sP$.  
	Then $\overline{N} = \Gamma(\beta_1,\ldots,\beta_s)$ and  
	\[\overline{L}=  \overline{N}\sum_{j=0}^{m-1} \theta_j M^{j-1} \in \< \overline{N}, \overline{N}M, \overline{N}M^2,\ldots,\overline{N}M^{m-1}\>_{\Fq} =  \Gamma(\<(\beta_1,\beta_2,\ldots,\beta_s)\>_{\Fqm}).\]
	Now $L_i = \sum_{j=0}^{m-1} \theta_j N_iM^{j-1} = \sum_{j=0}^{m-1} \theta_j \Gamma(\beta_i)M^{j-1}$ for each $i \in [n]$ and so
	the $i$-th row of $L$ for $s+1\leq i\leq n$ is a linear combination of the rows of $\overline{L}$. Explicitly, from (\ref{eq:betai}) we have
	\begin{equation*}
		L_i=\sum_{j=1}^s\lambda_j^{(i)}L_j .
	\end{equation*}
	
	Let $\overline\A\subseteq\Fq^{s\times m}$ be perfect base of
	$\overline{V}:=\Gamma(\<(\beta_1,\beta_2,\ldots,\beta_s)\>_{\Fqm})$.
	We claim that a perfect base of $V$ is given by 
	\begin{equation*}
	\A:=\left\{\left(\renewcommand\arraystretch{1.4}
	\begin{array}{c}
		\overline A\\\hline
		\sum_{j=1}^s\lambda_j^{(s+1)}\overline A_j\\
		\vdots\\
		\sum_{j=1}^s\lambda_j^{(n)}\overline A_j
	\end{array}\right):\overline A\in\overline\A \right\}\subseteq\Fqnm.
	\end{equation*}
	By construction, for each
	$A \in \A$ we have $\overline{A} \in \overline{\A}$ and so $\rk(A) = \rk(\overline{A})=1$ as each $i$-th row of $A$ is an $\Fq$-linear combination of the rows of $\overline{A}$ for $s+1\leq i\leq n$.
	 
	Moreover, $\A$ is a set of $|\overline{\A}|$ linearly independent matrices by the linear independence of $\overline\A$. 
	
	It remains to prove that $V\leq\<\A\>$. Let $\A=\{A^{(1)},\ldots, A^{(\ell)}\}$.
	Since $\overline{A}$ is a perfect base of $\overline{V}$, there exist  $\gamma_1,\ldots,\gamma_{\ell}\in\Fq$ such that
	$\bar{L} = \sum_{j=1}^\ell \gamma_j \overline{A^{(j)}}$. Then for $s+1 \leq i \leq n$ we have 
	\begin{equation*}
		L_i=\sum_{j=1}^s\lambda_j^{(i)}L_j  
		= \sum_{j=1}^s\lambda_j^{(i)} \sum_{k=1}^\ell \gamma_k \overline{A^{(k)}}_i 
		= \sum_{k=1}^\ell \gamma_k \sum_{j=1}^s\lambda_j^{(i)} \overline{A^{(k)}}_i
		= \sum_{k=1}^\ell \gamma_k A^{(k)}_i.
	\end{equation*} 
	Therefore, $L = \sum_{j=1}^\ell \gamma_j A^{(j)}$ and so $L \in \< \A \>$ and so
	$\A$ is a perfect base of $V$.
    In particular, if $|\overline{A}| = \trk(\overline{V})$, then $\A$ as constructed above is a perfect base of $V$ satisfying $\trk(V) \leq \trk(\overline{V})$. Clearly $\trk(\overline{V})\leq \trk (V)$ and so the result follows. 
\end{proof}

\begin{example}
	Let $f:=x^4+4x^2+4x+2\in\F_5[x]$, let $\alpha$ a root of $f$ and let $M$ be the companion matrix of $f$.
	The polynomial $f$ is irreducible over $\F_5$ and
	$\Gamma:=\{1,\alpha,\alpha^2,\alpha^3\}$ is a basis of $\F_{5^4}=\F_5[\alpha]$ over $\F_5$.
	Let $\overline N\in\F_5^{3\times 4}$ be the matrix whose $i$-th row is $\Gamma(\alpha^i)$, i.e.
	\begin{equation*}
		\overline N:=
		\begin{pmatrix}
			1 & 0 & 0 & 0\\
			0 & 1 & 0 & 0\\
			0 & 0 & 1 & 0
		\end{pmatrix}.
	\end{equation*} 
	
	From Lemma \ref{lem:trkgab1}, we know that the tensor rank of $\<(1,\alpha,\alpha^2)\>_{\F_{5^4}}$ is $6$.
	We will construct a $6$-base for $\<N,NM,NM^2,NM^3\>_{\F_5}$, where $N_i:=\Gamma(\alpha^i)$, $0\leq i\leq 2$, and $N_4:=\Gamma(4+3\alpha+2\alpha^2)$, i.e.
	\begin{equation*}
		N:=
		\begin{pmatrix}
			1 & 0 & 0 & 0\\
			0 & 1 & 0 & 0\\
			0 & 0 & 1 & 0\\
			4 & 3 & 2 & 0
		\end{pmatrix}.
	\end{equation*}
	Consider the $6$-base $\overline\A$ of $\<\overline N,\overline NM,\overline NM^2,\overline NM^3\>_{\F_5}$ obtained using the construction outlined in the proof of Corollary \ref{cor:n=2,3}, namely: 
	\begin{equation*}
	\scalemath{0.96}{
		\overline\A:=\left\{
		\begin{pmatrix}
			4&2&0&1\\3&4&0&2\\1&3&0&4
		\end{pmatrix},
		\begin{pmatrix}
			1&1&0&1\\3&3&0&3\\4&4&0&4
		\end{pmatrix},
		\begin{pmatrix}
			3&0&2&4\\2&0&3&1\\3&0&2&4
		\end{pmatrix},
		\begin{pmatrix}
			2&3&3&4\\2&3&3&4\\2&3&3&4
		\end{pmatrix},
		\begin{pmatrix}
			3&3&4&0\\0&0&0&0\\0&0&0&0
		\end{pmatrix},
		\begin{pmatrix}
			0&0&0&0\\0&0&0&0\\0&2&1&1
		\end{pmatrix}
		\right\}.}
	\end{equation*}
	In the notation of Lemma \ref{lem:lindep}, we have $m=4,n=s=3$. 
	As in the proof of Lemma \ref{lem:lindep}, we define the set $\A$ as follows:
	\begin{equation*}
		\A:=\left\{\renewcommand\arraystretch{1.4}
		\left(
		\begin{array}{c}
			\overline A\\\hline
			4\overline A_1+3\overline A_2+2\overline A_2
		\end{array}\right): \overline A \in \overline \A
		\right\},
	\end{equation*}
	which is explicitly given by:
	\begin{equation*}
		\scalemath{0.96}{
		\A=\left\{
		\begin{pmatrix}
			4&2&0&1\\3&4&0&2\\1&3&0&4\\2&1&0&3
		\end{pmatrix},
		\begin{pmatrix}
			1&1&0&1\\3&3&0&3\\4&4&0&4\\1&1&0&1
		\end{pmatrix},
		\begin{pmatrix}
			3&0&2&4\\2&0&3&1\\3&0&2&4\\4&0&1&2
		\end{pmatrix},
		\begin{pmatrix}
			2&3&3&4\\2&3&3&4\\2&3&3&4\\3&2&2&1
		\end{pmatrix},
		\begin{pmatrix}
			3&3&4&0\\0&0&0&0\\0&0&0&0\\2&1&1&0
		\end{pmatrix},
		\begin{pmatrix}
			0&0&0&0\\0&0&0&0\\0&2&1&1\\0&4&2&2
		\end{pmatrix}
		\right\}.}
	\end{equation*}
	One can check that $\A$ is indeed a $6$-base for  $\<N,NM, NM^2,NM^3\>_{\F_5}$ and further, the tensor rank is $6$.
\end{example}

One class of $\Fqm$-linear matrix spaces of particular interest in coding theory is the class of {\em Delsarte-Gabidulin} codes and their generalisations, such as the {\em twisted Gabidulin codes}. These codes are extremal with respect to the following (rank-metric Singleton) bound.

\begin{proposition}[{\cite[Theorem~5.4]{delsarte1978bilinear}}]
\label{prop:SingletonRM}
	We have $k\leq m(n-d+1)$.
\end{proposition}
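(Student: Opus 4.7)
The plan is to prove the bound by a standard puncturing argument, analogous to the classical proof of the Singleton bound in the Hamming metric. Specifically, I would produce an injective $\F_q$-linear map from $\C$ to a smaller matrix space whose dimension is exactly $m(n-d+1)$.

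First, I would fix any subset $U \subseteq [n]$ of size $d-1$; for concreteness, take $U = \{n-d+2,\ldots,n\}$. Let $\pi \colon \Fqnm \to \F_q^{(n-d+1)\times m}$ be the $\F_q$-linear map that deletes the rows indexed by $U$, i.e. keeps only the first $n-d+1$ rows. This map has codomain of $\F_q$-dimension $m(n-d+1)$, which is the bound we are aiming for. I would then restrict $\pi$ to $\C$ and show that this restriction is injective.

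The key step is the injectivity claim. Suppose $M \in \C$ lies in the kernel of $\pi$, so that every row of $M$ outside $U$ vanishes. Then all the (at most $n$) nonzero rows of $M$ are among the $d-1$ rows indexed by $U$, which immediately gives $\rk(M) \leq d-1$. Since $\C$ has minimum rank distance $d$, this forces $M = 0$. Consequently $\pi|_{\C}$ is injective and
\[
k = \dimq(\C) \leq \dimq\bigl(\F_q^{(n-d+1) \times m}\bigr) = m(n-d+1).
\]

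I do not expect any genuine obstacle, as the argument is essentially a one-line reduction to the rank of punctured matrices. The only point worth flagging is the choice of whether to delete rows or columns: deleting $d-1$ \emph{columns} would only yield the weaker bound $k \leq n(m-d+1)$, and one must puncture \emph{rows} to obtain the tighter statement $k \leq m(n-d+1)$. This exploits the standing convention $n \leq m$, under which $m(n-d+1) \leq n(m-d+1)$.
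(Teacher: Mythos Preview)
Your argument is correct and is precisely the standard puncturing proof of the rank-metric Singleton bound. Note, however, that the paper does not supply its own proof of this proposition: it is simply quoted from Delsarte~\cite{delsarte1978bilinear} without argument, so there is nothing to compare your approach against. Your observation about deleting rows rather than columns is also accurate and is the reason the bound is stated with the factor~$m$ rather than~$n$ under the convention $n\leq m$.
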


We say that $\C$ is an \textbf{MRD} (\textbf{maximum rank-distance}) code if $\C$ satisfies the bound in Proposition \ref{prop:SingletonRM} with equality. We also say that the vector code $C$ is MRD if $\Gamma(C)$ is MRD for any basis $\Gamma$ of $\Fqm$ over $\Fq$. It easy to see that $C$ is MRD if and only if $d(C)=n-\dimqm(C)+1$.
Delsarte, in \cite[Theorem~5.5]{delsarte1978bilinear}, proved that $\C$ is an MRD code if and only if $\C^\perp$ is MRD and that such codes exist for every choice of the parameters $n,m$ and $d$. 

Delsarte \cite{delsarte1978bilinear}, Gabidulin \cite{gabidulin1985theory}, and Roth \cite{roth1991maximum} found independent constructions of MRD vector codes for any choice of the parameters. These constructions were then generalized in \cite{lunardon2018generalized} and \cite{sheekey2016new}.  

Recall that for any $\theta \in \Fqm$, the {\bf norm} of $\theta$ over $\Fq$ is defined by 
$N_{q^m/q}(\theta):=\theta^{\frac{q^m-1}{q-1}}$. 
We now set up some notation to handle code equivalence, based on the approach in \cite{schmidt2018number}.

Let $f\in\L$ be a linearized polynomial and let $\beta=(\beta_1,\ldots,\beta_n) \in \Fqm^n$ have coefficients that are linearly independent over $\Fq$. 
We define $\pi_\beta(f)$ as follows: 
\begin{align*}
    \pi_\beta:\L \longrightarrow \F_{q}^{n\times m} :
    f\longmapsto \Gamma\left(f(\beta_1),\ldots,f(\beta_n)\right).
\end{align*}

Let $U$ be a subspace of $\Fqm$ over $\Fq$ and let $\gamma = (\gamma_1,\ldots,\gamma_n) \in \Fqm^n$ such that $\{\gamma_1,\ldots,\gamma_n\}$ is a basis of $U$. Let $I_U$ be the ideal in $\L$ of linearized polynomials that vanish on $U$ and let $\pi_U:\L \longrightarrow \L/I_U$ be the natural quotient map. Let $f,g \in \L$. Then $\pi_U(f)=\pi_U(g)$ if and only if $\pi_\gamma(f) = \pi_\gamma(g)$. In particular, for any subspace $V$ of $\L$,  $\pi_U(V)$ and $\pi_{\gamma}(V)$ are isomorphic \cite[Lemma 2.1]{schmidt2018number}.

\begin{definition}
\label{def:Gab}
	Let $k,s$ be integers such that $1\leq k < n$ and $\gcd(s,m)=1$. Let $\eta\in\Fqm$ such that $N_{q^{ms}/q^s}(\eta)\neq (-1)^{mk}$ and define the set
	\begin{equation*}
	\G_{k,s}(\eta):=\left\{\sum_{i=0}^{k-1} f_i\,x^{q^{si}}+\eta\, f_0\,x^{q^{sk}}:f_0,\ldots,f_{k-1}\in\Fqm\right\}.
	\end{equation*}
    A $k$-dimensional ($\Fqm$-\textbf{linear}) \textbf{generalized twisted Delsarte-Gabidulin code} is defined to be a code of the form $\pi_U\left(\G_{k,s}(\eta)\right)$ for some $n$-dimensional subspace $U$ of $\Fqm$.
    A code $\pi_U\left(\G_{k,s}(0)\right)$ is called a \textbf{Delsarte-Gabidulin} code.
\end{definition}

We list some facts on Delsarte-Gabidulin codes. Let $U$ be an $n$-dimensional subspace of $\Fqm$ and let $\alpha$ be a primitive element of $\Fqm$ over $\Fq$. In the notation of Definition \ref{def:Gab}, the following hold.

\begin{enumerate}[leftmargin=5.5ex]
     \item $\pi_U\left(\G_{1,s}(\eta)\right)$ is a $1$-dimensional Delsarte-Gabidulin code.
     \item The dual of $\pi_U\left(\G_{k,s}(\eta)\right)$ is equivalent to $\pi_U\left(\G_{n-k,s}(-\eta)\right)$. See \cite[Proposition~4.2]{lunardon2018generalized}.
    \item $\pi_U\left(\G_{k,s}(\eta)\right)$ is MRD. See \cite[Theorem~3.3]{lunardon2018generalized} and \cite[Theorem~5]{sheekey2016new} for a proof.
     \item $\pi_U\left(\G_{k,s}(0)\right)$ and $\pi_V\left(\G_{k,s}(0)\right)$ are equivalent
     if and only if $V= \{ \lambda u^{q^\ell} : u \in U \}$ for some $\lambda \in \Fqm, \ell \in \Z$. See \cite[Theorem~3.1]{schmidt2018number}.  
    \item $\pi_{\<1,\alpha,\ldots,\alpha^{n-1}\>}\left(\G_{k,s}(0)\right)$ is equivalent to $\pi_{\<1,\alpha,\ldots,\alpha^{n-1}\>}\left(\G_{k,1}(0)\right)$. See \cite[Theorem~3.1]{schmidt2018number}.
    \item If $\dimq(U)\in\{m-1,m\}$ then $\pi_U\left(\G_{k,s}(0)\right)$ is equivalent to $\pi_{\<1,\alpha,\ldots,\alpha^{n-1}\>}\left(\G_{k,1}(0)\right)$. See \cite[Corollary~3.2]{schmidt2018number}.
\end{enumerate}

While many properties of $\pi_U\left(\G_{k,s}(\eta)\right)$ have been studied, we know very little about the tensor rank of such codes; it is not yet fully known for which $k$ and $U$ that $\pi_U\left(\G_{k,s}(\eta)\right)$ is MTR. 
\newline

In the remainder we let $\alpha$ be primitive element of $\Fqm$ over $\Fq$ and $U$ be the $n$-dimensional subspace of $\Fqm$ generated by $\{1,\alpha,\ldots,\alpha^{n-1}\}$.
\newline

We now give some results that establish the tensor rank of the code $\pi_U\left(\G_{1,1}(0)\right)$ and its dual for all values of $m$ and for any primitive element $\alpha$ of $\Fqm$ over $\Fq$. These results can be extended to any equivalent code, as the tensor rank is an invariant of tensor equivalence. 

\begin{proposition}[{\cite[Corollary~5.14]{byrne2019tensor}}]
\label{prop:MTR1dim}
    Let $q\geq m+n-2$. We have that  $\pi_U\left(\G_{1,1}(0)\right)$ has tensor rank $m+n-1$ and, in particular, it is MTR.
\end{proposition}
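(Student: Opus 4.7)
The plan is to recognise that $\pi_U(\G_{1,1}(0))$ is, up to the isomorphism described just after Notation \ref{not:gamma}, precisely the $\Fqm$-vector code whose tensor rank was computed in Lemma \ref{lem:trkgab1}. By definition, $\G_{1,1}(0) = \{f_0 x : f_0 \in \Fqm\}$, and evaluating each such linearised polynomial at the basis $\beta = (1,\alpha,\ldots,\alpha^{n-1})$ of $U$ produces the vector $f_0 \cdot (1,\alpha,\ldots,\alpha^{n-1})$. Therefore, under the identification of $\pi_U$ with $\pi_\beta$, the code $\pi_U(\G_{1,1}(0))$ is simply $\langle (1,\alpha,\ldots,\alpha^{n-1})\rangle_{\Fqm}$.

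With that identification made, the upper bound $\trk(\pi_U(\G_{1,1}(0))) \leq m + n - 1$ is immediate from Lemma \ref{lem:trkgab1} applied with $s = n$, which is valid because we are assuming $q \geq m + n - 2$. So the bulk of the work has already been done in Lemma \ref{lem:trkgab1}, whose proof in turn rests on the explicit perfect base construction given in Corollary \ref{cor:n=2,3} (together with Remark \ref{rem:trkgab1} identifying $\Gamma(\langle (1,\alpha,\ldots,\alpha^{s-1})\rangle_{\Fqm})$ as $\langle Y_s, Y_sM, \ldots, Y_sM^{m-1}\rangle_{\Fq}$).

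To deduce the MTR property, I would verify that the tensor rank $m + n - 1$ matches Kruskal's tensor-rank bound (Theorem \ref{prop:trkbound}). The code $\pi_U(\G_{1,1}(0))$ has $\Fqm$-dimension $1$, hence $\Fq$-dimension $m$. Because it is a Delsarte-Gabidulin code, it is MRD by fact (3) listed after Definition \ref{def:Gab}, so its minimum rank distance equals $n - k + 1 = n$. Substituting into Theorem \ref{prop:trkbound} gives $\trk(\pi_U(\G_{1,1}(0))) \geq m + n - 1$, which coincides with the upper bound already obtained; equality then yields that the code is MTR.

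There is essentially no obstacle: the statement is a direct corollary of Lemma \ref{lem:trkgab1} combined with the standard parameters of a $1$-dimensional Delsarte-Gabidulin code. The only thing that needs a bit of care is checking the translation between the quotient-map description $\pi_U(\G_{1,1}(0))$ and the vector-code description $\langle (1,\alpha,\ldots,\alpha^{n-1})\rangle_{\Fqm}$, which follows from the remark preceding Definition \ref{def:Gab} that $\pi_U(V)$ and $\pi_\gamma(V)$ are isomorphic for any basis $\gamma$ of $U$.
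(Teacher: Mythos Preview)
Your approach is correct and matches the paper's: Proposition \ref{prop:MTR1dim} is stated as a cited result without a separate proof, and it is essentially Lemma \ref{lem:trkgab1} with $s=n$, recast in the language of Delsarte--Gabidulin codes, with the MTR conclusion following from the MRD property and Theorem \ref{prop:trkbound} exactly as you argue. One minor correction to your parenthetical remark: the upper bound underlying Lemma \ref{lem:trkgab1} is not obtained via Corollary \ref{cor:n=2,3} (which treats only $n\in\{2,3\}$) but is imported from \cite[Proposition~14.47]{burgisser2013algebraic} and \cite[Corollary~5.14]{byrne2019tensor}; this does not affect your argument, since you invoke Lemma \ref{lem:trkgab1} as stated.
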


As a consequence of Lemma \ref{lem:lindep} and Proposition \ref{prop:MTR1dim}, we have the following results.

\begin{corollary}
\label{cor:dim2}
    Let $q\geq 2m-3$ and $V\leq\Fqm^m$ be the row-space of 
    \begin{equation*}
         G:=\begin{pmatrix}
            \beta_{1,1} & \cdots & \beta_{1,m}\\
            \beta_{2,1} & \cdots & \beta_{2,m}
        \end{pmatrix},
    \end{equation*}
    for some linearly independent sets $\{\beta_{1,1},\ldots,\beta_{1,m}\}$ and $\{\beta_{2,1},\ldots,\beta_{2,m}\}$ of elements in $\Fqm$. We have $  3m-2\leq\trk(V) \leq 4m-4$.
\end{corollary}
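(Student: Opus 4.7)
The code $V$ has $\Fqm$-dimension $2$, so $\Gamma(V)\subseteq \Fq^{m\times m}$ has $\Fq$-dimension $2m$. The plan is to analyse both bounds by decomposing $V$ into the two $1$-dimensional $\Fqm$-subcodes $V_i := \<v_i\>_{\Fqm}$ for $i=1,2$. I first note that each $V_i$ is MRD with minimum rank distance $m$: since $\beta_{i,1},\ldots,\beta_{i,m}$ form an $\Fq$-basis of $\Fqm$, every $\lambda v_i$ with $\lambda \in \Fqm^\times$ has support equal to $\Fqm$ and thus rank exactly $m$. After an $\Fq$-linear change of basis of $\Fqm$ sending $\beta_{i,j}\mapsto \alpha^{j-1}$ for a primitive element $\alpha$, the space $\Gamma(V_i)$ is carried by left-multiplication with an invertible matrix over $\Fq$ to $\Gamma(\pi_U(\G_{1,1}(0)))$, whence Proposition~\ref{prop:MTR1dim} will give $\trk(V_i) = 2m-1$, with the hypothesis $q\geq 2m-3$ sufficing (in view of the boundary analysis of Remark~\ref{rem:m}).

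For the upper bound $\trk(V)\leq 4m-4$ I would use subadditivity of tensor rank on the decomposition $\Gamma(V) = \Gamma(V_1)+\Gamma(V_2)$. The naive estimate $\trk(V_1)+\trk(V_2) = 2(2m-1) = 4m-2$ must be sharpened by $2$. The strategy is to construct the perfect bases of $V_1$ and $V_2$ simultaneously, using the matrix-ring structure from Remark~\ref{rem:trkgab1}: writing $\Gamma(V_i) = N_i\cdot\<I,M,\ldots,M^{m-1}\>_{\Fq}$ for invertible $N_i$ over $\Fq$, a perfect base of $\mathcal M := \<I,M,\ldots,M^{m-1}\>_{\Fq}$ of size $2m-1$ transports to bases of $N_1\mathcal M$ and $N_2\mathcal M$, and one checks that at least two of the transported rank-$1$ matrices can be chosen to coincide by exploiting the simultaneous diagonalisation of $M$ that underlies Theorem~\ref{thm:jaja}.

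For the lower bound $\trk(V)\geq 3m-2$ I first try Kruskal's tensor rank bound (Theorem~\ref{prop:trkbound}): since $\dim_\Fq \Gamma(V) = 2m$, it suffices to show $d(V)\geq m-1$, which is precisely the MRD condition by Proposition~\ref{prop:SingletonRM}. This delivers the bound immediately when $V$ is MRD. In the non-MRD case Kruskal alone drops, but $V$ still contains $V_1$ as a subcode with $\trk(V_1) = 2m-1$, so monotonicity of tensor rank gives $\trk(V)\geq 2m-1$; the extra $m-1$ units I plan to extract from the presence of the second $\Fqm$-independent full-rank codeword $v_2$, whose $\Gamma$-image is a rank-$m$ matrix lying outside the span of any perfect base of $V_1$ and hence should require at least $m-1$ further rank-$1$ generators to be expressed.

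The main obstacle will be pinning down explicitly which two rank-$1$ matrices are shared between the transported perfect bases of $N_1\mathcal M$ and $N_2\mathcal M$ to deliver the ``$-4$ rather than $-2$'' in the upper bound; and, for the lower bound, making rigorous the counting argument in the non-MRD case, where Kruskal applied to $V$ is not enough and one has to combine monotonicity against $V_1$ with the structural constraint that any perfect base of $V$ must also express the independent rank-$m$ codeword $v_2$.
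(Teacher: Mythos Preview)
Your upper-bound strategy has a genuine gap. Working with the original rows $v_i$, each $V_i=\langle v_i\rangle_{\Fqm}$ has tensor rank $2m-1$, and you then need to show that two rank-one matrices can be shared between perfect bases of $N_1\mathcal M$ and $N_2\mathcal M$. But a shared element means $N_1A_j=N_2A_{j'}$ for some $j,j'$, i.e.\ left multiplication by $N_2^{-1}N_1$ sends one member of a fixed $(2m-1)$-base of $\mathcal M$ to another; for generic $\beta_{i,j}$ the matrix $N_2^{-1}N_1\in\GL_m(\Fq)$ is arbitrary and there is no reason this should occur, nor does the diagonalisation of $M$ help, since $N_1,N_2$ act on the left and are unrelated to the eigenstructure of $M$. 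Note also that invoking Proposition~\ref{prop:MTR1dim} with $n=m$ requires $q\ge 2m-2$, not $2m-3$; Remark~\ref{rem:m} concerns the cases $n=2$ or $r=2$ and does not apply here.

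The paper avoids all of this by first replacing $G$ with its reduced row-echelon form $\overline G$. Each row $\overline G_i$ now carries a zero in the other pivot column, so its entries lie in an $\Fq$-space of dimension at most $m-1$. Lemma~\ref{lem:lindep} then identifies $\trk(\langle\overline G_i\rangle_{\Fqm})$ with the tensor rank of a length-$(m-1)$ one-dimensional code, and Proposition~\ref{prop:MTR1dim} with $n=m-1$ (this is precisely the hypothesis $q\ge 2m-3$) gives $2m-2$. Plain subadditivity now yields $(2m-2)+(2m-2)=4m-4$ with nothing further to arrange. The RREF step is the missing idea: it buys the reduction from $2m-1$ to $2m-2$ per row for free, instead of through an ad hoc overlap argument.

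For the lower bound the paper does not split into cases: it asserts $d(V)=m-1$ and applies Theorem~\ref{prop:trkbound} directly to obtain $2m+(m-1)-1=3m-2$. Your non-MRD branch is not a proof as written: monotonicity against $V_1$ gives only $\trk(V)\ge 2m-1$, and the claim that the second full-rank codeword $v_2$ forces $m-1$ \emph{additional} rank-one generators has no mechanism behind it --- a rank-$m$ matrix can lie in the span of any rank-one family of size $\ge m$, and nothing prevents those generators from already being among the $2m-1$ used for $V_1$.
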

\begin{proof}
    It is immediate to observe that $V$ is also the row-space of the reduced row echelon form $\overline{G}$ of $G$, i.e.
    \begin{equation*}
        \overline{G}:=
        \begin{pmatrix}
            1 & 0 & \gamma_{1,1} & \cdots & \gamma_{1,m-2}\\
            0 & 1 & \gamma_{2,1} & \cdots & \gamma_{2,m-2}
        \end{pmatrix},
    \end{equation*}
    for some linearly independent sets $\{\gamma_{1,1},\ldots,\gamma_{1,m-2}\}$ and $\{\gamma_{2,1},\ldots,\gamma_{2,m-2}\}$ of elements in $\Fqm$. 
    Observe that $\pi_{\<1, \gamma_{1,1},\ldots, \gamma_{1,m-2}\>}\left(\G_{1,1}(0)\right)$ and $\pi_{\<1, \gamma_{2,1},\ldots,\gamma_{2,m-2}\>}\left(\G_{1,1}(0)\right)$ are both equivalent to $\pi_{U}\left(\G_{1,1}(0)\right)$.
    Lemma \ref{lem:lindep} and Proposition \ref{prop:MTR1dim} imply
    \begin{equation*}
        \trk(V)=\trk\left(\<\overline{G}_1\>+\<\overline{G}_2\>\right)\leq \trk\left(\<\overline{G}_1\>\right)+\trk\left(\<\overline{G}_1\>\right)=(2m-2)+(2m-2)=4m-4.
    \end{equation*}
    Finally, one can check that $d(V)=m-1$ and therefore, by the tensor rank bound, we get 
    \begin{equation*}
        3m-2= \dimq(V)+d(V)-1\leq \trk(V),
    \end{equation*}
    which concludes the proof.
\end{proof}

\begin{corollary}
    Let $1\leq k\leq n\leq m$, $1\leq i\leq n-k$ and $j_1,\ldots,j_k$ be integers. Let $q\geq m+i-1$. For each $s \in [k]$, let $\alpha_s$ be a primitive element of $\Fqm$ over $\Fq$. Let $V\leq\Fqm^n$ the row-space of
    \begin{equation*}
        G:=\left(\begin{array}{*{7}c}
            1 & 0 & \cdots & 0 & \beta_{1,1} & \cdots & \beta_{1,n-k}\\
            0 & 1 & \cdots & 0 & \beta_{2,1} & \cdots & \beta_{2,n-k}\\
            & & \ddots & & \vdots & & \vdots\\
            0 & 0 & \cdots & 1 & \beta_{k,1} & \cdots & \beta_{k,n-k}
        \end{array}\right),
    \end{equation*}
    where $\<1,\beta_{s,1},\ldots,\beta_{s,n-k}\> = \<1,\alpha_s^{q^{j_s}},\alpha_s^{2q^{j_s}},\ldots,\alpha_s^{iq^{j_s}}\>$ for all $s \in [k]$ and $r\in [n-k]$. We have $\trk(V)\leq k(m+i)$.
\end{corollary}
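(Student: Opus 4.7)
The plan is to bound $\trk(V)$ by summing the tensor ranks of the $k$ one-dimensional $\Fqm$-subcodes generated by the rows of $G$, and then to control each summand via Lemmas \ref{lem:lindep} and \ref{lem:trkgab1}. Let $v_s$ denote the $s$-th row of $G$, so that $V = \sum_{s=1}^{k} \<v_s\>_{\Fqm}$. Since $\Gamma$ is $\Fq$-linear, $\Gamma(V) = \sum_{s=1}^{k} \Gamma(\<v_s\>_{\Fqm})$, and the union of perfect bases of the summands is a spanning set of rank-$1$ matrices for $\Gamma(V)$. This yields the subadditivity bound
\[
\trk(V) \leq \sum_{s=1}^{k} \trk(\<v_s\>_{\Fqm}),
\]
so it suffices to prove $\trk(\<v_s\>_{\Fqm}) \leq m+i$ for each $s \in [k]$.

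Fix $s \in [k]$. The non-zero components of $v_s$ are $1$ (in position $s$) and the entries $\beta_{s,r}$ (in positions $k+1,\ldots,n$), whose $\Fq$-span is $U_s := \<1,\gamma_s,\gamma_s^2,\ldots,\gamma_s^i\>_{\Fq}$ with $\gamma_s := \alpha_s^{q^{j_s}}$. Since the Frobenius is an $\Fq$-automorphism of $\Fqm$, $\gamma_s$ is primitive, and because $i+1 \leq n-k+1 \leq m$, the set $\{1,\gamma_s,\ldots,\gamma_s^i\}$ is $\Fq$-linearly independent, so $\dim_{\Fq} U_s = i+1$. After reordering the coordinates of $v_s$ (which corresponds to a row permutation of $\Gamma(v_s)$ and hence preserves tensor rank), we can assume that the first $i+1$ components of $v_s$ form an $\Fq$-basis of $U_s$. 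Applying Lemma \ref{lem:lindep} then gives
\[
\trk(\<v_s\>_{\Fqm}) = \trk(\<w\>_{\Fqm}),
\]
where $w \in \Fqm^{i+1}$ has $\Fq$-linearly independent components spanning $U_s$.

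Finally, since both the components of $w$ and those of $(1,\gamma_s,\ldots,\gamma_s^i)$ are $\Fq$-bases of $U_s$, there exists $A \in \GL_{i+1}(\Fq)$ with $w^T = A(1,\gamma_s,\ldots,\gamma_s^i)^T$. A direct computation using the $\Fq$-linearity of $\Gamma$ shows that
\[
\Gamma(\<w\>_{\Fqm}) = A\,\Gamma(\<(1,\gamma_s,\ldots,\gamma_s^i)\>_{\Fqm}),
\]
and Lemma \ref{lem:bnrs19} implies that left multiplication by an invertible $\Fq$-matrix preserves tensor rank. Combining this with Lemma \ref{lem:trkgab1} applied to the primitive element $\gamma_s$ (valid because $q \geq m+i-1 = m+(i+1)-2$) gives $\trk(\<v_s\>_{\Fqm}) = m+i$, and summing over $s \in [k]$ yields $\trk(V) \leq k(m+i)$. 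I expect the main obstacle to be the careful bookkeeping of tensor rank preservation through the chain of reductions (coordinate permutation, application of Lemma \ref{lem:lindep}, and change of $\Fq$-basis within $U_s$), rather than any new construction.
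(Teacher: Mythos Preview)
Your proof is correct and follows the same strategy as the paper: decompose $V$ as the sum of the $k$ one-dimensional $\Fqm$-spans of the rows of $G$, bound $\trk(V)$ subadditively, and reduce each summand via Lemma~\ref{lem:lindep} to a $1$-dimensional Delsarte--Gabidulin code of length $i+1$ whose tensor rank is $m+i$. The paper invokes Proposition~\ref{prop:MTR1dim} for this last step, whereas you use Lemma~\ref{lem:trkgab1}; these are equivalent here. Your treatment is in fact more careful than the paper's: you explicitly justify the coordinate permutation needed to put an $\Fq$-basis of $U_s$ in the first $i+1$ positions (required by the hypotheses of Lemma~\ref{lem:lindep}) and the subsequent change of $\Fq$-basis to reach the form $(1,\gamma_s,\ldots,\gamma_s^i)$, both of which the paper leaves implicit.
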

\begin{proof}
    The proof follows as a straightforward consequence of Lemma \ref{lem:lindep} and Proposition \ref{prop:MTR1dim}. In particular, for all $s\in k$, Lemma \ref{lem:lindep} implies that 
    \begin{equation*}
        \trk\left(\<\left(0,\ldots,0,1,0,\ldots,0, \beta_{s,1},\ldots,\beta_{s,n-k}\right)\>\right)=\trk\left(\<\left(1, \alpha_s^{q^{j_s}},\ldots,\alpha_s^{iq^{j_s}}\right)\>\right),
    \end{equation*}
   and, by Proposition \ref{prop:MTR1dim}, we have
   \begin{equation*}
       \trk\left(\<\left(1, \alpha_s^{q^{j_s}},\ldots,\alpha_s^{iq^{j_s}}\right)\>\right)=m+i.
   \end{equation*}
   Therefore, we get
   \begin{equation*}
       \trk(V)=\trk\left(\<\sum_{s=1}^k G_s\>\right)\leq \sum_{s=1}^k\trk\left(\<G_s\>\right)=k(m+i),
   \end{equation*}
   where $G_s$ is the $s$th row of $G$, $s\in[k]$. This concludes the proof.
\end{proof}

\begin{remark}
    In \cite[Proposition~5.15]{byrne2019tensor} it was shown that 
    \begin{equation*}
        \trk\left( \pi_U\left(\G_{k,1}(0)\right) \right) \leq k(m+n-1).
    \end{equation*}
    The bound in Corollary \ref{cor:dim2} can be applied to give the improvement on this estimate for $k=2$ and $n=m$.
\end{remark}

\begin{theorem}
\label{thm:MTR(n-1)dim}
    Let $q\geq m$. We have that $\pi_U\left(\G_{1,1}(0)\right)^\perp$ has tensor rank $mn-m+1$ and, in particular, it is MTR.
\end{theorem}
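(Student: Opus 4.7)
The plan is to combine Kruskal's tensor rank bound (Theorem \ref{prop:trkbound}) for the lower bound with the explicit perfect base construction of Corollary \ref{cor:basenm}, applied at $s=m-1$, for the upper bound.

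First, I would use Remark \ref{rem:trkgab1} with $\Gamma=\{1,\alpha,\ldots,\alpha^{m-1}\}$ to identify $\pi_U(\G_{1,1}(0))$ with the matrix code $\langle Y_n,Y_nM,\ldots,Y_nM^{m-1}\rangle_{\Fq}\leq\Fq^{n\times m}$, where $M$ is the companion matrix of the minimal polynomial of $\alpha$ over $\Fq$. This code is $m$-dimensional over $\Fq$ and, being MRD with $\Fqm$-dimension $1$, has minimum rank distance $n$. Consequently $\pi_U(\G_{1,1}(0))^\perp$ has $\Fq$-dimension $mn-m$ and minimum rank distance $2$, and Theorem \ref{prop:trkbound} immediately gives
\[
\trk(\pi_U(\G_{1,1}(0))^\perp)\;\geq\;(mn-m)+2-1\;=\;mn-m+1.
\]

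For the matching upper bound, the key observation is the chain of subspaces
\[
\pi_U(\G_{1,1}(0))^\perp=\langle Y_n,Y_nM,\ldots,Y_nM^{m-1}\rangle_{\Fq}^\perp\;\subseteq\;\langle Y_n,Y_nM,\ldots,Y_nM^{m-2}\rangle_{\Fq}^\perp,
\]
where the right-hand side has $\Fq$-dimension $mn-(m-1)=mn-m+1$. Since $\alpha$ is primitive over $\Fq$, its minimal polynomial has nonzero constant term, so $a_1\neq 0$ and $M$ is invertible; together with the hypothesis $q\geq m=(m-1)+1$, this is precisely what is needed to invoke Corollary \ref{cor:basenm} with $s=m-1$, producing an explicit $(mn-m+1)$-base for the right-hand space. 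By the definition of a perfect base and Lemma \ref{lem:bnrs19}, this same set is automatically a perfect (and therefore $(mn-m+1)$-)base for every subspace of the right-hand space, and in particular for $\pi_U(\G_{1,1}(0))^\perp$. Hence $\trk(\pi_U(\G_{1,1}(0))^\perp)\leq mn-m+1$, and combined with Kruskal's bound this yields equality, so the code is MTR.

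The only technical point requiring care is the identification of $\pi_U(\G_{1,1}(0))^\perp$ with the trace dual of $\langle Y_n,Y_nM,\ldots,Y_nM^{m-1}\rangle_{\Fq}$ in $\Fq^{n\times m}$: depending on interpretation one must pass between the $\Fqm$-linear dual of the underlying vector code and the trace dual of its $\Gamma$-image, but these differ only by right-multiplication by a fixed invertible $m\times m$ $\Fq$-matrix (the change of basis from $\Gamma$ to its trace-dual basis), which is a code equivalence and therefore preserves both the tensor rank and the MTR property. Apart from this bookkeeping, the argument is a direct coding-theoretic application of the results of Section~4.
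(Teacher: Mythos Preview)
Your proposal is correct and follows essentially the same line as the paper's proof: both obtain the lower bound from the tensor-rank bound (Theorem~\ref{prop:trkbound}) using $\dim_{\Fq}=m(n-1)$ and $d=2$, and both obtain the upper bound by applying Corollary~\ref{cor:basenm} with $s=m-1$ to the containing space $\langle Y_n,Y_nM,\ldots,Y_nM^{m-2}\rangle^\perp$ and noting that its $(mn-m+1)$-base is automatically a perfect base for the smaller dual code. Your explicit remark on the compatibility between the $\Fqm$-linear dual of the vector code and the trace dual of its $\Gamma$-image is a worthwhile clarification that the paper leaves implicit.
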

\begin{proof}
    Let $\Gamma:=\{1,\alpha,\ldots,\alpha^{m-1}\}$ be an $\Fq$-basis of $\Fqm$ where $\alpha$ is a primitive element of $\Fqm$ over $\Fq$. For ease of notation, write $\C:=\Gamma\left(\pi_U\left(\G_{1,1}(0)\right)\right)^\perp$ for the remainder of the proof. Let $\mS\subseteq\Fq^\times$ be a set of $m-1$ distinct elements and let $M$ be the companion matrix of the characteristic polynomial of $\alpha$. Corollary \ref{cor:basenm} implies that the set 
    \begin{equation*}
		\overline{\A}:=\{Y_nJ^i\,E_{1,j}\,\left(M^{-i}\right)^t:s+1\leq j\leq m,0\leq i\leq n-1\}\,\cup\,\{\,Y_nJ^i\,\E(\gamma)\,\left(M^{-i}\right)^t:0\leq i\leq n-2,\gamma\in\mS\}.
	\end{equation*}
	is an $(nm-m+1)$-base for $\<Y_n,Y_nM,\ldots,Y_nM^{m-2}\>^\perp$. Observe that, 
	\begin{equation*}
	    \<Y_n,Y_nM,\ldots,Y_nM^{m-2}\>\leq \<Y_n,Y_nM,\ldots,Y_nM^{m-1}\>=\C^\perp
	\end{equation*}
	which implies that $\C\leq\<Y_n,Y_nM,\ldots,Y_nM^{m-2}\>^\perp$ and therefore $\A$ is also a perfect base for $\C^\perp$. By the tensor rank bound, we have
	\begin{equation*}
	    m(n-1)+2-1=\dimq\left(\C\right)+d\left(\C\right)-1\leq \trk\left(\C\right)\leq nm-m+1,
	\end{equation*}
	which implies that $\trk(\C)=mn-m+1$ and therefore $\C$ is MTR. The statement follows from the fact that the tensor rank is invariant under equivalence.
\end{proof}

Denote by $\maxtrk(n,m,k)$ the maximum tensor-rank for $X\in\Kknm$. It is well-known that this function is symmetric in $m,n,k$. See, for example, \cite{atkinson1979maximal,dobkinphd,brockett1978optimal} for further details on this function.

\begin{proposition}[{{\cite[Theorem~2]{atkinson1979maximal}}}]
\label{prop:maxtrk}
	Let $k\leq n$. We have $$\maxtrk(n,m,mn-k)=mn-k^2+\maxtrk(k,k,k^2-k).$$
\end{proposition}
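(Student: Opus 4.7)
The plan is to establish both inequalities $\leq$ and $\geq$ separately; together they yield the claimed equality.

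For the lower bound $\maxtrk(n,m,mn-k)\geq mn-k^2+\maxtrk(k,k,k^2-k)$, I would construct an explicit witness. Fix a subspace $W\subseteq \K^{k\times k}$ of dimension $k^2-k$ whose tensor rank equals $t:=\maxtrk(k,k,k^2-k)$. Embed $W$ into $\K^{n\times m}$ via the block map $\iota:A\mapsto \bigl(\begin{smallmatrix} A & 0\\ 0 & 0\end{smallmatrix}\bigr)$ supported on the top-left $k\times k$ corner, and let $U:=\<E_{ij}:(i,j)\notin[k]\times[k]\>$, a perfect subspace of dimension $mn-k^2$. Set $V:=\iota(W)+U$. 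Since the two supports are disjoint, $\dim V=mn-k$, and the concatenation of the standard basis of $U$ with a perfect base of $\iota(W)$ already gives $\trk(V)\leq mn-k^2+t$. The matching lower bound relies on a substitution argument: for any perfect base $\A=\{u_rv_r^T\}$ of $V$, splitting each $u_r,v_r$ according to the row partition $k+(n-k)$ and column partition $k+(m-k)$, the top-left blocks $u_r^{(1)}(v_r^{(1)})^T$ yield a spanning set of rank-at-most-one matrices for $W$, so at least $t$ of them must be nonzero by definition of $t$; the remaining members of $\A$ together with suitable corrections must still generate the $(mn-k^2)$-dimensional subspace $U$, and an independence argument pushes the total cardinality to at least $mn-k^2+t$.

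For the upper bound $\maxtrk(n,m,mn-k)\leq mn-k^2+\maxtrk(k,k,k^2-k)$, I would take any $V\subseteq\K^{n\times m}$ of dimension $mn-k$ and exhibit a perfect space of the desired dimension containing it. Passing to $V^\perp$ (of dimension $k$), the goal becomes to embed $V^\perp$ in a $k^2$-dimensional subspace $Q$ whose orthogonal complement $Q^\perp$ is perfect of dimension $mn-k^2$. After applying a suitable equivalence $(P,Q)\in\GL_n(\K)\times\GL_m(\K)$, which preserves tensor rank by Remark \ref{rem:PNP-1}, $V^\perp$ reduces to a subspace compatible with such a splitting. From $Q^\perp\subseteq V$, one writes $V=Q^\perp\oplus V_0$ with $\dim V_0=k^2-k$, where $V_0$ is supported in the top-left $k\times k$ block; its tensor rank, as a subspace of $\K^{k\times k}$, is bounded by $\maxtrk(k,k,k^2-k)$. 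Combining the standard perfect base of $Q^\perp$ with a perfect base of $V_0$ yields the bound.

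The main obstacle is the equivalence reduction in the upper bound: a single $k$-dimensional subspace of $\K^{n\times m}$ cannot in general be placed into a $k\times k$ block (even for $k=1$, a subspace spanned by a rank-two matrix cannot be confined to one cell), so a direct normal-form argument is unavailable. I expect the correct route is an induction on $k$ that peels off one rank-one contribution from a carefully chosen basis of $V^\perp$ at each step while tracking which row and column indices have been consumed; the hypothesis $k\leq n\leq m$ ensures that enough unused rows and columns remain at every stage. The base cases $k=0$ and $k=1$ are immediate: both sides reduce to $mn$ and $mn-1$ respectively (using $\maxtrk(1,1,0)=0$), the latter matching the upper bound for codimension-one subspaces quoted earlier in the paper.
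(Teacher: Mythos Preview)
The paper does not prove this proposition at all: it is quoted as \cite[Theorem~2]{atkinson1979maximal} and used without argument. So there is no ``paper's proof'' to compare against; your proposal is an attempt to reconstruct the Atkinson--Lloyd result from scratch.

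On the lower bound, your idea is sound but your write-up obscures the clean reason it works. With $V=\iota(W)\oplus U$ as you set it up, let $P=\langle A_1,\dots,A_R\rangle$ be any perfect space containing $V$, and let $\pi:\K^{n\times m}\to\K^{k\times k}$ be the top-left block projection. Since $U=\ker\pi$ and $U\subseteq V\subseteq P$, we have $P\cap\ker\pi=U$, hence $\dim\pi(P)=R-(mn-k^2)$ by rank--nullity. But $\pi(P)$ is spanned by the rank-$\leq 1$ matrices $\pi(A_r)$, so $\pi(P)$ is a perfect space containing $W$, forcing $\dim\pi(P)\geq t$. Thus $R\geq mn-k^2+t$. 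Your ``substitution argument'' and ``independence argument'' are really just this one line; the phrasing you gave (counting how many $\pi(A_r)$ are nonzero) does not by itself control $R$, because nonzero projections need not be independent.

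On the upper bound, you correctly identify the essential difficulty and then do not resolve it. The reduction you want --- placing an arbitrary $k$-dimensional $V^\perp$ inside some $\GL_n\times\GL_m$-translate of the top-left $k\times k$ block --- is false as you note, and the inductive scheme you sketch (``peel off one rank-one contribution \dots\ track which row and column indices have been consumed'') is not an argument yet: a basis of $V^\perp$ need not contain any rank-one matrix, so there is nothing obvious to peel. The Atkinson--Lloyd proof does proceed by a reduction, but it hinges on their structural results about large-codimension perfect spaces (in particular that every $(mn-1)$-dimensional subspace is perfect, which is where the hypothesis $k\leq n$ enters), not on a naive normal form for $V^\perp$. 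As it stands, your upper-bound half is a plan rather than a proof.
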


\begin{remark}
\label{rem:atkmax}
	In \cite{atkinson1979maximal}, the authors stated that $\maxtrk(3,3,6)=7$ and $\maxtrk(4,4,12)=14$. Moreover they showed that $\maxtrk(2,2,2)=3$, $\maxtrk(3,3,5)\in\{6,7\}$,
	\begin{align*}
		\maxtrk(3,3,i)&=
		\begin{cases}
			i+2 \textup{ if }\;i\in\{1,2,3,4\},\\
			i+1 \textup{ if }\;i\in\{6,7\},\\
			i  	\textup{ if }\;i\in\{8,9\},\\
		\end{cases}\\
		\maxtrk(n,m,nm-i)&=
		\begin{cases}
			mn-1, \textup{ if }\;i\in\{1,2\}\;\textup{and}\;m,n\geq i,\\
			mn-2, \textup{ if }\;i\in\{3,4\}\;\textup{and}\;m,n\geq i.
		\end{cases}
	\end{align*}
\end{remark}

In Table \ref{table:G(m,a)}, we summarize tensor ranks for the Delsarte-Gabidulin codes that are equivalent to $\pi_U\left(\G_{k,1}(0)\right)$.
\newline

\renewcommand\arraystretch{1.3}
\setlength\LTleft{0pt}
\setlength\LTright{0pt}
\begin{longtable}[c]{@{\extracolsep{\fill}}*6{|c}|@{}}
		\hline
		$n$ & $m$ & $k$ & $q\geq$ & $\trk$ & \temp{Reference} \\\hline\hline
		\multirow{5}{*}{$2$}& \multirow{3}{*}{$2$} &  \multirow{2}{*}{$1$} & \multirow{2}{*}{$2$} & \multirow{2}{*}{$3$} & \temp{\cite{atkinson1979maximal} $-$ \cite{byrne2019tensor} $-$ \cite{jaja1979optimal}} \\
		& & & & & \temp{Corollary \ref{cor:n=2,3} and Remark \ref{rem:m}}\\\cline{3-6}
		& & $2$ & $2$ & $4$ & \temp{trivial}\\\cline{2-6}
		& \multirow{2}{*}{$\geq 3$} & $1$ & $m$ & $m+1$ & \temp{\cite{byrne2019tensor} $-$ Corollary \ref{cor:n=2,3} and Remark \ref{rem:m}}\\\cline{3-6}
		& & $2$ & $2$ & $2m$ & \temp{trivial}\\\hline
		\multirow{9}{*}{$3$} & \multirow{3}{*}{$3$} & \multirow{2}{*}{$1$} & \multirow{2}{*}{$4$} & \multirow{2}{*}{$5$} & \temp{\cite{atkinson1979maximal} $-$ \cite{byrne2019tensor}}\\
		& & & & & \temp{Corollary \ref{cor:n=2,3} $-$ Proposition \ref{prop:MTR1dim}}\\\cline{3-6}
		& & $2$ & $3$ & $7$ & \temp{\cite{atkinson1979maximal} $-$ Theorem \ref{thm:MTR(n-1)dim}}\\\cline{3-6}
		& & $3$ & $2$ & $9$ & \temp{trivial}\\\cline{2-6}
		& \multirow{3}{*}{$4$} & $1$ & $5$ & $6$ & \temp{\cite{byrne2019tensor} $-$ Corollary \ref{cor:n=2,3} $-$ Proposition \ref{prop:MTR1dim}}\\\cline{3-6}
		& & $2$ & $4$ & $9$ & \temp{Theorem \ref{thm:MTR(n-1)dim}}\\\cline{3-6}
		& & $3$ & $2$ & $12$ & \temp{trivial}\\\cline{2-6}
		& \multirow{3}{*}{$\geq 5$} & $1$ & $m+1$ & $m+2$ & \temp{\cite{byrne2019tensor} $-$ Corollary\ref{cor:n=2,3} $-$ Proposition \ref{prop:MTR1dim}}\\\cline{3-6}
		& & $2$ & $m$ & $2m+1$ & \temp{Theorem \ref{thm:MTR(n-1)dim}}\\\cline{3-6}
		& & $3$ & $2$ & $3m$ & \temp{trivial}\\\hline
		\multirow{8}{*}{$4$} & \multirow{4}{*}{$4$} & $1$ & $6$ & $7$ & \temp{\cite{byrne2019tensor}$-$ Proposition \ref{prop:MTR1dim}}\\\cline{3-6}
		& & $2$ & $5$ & $\leq 12$ & \temp{Corollary \ref{cor:dim2}}\\\cline{3-6}
		& & $3$ & $4$ & $13$ & \temp{Theorem \ref{thm:MTR(n-1)dim}}\\\cline{3-6}
		& & $4$ & $2$ & $16$ & \temp{trivial}\\\cline{2-6}
		& \multirow{4}{*}{$\geq 5$} & $1$ & $m+2$ & $m+3$ & \temp{\cite{byrne2019tensor}$-$ Proposition \ref{prop:MTR1dim}}\\\cline{3-6}
		& & $2$ & $m+2$ & $\leq 2m+6$ & \temp{\cite{byrne2019tensor}}\\\cline{3-6}
		& & $3$ & $m$ & $3m+1$ & \temp{Theorem \ref{thm:MTR(n-1)dim}}\\\cline{3-6}
		& & $4$ & $2$ & $4m$ & \temp{trivial}\\\hline
		\multirow{13}{*}{$\geq 5$} & \multirow{7}{*}{$n$} & $1$ & $2m-2$ & $2m-1$ & \temp{\cite{byrne2019tensor}$-$ Proposition \ref{prop:MTR1dim}}\\\cline{3-6}
		& & $2$ & $2m-3$ & $\leq 4m-4$ & \temp{Corollary \ref{cor:dim2}}\\\cline{3-6}
		& & $3$ & $2m-2$ & $\leq 6m-6$ & \temp{\cite{byrne2019tensor}}\\\cline{3-6}
		& & $\vdots$ & $\vdots$ & $\vdots$ & \temp{$\vdots$}\\\cline{3-6}
		& & $n-2$ & $2m-2$ & $\leq (m-2)(2m-1)$ & \temp{\cite{byrne2019tensor}}\\\cline{3-6}
		& & $n-1$ & $m$ & $2m^2-m+1$ & \temp{Theorem \ref{thm:MTR(n-1)dim}}\\\cline{3-6}
		& & $n$ & $2$ & $m^2$ & \temp{trivial}\\\cline{2-6}
		& \multirow{6}{*}{$>n$} & $1$ & $m+n-2$ & $m+n-1$ & \temp{\cite{byrne2019tensor}}\\\cline{3-6}
		& & $2$ & $m+n-2$ & $\leq 2(m+m-1)$ & \temp{\cite{byrne2019tensor}}\\\cline{3-6}
		& & $\vdots$ & $\vdots$ & $\vdots$ & \temp{$\vdots$}\\\cline{3-6}
		& & $n-2$ & $m+n-2$ & $\leq (n-2)(m+n-1)$ & \temp{\cite{byrne2019tensor}}\\\cline{3-6}
		& & $n-1$ & $m$ & $nm-m+1$ & \temp{Theorem \ref{thm:MTR(n-1)dim}}\\\cline{3-6}
		& & $n$ & $2$ & $nm$ & \temp{trivial}\\\hline
	\caption{\label{table:G(m,a)} Tensor rank of codes equivalent to $\pi_U\left(\G_{k,1}(0)\right)$.}
\end{longtable}

Our results establish the existence of MTR codes for some parameter sets. In particular, for all $n,m$, we have the following observations.

\begin{itemize}[leftmargin=5.5ex]
	\item Let $q\geq m+n-2$. There exists an MTR code of dimension $m$, namely $\pi_U\left(\G_{1,1}(0)\right)$.
	\item Let $q\geq m$. There exists an MTR code of dimension $(n-1)m$, namely $\pi_U\left(\G_{1,1}(0)\right)^\perp$. 
\end{itemize}

We conclude this section by deriving a result on the existence of some families of MTR codes. We will prove this result by exploiting the connection between rank-metric codes and block codes in the Hamming metric introduced by \cite{brockett1978optimal} (see also \cite[Chapter~18]{burgisser2013algebraic}). For an $R$-base $\A:=\{A_1,\ldots,A_R\}\subseteq\Fqnm$, we define, as in \cite[Section~4.1]{byrne2019tensor}, the $\Fq$-linear isomorphism
\begin{equation*}
	\psi_\A:\<\A\>\longrightarrow\Fq^R\,:\,\sum_{i=1}^R\lambda_iA_i\longmapsto\sum_{i=1}^R\lambda_ie_i,
\end{equation*}
	where $\{e_1,\ldots,e_R\}$ is the standard basis of $\Fq^R$. 
	
\begin{definition}[{{\cite[Definition~4.10]{byrne2019tensor}}}]
	Let $\C$ be a code with tensor-rank $R$ and let $\A$ be an $R$-base for $\C$. We define the linear block code $C_\A$ to be the image of $\C$ under $\psi_\A$, i.e. $C_\A:=\psi_\A(\C)$, endowed with the Hamming distance.
\end{definition}

We recall that a nonzero linear block code $D\leq\Fq^n$ is said to be \textbf{MDS} (\textbf{maximum distance separable}) if it attains the Singleton bound, i.e. if $d^\H(D)=n-\dimq(D)+1$, where $d^\H(D):=\min(\{w^\H(x):x\in D, x \neq 0\})$ and, for all $x\in D$, $w^\H(x):=|\{i: x_i \neq 0\}|$.

\begin{remark}
	We recall some well-known facts on linear block codes, see \cite{pless1998introduction} for further details. The shortened code $D$ of an $\Fq$-$[R,k,d]$ linear block code $C\in\Fq^R$ on the set $S\subseteq\{1,\ldots,R\}$ is defined as $	D:=\{(c_i)_{i\in S}:c \in C, \supp(c) \subseteq S\}$. $D$ is an $\Fq$-$[|S|,k-|S^c|,d]$ if $|S^c|< d-1$.
\end{remark}

\begin{lemma}
\label{lem:subMTR}
	Let $\C$ be an $\Fq$-$[n\times m,k,n]$ code with $k\leq n$ and tensor rank $R=k+n-1$. Let $\A:=\{A_1,\ldots,A_{R}\}$ be a $R$-base for $\C$. Let $S\subseteq [R]$. Then either
	\begin{enumerate}[label={(\arabic*)},leftmargin=5.5ex]
		\item \label{item1:lemMTR} $\C\cap\<A_s\,:\,s\in S\>=\{0\}$ and $0\leq|S|\leq n-1$, or
		\item \label{item2:lemMTR} $\C\cap\<A_s\,:\,s\in S\>$ is an $\Fq$-$[n\times m,|S|-n+1,n]$ MTR code and $n\leq |S|\leq R$.
	\end{enumerate}  
\end{lemma}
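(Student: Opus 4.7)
The plan is to transfer the question about intersections inside $\<\A\> \leq \Fqnm$ to a shortening question for the associated block code $C_\A = \psi_\A(\C) \leq \Fq^R$. Since $\A$ is an $R$-base for $\C$ with $R = k+n-1 = \dimq(\C)+d(\C)-1$, the code $\C$ meets Kruskal's bound (Theorem~\ref{prop:trkbound}), hence is MTR. From the correspondence between MTR matrix codes and MDS block codes that underlies the construction of $\psi_\A$ (see \cite{byrne2019tensor}), I would deduce that $C_\A$ is an $\Fq$-$[R, k, n]$ MDS code in the Hamming metric; in particular its minimum Hamming distance is $R-k+1=n$.

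The next step is to translate the intersection to a shortening. For any $S \subseteq [R]$, a codeword $X \in \C$ lies in $\<A_s : s \in S\>$ if and only if $\supp(\psi_\A(X)) \subseteq S$. Therefore $\psi_\A(\C \cap \<A_s : s \in S\>)$ is precisely the subspace of $C_\A$ consisting of codewords with support contained in $S$, which after deleting the (zero) coordinates indexed by $S^c$ is exactly the shortening of $C_\A$ on $S^c$.

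From here I would invoke the standard fact that shortening an $\Fq$-$[R, k, n]$ MDS code on $|S^c|$ coordinates yields either $\{0\}$ (when $|S^c| \geq k$) or an MDS $[|S|, k - |S^c|, n]$ code (when $|S^c| \leq k-1$). Using $R = k+n-1$, the first regime is exactly $|S| \leq n-1$, yielding case~\ref{item1:lemMTR}; the second is $|S| \geq n$, producing a subcode of $\Fq$-dimension $k-|S^c| = |S|-n+1$. In the latter case, the intersection $\C \cap \<A_s : s \in S\>$ has rank distance at least $n$ (inherited from $\C$) and at most $n$ (since $n \leq m$), hence equal to $n$. Finally, $\{A_s : s \in S\}$ is an independent set of $|S|$ rank-$1$ matrices whose span contains this intersection, so $\trk \leq |S|$; combined with Kruskal's bound $\trk \geq (|S|-n+1)+n-1 = |S|$, equality holds, giving case~\ref{item2:lemMTR}.

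The step I expect to be the main obstacle is justifying cleanly the MDS property of $C_\A$: one must check that $\psi_\A$ transports the MTR property of $\C$ into Singleton-equality for $C_\A$, and in particular that the minimum Hamming weight of $C_\A$ equals $n$. Once this MTR $\leftrightarrow$ MDS dictionary is in place, the remainder reduces to a routine shortening and dimension count.
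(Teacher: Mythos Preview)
Your proposal is correct and follows essentially the same route as the paper: both arguments pass to the block code $C_\A$, use that $C_\A$ is $\Fq$-$[R,k,n]$ MDS, and then analyze the shortening on $S^c$. The only minor differences are that the paper handles the case $|S|\leq n-1$ by a direct rank-subadditivity argument (a nonzero $N\in\C\cap\<A_s:s\in S\>$ would satisfy $n\le\rk(N)\le |S|\le n-1$) rather than via the MDS shortening dichotomy, and that the MTR conclusion in case~\ref{item2:lemMTR} is stated in the paper as an immediate consequence of the MDS property of the shortened code together with the inverse map $\psi_{\{A_s:s\in S\}}^{-1}$, whereas you spell out the Kruskal-bound computation explicitly; neither difference is substantive.

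Regarding the step you flagged as the main obstacle: the MDS property of $C_\A$ is in fact routine and is treated as known in the paper (with reference to \cite{byrne2019tensor}). The point is simply that for any nonzero $X=\sum_i\lambda_iA_i\in\C$ one has $n\le\rk(X)\le |\{i:\lambda_i\neq 0\}|=w^\H(\psi_\A(X))$, so $d^\H(C_\A)\ge n$, while the Singleton bound gives $d^\H(C_\A)\le R-k+1=n$.
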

\begin{proof}
	If $|S|\in\{0,R\}$ the result is trivial. Let $1\leq|S|\leq n-1$ and suppose towards a contradiction that there exists a nonzero matrix $N\in\C\cap\<A_s\,:\,s\in S\>$. There exist $\lambda_1,\ldots,\lambda_{|S|}\in\Fq$ such that 
	\begin{equation*}
		n\leq \rk(N)=\rk\left(\sum_{i=1}^{|S|}\lambda_iA_i\right)\leq \sum_{i=1}^{|S|}\lambda_i\;\rk(A_i)\leq n-1,
	\end{equation*}
	which leads to a contradiction. This proves \ref{item1:lemMTR}. We now assume that $n\leq |S|\leq R-1$ in the remainder. 
	Let $D\leq \Fq^{R}$ be the code obtained by shortening $C_\A$ on $S$. Observe that $C_\A$ is an $\Fq$-$[R,k,n]$ MDS linear block code, and therefore $D$ is an $\Fq$-$[|S|,|S|-n+1,n]$ MDS linear block code, since $R-|S|\leq k-1<n$. It follows that 
	\begin{equation*}
		\D:=\psi_{\{A_s\,:\,s\in S\}}^{-1}(D)
	\end{equation*}
	is an $\Fq$-$[n\times m,|S|-n+1,n]$ MTR code. Then $\D=\C\cap\<A_s\,:\,s\in S\>$, since $\D$ contains all the elements of $\C$ that can be written as linear combination of the elements of $\{A_s\,:\,s\in S\}$. This implies \ref{item2:lemMTR} and concludes the proof.
\end{proof}

In the following result we use Lemma \ref{lem:subMTR} to show  that an MTR code exists for some values of $m,n,k,q$.

\begin{theorem}
	Let $m,n,k,d\in\Z_{>0}$ and $q$ power prime be such that $1\leq k\leq m$, $d\leq n$ and $q\geq m+d-2$. There exists an $\Fq$-$[n\times m,k,d]$ MTR code.  
\end{theorem}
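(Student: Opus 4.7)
The plan is to construct an MTR code with the target dimension $k$ and distance $d$ inside a smaller ambient space $\Fq^{d\times m_0}$ (where $m_0:=\max(k,d)$), and then embed it into $\Fq^{n\times m}$ by adjoining $n-d$ zero rows and $m-m_0$ zero columns. Zero padding preserves the rank of each matrix, the $\Fq$-linear independence of a collection of rank-one matrices, and the spanning property of a set of such matrices; consequently, a perfect base of size $k+d-1$ for the small code transports to a perfect base of the same size for the padded code. Together with Theorem \ref{prop:trkbound}, this will force the tensor rank of the embedded code to equal exactly $k+d-1$, yielding an MTR code with the requested parameters.

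For the small base code I would take the one-dimensional Delsarte--Gabidulin code $\pi_U(\G_{1,1}(0))$, where $U$ is a $d$-dimensional $\Fq$-subspace of $\F_{q^{m_0}}$ (for instance $U=\<1,\alpha,\ldots,\alpha^{d-1}\>_{\Fq}$ for a primitive element $\alpha$ of $\F_{q^{m_0}}$; the requirement $d\leq m_0$ is built into the choice $m_0=\max(k,d)$). Passing through $\Gamma$ produces an $\Fq$-$[d\times m_0,m_0,d]$ MRD code. Proposition \ref{prop:MTR1dim} identifies this code as MTR with tensor rank $m_0+d-1$, provided $q\geq m_0+d-2$; this is guaranteed by our hypothesis $q\geq m+d-2$ since $m_0\leq m$.

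The dimension is then adjusted by a case split. If $k\geq d$ then $m_0=k$ and the base code already has the desired parameters $[d\times k,k,d]$. If $k<d$ then $m_0=d$, and the base is $\Fq$-$[d\times d,d,d]$ MTR with tensor rank $2d-1$; here the hypothesis ``$k\leq n$'' of Lemma \ref{lem:subMTR} becomes $d\leq d$ and so is satisfied, and choosing any $S\subseteq[2d-1]$ with $|S|=k+d-1$ (which lies in the admissible interval $[d,2d-1]$ because $1\leq k\leq d$) yields, by that lemma, an $\Fq$-$[d\times d,k,d]$ MTR subcode. The step I expected to be the main obstacle is precisely this dichotomy: Lemma \ref{lem:subMTR} cannot be applied with a single uniform base covering every pair $(k,d)$, because its requirement that the $\Fq$-dimension of the base not exceed its number of rows fails as soon as $k>d$. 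The resolution, once one notices the issue, is clean because when $k\geq d$ no sub-selection is needed at all, and the final zero-padded embedding is a routine verification.
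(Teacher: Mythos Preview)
Your proof is correct and follows the same overall strategy as the paper: start from the one-dimensional Delsarte--Gabidulin code, which is MTR by Proposition~\ref{prop:MTR1dim}, pass to an MTR subcode of dimension $k$ via Lemma~\ref{lem:subMTR}, and then embed into the larger ambient by padding. The difference is in the choice of the intermediate ambient. The paper works directly in $\Fq^{d\times m}$ and applies Lemma~\ref{lem:subMTR} to the $[d\times m,m,d]$ code; you instead shrink the second dimension to $m_0=\max(k,d)$, split into the two cases $k\geq d$ and $k<d$, and only call Lemma~\ref{lem:subMTR} in the latter, where its stated hypothesis ``dimension $\leq$ number of rows'' becomes $d\leq d$ and is genuinely met. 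This is a real gain in rigor: as written, Lemma~\ref{lem:subMTR} carries the hypothesis $k\leq n$, and the paper's invocation would require $m\leq d$, which is not guaranteed by the theorem's assumptions (indeed $d\leq n\leq m$). Your case split sidesteps this entirely, at the mild cost of having to pad in both rows and columns at the end rather than just rows; your observation that zero padding preserves ranks, linear independence, and the containment of the slice space in the span of the perfect base is exactly what is needed to carry the MTR property through the embedding.
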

\begin{proof}
	Let $\Gamma:=\{1,\alpha,\ldots,\alpha^{m-1}\}$ be an $\Fq$-basis of $\Fqm$ where $\alpha$ is a primitive element of $\Fqm$ over $\Fq$. Let $\C:=\Gamma\left(\pi_{\<1,\alpha,\ldots,\alpha^{d-1}\>}\left(\G_{1,1}(0)\right)\right)$ is an $\Fq$-$[d\times m,m,d]$ MTR code by Proposition \ref{prop:MTR1dim}. By Lemma \ref{lem:subMTR} there exists an $\Fq$-$[d\times m,k,d]$ MTR subcode $\D$ of $\C$. Using the idea as in the proof of Lemma \ref{lem:lindep}, we can extend $\D$ to an $\Fq$-$[n\times m,k,d]$ code $\overline\D$ by adding $n-d$ rows to the matrices of $\D$ without increasing their ranks. The modified code $\overline\D$ retains the MTR property, which concludes the proof.
\end{proof}

\section{Acknowledgements}
  The authors are very grateful to John Sheekey for useful discussions on this topic.

\bibliographystyle{amsplain} 
\bibliography{trbiblio.bib}
\end{document}